\theoremstyle{plain}
\newtheorem{theorem}{Theorem}
\newtheorem{lemma}{Lemma} 
\newtheorem{claim}{Claim} 
\newtheorem{proposition}{Proposition} 
\theoremstyle{definition}
\newtheorem{definition}{Definition}
\newcommand{\cqed}{\renewcommand{\qedsymbol}{$\lrcorner$}\qed}
\newcommand{\pname}{\textsc}
\newcommand{\polyn}{n^{\Oh(1)}}
\newenvironment{claimproof}{\medskip\noindent \emph{Proof of Claim~\theclaim.}  }{\hfill\cqed\medskip}
\newlength{\RoundedBoxWidth}
\newsavebox{\GrayRoundedBox}
\newenvironment{GrayBox}[1]%
{\setlength{\RoundedBoxWidth}{.93\textwidth}
	\def\boxheading{#1}
	\begin{lrbox}{\GrayRoundedBox}
		\begin{minipage}{\RoundedBoxWidth}}%
		{   \end{minipage}
	\end{lrbox}
	\begin{center}
		\begin{tikzpicture}%
			\node(Text)[draw=black!20,fill=white,rounded corners,%
			inner sep=2ex,text width=\RoundedBoxWidth]%
			{\usebox{\GrayRoundedBox}};
			\coordinate(x) at (current bounding box.north west);
			\node [draw=white,rectangle,inner sep=3pt,anchor=north west,fill=white] 
			at ($(x)+(6pt,.75em)$) {\boxheading};
		\end{tikzpicture}
\end{center}}     
\newenvironment{defproblemx}[2][]{\noindent\ignorespaces%
	\FrameSep=6pt%
	\parindent=0pt%
	\vspace*{-1.5em}
	\ifthenelse{\isempty{#1}}{%
		\begin{GrayBox}{#2}%
		}{%
			\begin{GrayBox}{#2 parameterized by~{#1}}%
			}
			\begin{tabular*}{\textwidth}{@{\hspace{.1em}} >{\itshape} p{1.8cm} p{0.8\textwidth} @{}}%
			}{
			\end{tabular*}%
		\end{GrayBox}%
		\ignorespacesafterend
	}
	\newcommand{\defproblema}[3]{
		\begin{defproblemx}{#1}
			Input:  & #2 \\
			Task: & #3
		\end{defproblemx}
	}%
	\DeclareMathOperator{\ndef}{\nu_{\it k}}
	\newcommand{\Oh}{\mathcal{O}}
	\newcommand{\probSTIAG}{\pname{Tree Containment  Above Minimum Degree}\xspace}
	\newcommand{\probSTI}{\pname{Tree Containment}\xspace}
	\newcommand{\probHitIso}{\pname{Annotated Hitting Subtree Containment}\xspace}
	\newcommand{\separable}[1]{size-#1-separable}
	\DeclareMathOperator{\Ima}{Im}
	\DeclareMathOperator{\ld}{ld}
	\DeclareMathOperator{\diam}{diam}
	\DeclareMathOperator{\operatorClassNP}{{\sf NP}}
	\newcommand{\classNP}{\ensuremath{\operatorClassNP}}
	\DeclareMathOperator{\operatorClassFPT}{{\sf FPT}\xspace}
	\newcommand{\classFPT}{\ensuremath{\operatorClassFPT}\xspace}
	\DeclareMathOperator{\operatorClassXP}{{\sf X}P\xspace}
	\newcommand{\classXP}{\ensuremath{\operatorClassXP}\xspace}
\title{\textsc{Tree Containment Above Minimum Degree is FPT}\thanks{The research leading to these results has been
supported by the Research Council of Norway via the project BWCA (grant no. 314528) and DFG Research Group ADYN via grant DFG 411362735.}}
	\author{
		Fedor V. Fomin\thanks{
			Department of Informatics, University of Bergen, Norway.}\\fedor.fomin@uib.no
		\and
		Petr A. Golovach\addtocounter{footnote}{-1}\footnotemark{}\\petr.golovach@ii.uib.no
		\and
		Danil Sagunov\thanks{
			St.\ Petersburg Department of V.A.\ Steklov Institute of Mathematics, Russia.
		}\\danilka.pro@gmail.com
		\and 
		Kirill Simonov\thanks{Hasso Plattner Institute, University of Potsdam, Germany.}\\kirillsimonov@gmail.com
	}
	\date{}
\begin{document}
		
		\maketitle	
 \begin{abstract}
According to the classic  Chv{\'{a}}tal's Lemma from 1977, a graph of minimum degree $\delta(G)$ contains every tree on $\delta(G)+1$ vertices. 
 Our main result is the following algorithmic ``extension'' of Chv{\'{a}}tal's Lemma: For any $n$-vertex graph $G$, integer $k$, and a tree $T$ on at most $\delta(G)+k$ vertices, deciding whether $G$ contains a subgraph isomorphic to $T$,   can be done in time $f(k)\cdot n^{\Oh(1)}$ for some function $f$ of $k$ only.

The proof of our main result is based on an interplay between extremal graph theory and parameterized algorithms. 
 \end{abstract}		

\newpage 
\tableofcontents

\newpage

 \section{Introduction}
In the  \probSTI problem we are given an $n$-vertex graph $G$ and a tree $T$. The task is to identify whether $G$ has a subgraph isomorphic to $T$.\footnote{Let us remark that in computational biology the name tree containment is used for a different problem of deciding   whether a phylogenetic network displays a phylogenetic tree over the same set of labeled leaves.} For the very special case of $T$ being an $n$-vertex path, solving \probSTI is equivalent to deciding whether $G$ contains a Hamiltonian path and thus is \classNP-complete. Our work on \probSTI is strongly motivated by the recent advances in algorithmic ``extensions'' of the classic theorems of extremal combinatorics 
\cite{FominGSS22,FominGSS22esa,HanK20}. 

For example, the classic theorem of Dirac states that every 2-connected graph contains a cycle (and thus a path) of length at least $\min\{2\delta(G),n\}$, where $\delta(G)$ is the minimum degree of $G$.  In~\cite{FominGSS22}, we gave an \classFPT algorithm for parameterization ``above Dirac's bound''---an algorithm that for any $k\geq 1$,  decides whether a connected $G$ contains a path of length at least $2\delta(G)+k$ in time $f(k)\cdot n^{\Oh(1)}$ for some function   $f$ of $k$ only.  

The question of how to impose conditions on vertex degrees of the host graph $G$ to guarantee that it contains a certain tree $T$ as a subgraph is a fundamental question in extremal graph theory. However, compared with path and cycle containments, tree containment is much more challenging. For example, the theorem of Erd{\H{o}}s and Gallai~\cite{ErdosG59} from 1959 asserts that every graph of average degree $>d$ contains a cycle with at least $d+1$ vertices. Similarly,  Erd\H{o}s and Sós \cite{Erdos64} conjectured in 1963
that every graph with average degree $>d$ contains any tree on $d+1$ vertices. This conjecture remains open. 

The starting point of our algorithmic study of \probSTI is the following cute result first published by 
Chv{\'{a}}tal.
\begin{lemma}[Chv{\'{a}}tal's Lemma \cite{Chvatal77}]\label{chvatal-theorem}
	If $G$ is a graph of minimum degree $\delta(G)$, then $G$ contains every tree on $\delta(G)+1$ vertices.
\end{lemma}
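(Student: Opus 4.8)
The plan is to prove the statement by induction on the number of vertices of the tree $T$, embedding $T$ into $G$ greedily one vertex at a time while respecting a chosen root and ordering. Fix a graph $G$ with $\delta(G) = \delta$ and a tree $T$ on $\delta + 1$ vertices; root $T$ at an arbitrary leaf and order its vertices $t_1, t_2, \dots, t_{\delta+1}$ so that every $t_i$ with $i \geq 2$ has exactly one neighbour among $\{t_1, \dots, t_{i-1}\}$ (a BFS/DFS preorder of the rooted tree works). I will construct an injective homomorphism $\varphi$ from the induced subforests $T[\{t_1, \dots, t_i\}]$ into $G$, extending it step by step, and argue that the greedy extension never gets stuck.

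The base case is trivial: map $t_1$ to any vertex of $G$. For the inductive step, suppose $\varphi$ has been defined on $\{t_1, \dots, t_{i}\}$ for some $i \le \delta$, and let $t_j$ (with $j \le i$) be the unique already-embedded neighbour of $t_{i+1}$ in $T$. The vertex $\varphi(t_j) \in V(G)$ has at least $\delta$ neighbours in $G$; at most $i - 1 \le \delta - 1$ of them can already be used as images $\varphi(t_1), \dots, \varphi(t_i)$ (note $\varphi(t_j)$ itself is used but is not its own neighbour, so the count of \emph{used neighbours} is at most $i-1$). Hence at least one neighbour of $\varphi(t_j)$ in $G$ is still free, and we set $\varphi(t_{i+1})$ to be such a vertex. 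This keeps $\varphi$ injective and a homomorphism on $T[\{t_1,\dots,t_{i+1}\}]$, since $t_{i+1}$'s only backward neighbour is $t_j$ and we placed its image adjacent to $\varphi(t_j)$. Iterating until $i+1 = \delta+1$ yields the desired subgraph isomorphic to $T$.

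The only subtlety — and the step to get right — is the counting in the inductive step: one must check that the number of images already occupying $N_G(\varphi(t_j))$ is strictly less than $\deg_G(\varphi(t_j)) \geq \delta$. Since at step $i+1$ we have embedded exactly $i \le \delta$ vertices, and the image $\varphi(t_j)$ is not a neighbour of itself, at most $i - 1 \le \delta - 1 < \delta$ of the neighbours of $\varphi(t_j)$ are taken, so a free neighbour always exists. No $2$-connectivity, minimum-degree-plus-one global bound, or extremal machinery is needed — the bound $|V(T)| = \delta(G)+1$ is exactly what makes the naive greedy embedding go through.
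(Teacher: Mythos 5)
Your proof is correct, and it is essentially the same argument the paper gives: the paper proves the slightly more general Proposition~\ref{prop:chvatal-generalized} (extending a fixed partial embedding of a connected subtree) by growing the tree one leaf at a time in a fixed elimination order, with exactly your counting step (at most $\delta(G)-1$ neighbours of the parent's image are occupied because the parent's image itself lies in the partial image but is not its own neighbour). The only difference is that the paper phrases the extension starting from an arbitrary connected subtree $T'$ rather than a single root, a generalization they need downstream.
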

From the combinatorial point of view the result of \Cref{chvatal-theorem} is tight: a $\delta$-regular graph does not contain a star of degree $\delta(G)+1$. The proof of \Cref{chvatal-theorem} is constructive and it yields a polynomial time algorithm computing a subtree in $G$ isomorphic to a tree $T$ on $\delta(G)+1$ vertices. Whether \Cref{chvatal-theorem} is tight from the algorithmic point of view, that is, whether it is possible to decide in polynomial time if a tree on $\delta(G)+k$ vertices, for some fixed constant $k>1$, is in $G$, was open prior to our work. Our main result is the following ``algorithmic extension'' of  Chv{\'{a}}tal's Lemma.

\begin{restatable}{theorem}{maintheorem}
	\label{thm:treecontainment-fpt}
	For any $n$-vertex graph $G$, integer $k$, and a tree $T$ on at most $\delta(G)+k$ vertices, there is a randomized algorithm deciding with probability at least $\frac{1}{2}$ whether $G$ contains a subgraph isomorphic to $T$ in time $2^{k^{\Oh(1)}}\cdot\polyn$.
	The algorithm is with one-sided error and reports no false-positives.
	
\end{restatable}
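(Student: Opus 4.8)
The plan is to follow the paradigm of parameterization above a combinatorial guarantee: if $|V(T)| \le \delta(G)+1$ we simply invoke Chv\'atal's Lemma (\Cref{chvatal-theorem}) and output "yes", so the whole difficulty concentrates in the regime $\delta(G)+2 \le |V(T)| \le \delta(G)+k$. The first structural step is to run the constructive embedding procedure behind Chv\'atal's Lemma greedily and see where it gets "stuck": intuitively, either we embed all of $T$ (and we are done), or we build a partial embedding together with an obstruction — a small set of vertices of $G$ through which every extension must pass — whose size is bounded by a function of $k$. This is where I expect the bulk of the work and the main obstacle to lie: turning the failure of the greedy process into a genuine win–win, namely either a near-complete embedding that can be finished by a bounded-size correction, or a \emph{structural decomposition} of $G$ (an analogue of the \bananadec{}/\bananas{} machinery the paper's macros hint at) into a bounded number of "dense" pieces glued along a small separator, where the tree-embedding question becomes local.

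Once such a decomposition is in hand, the second step is to reduce the problem to a bounded number of independent subproblems of the following shape: we are given a graph $H$ that is "almost" of high minimum degree except on an annotated boundary set $S$ with $|S| = k^{\Oh(1)}$, a subtree $T'$ of $T$, and a partial assignment of (a few) vertices of $T'$ to $S$; decide whether $T'$ embeds respecting the assignment. Here I would use a combination of (i) Chv\'atal-type arguments inside each dense piece to embed the "bulk" of $T'$ once the boundary behaviour is fixed, and (ii) a branching / guessing step over the $k^{\Oh(1)}$-sized interface to enumerate the possible ways the pieces of $T'$ straddle the separators. This is precisely the kind of annotated containment problem named \probHitIso{} in the preamble macros, suggesting the intended route goes through such an annotated subroutine.

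The third step handles the genuinely "parameterized" core that survives after the decomposition: embedding a tree of size $O(k)$ (the part of $T$ that does not fit into any single high-degree piece, or that must route through separators) into a host of size $n$. For this I would invoke the colour-coding / representative-sets toolbox — in fact, since we only ever need to embed a subtree on $O(k)$ vertices into a structured host, a random colouring of $V(G)$ with $O(k)$ colours followed by dynamic programming over the colourful subtree (in the spirit of Alon–Yuster–Zwick) gives a randomized one-sided-error algorithm running in $2^{O(k)}\cdot\polyn$, which is absorbed into the claimed $2^{k^{\Oh(1)}}\cdot\polyn$ bound and explains the randomization and the "no false-positives" guarantee in the theorem statement.

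Putting the three steps together: the algorithm first tests the Chv\'atal regime; otherwise it computes the \bananadec{}-style decomposition, reduces to $f(k)$ annotated instances on the dense pieces (each solved by a Chv\'atal-plus-branching argument), and finishes the inter-piece routing by colour coding. Correctness is a win–win analysis — either the greedy Chv\'atal embedding succeeds, or the decomposition is valid and the subproblems faithfully capture all embeddings of $T$ — and the running time is dominated by the $2^{k^{\Oh(1)}}$ branching over interfaces and the $2^{O(k)}$ colour-coding factor. The step I expect to be hardest, and where the extremal-combinatorics input is essential, is proving that the failure of the greedy embedding forces a decomposition with a $k^{\Oh(1)}$-bounded separator: this is the analogue of the Erd\H{o}s–Gallai / Dirac structural theorems underpinning the earlier work \cite{FominGSS22}, and it is the place where "above minimum degree" as opposed to "above average degree" is exploited.
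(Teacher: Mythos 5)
Your high-level skeleton — Chv\'atal's Lemma as the base, \probHitIso{} and color coding as subroutines, a win--win analysis — overlaps with the paper, but the central structural claim you rely on is not what the paper proves, and in the hardest case your plan would not close. You posit that a failed greedy Chv\'atal embedding forces an Erd{\H o}s--Gallai/Dirac-type decomposition of $G$ into dense pieces glued along a $k^{\Oh(1)}$-sized separator, after which the tree-embedding becomes local to the pieces. The paper does \emph{not} follow that route; in fact the authors explicitly say their methods are quite different from the Dirac-decomposition machinery of \cite{FominGSS22}. Instead, the proof is an elaborate case split on joint structure of $T$ and $G$: small $\delta(G)$ (color coding), high maximum leaf-degree $\ld(T)\ge k-1$ (\Cref{thm:large_leaf_degree}), dense $G$ with $|V(G)|\le(1+\tfrac{1}{4k})\delta(G)$ (\Cref{lem:dense}), large $\diam(T)$ (\Cref{thm:large_diameter_many_vertices} via $k$-preserving paths), and a ``medium'' regime handled by \Cref{thm:medium_diameter_escape_vertices} that exploits escape vertices and separability of $T$. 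None of these is produced by ``where the greedy embedding gets stuck''; they are extremal-combinatorial \emph{yes}-certificates established independently.

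The fatal gap is in the residual algorithmic case (\Cref{thm:densest_part}): no $k^{\Oh(1)}$-escape vertices in $G$, $T$ not \separable{$k^{\Oh(1)}$}, $\ld(T)<k$, $\delta(G)$ large. Here $G$ \emph{does} have a separator of size $k^{\Oh(1)}$ (absence of escape vertices gives a small vertex cut around a minimum-degree vertex), but $T$ is precisely \emph{not} separable, so you cannot split $T$ across the pieces of any decomposition — the whole tree must essentially live in one side plus a tiny fringe passing through the separator. Your plan of ``reduce to independent subproblems on the dense pieces'' breaks exactly here. The paper's solution is orthogonal to a decomposition: they show the embedding is determined (up to \probHitIso{}-solvable freedom) by the images of a set $W$ of $k-1$ leaf-adjacent vertices of $T$ (\Cref{lemma:multi_leaf_algo}), and then use a nontrivial random sampling argument (\Cref{lemma:guess_a_mapping}) to guess those images with probability $2^{-k^{\Oh(1)}}$, exploiting that the subtrees hanging off a centroid of $T$ all have size $<k^{\Oh(1)}$ so most are interchangeable by isomorphism. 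This random guess is the actual (and irreducible) source of randomness; you attribute the randomization to color coding, which is standard and derandomizable, so even the stated ``randomized, one-sided error'' feature of the theorem is not explained by your argument.

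On what you do have right: the use of \Cref{prop:chvatal-generalized} to extend partial embeddings greedily, the role of \probHitIso{} (bounded-diameter trees, hit constraints), and color coding for the $\delta(G)=k^{\Oh(1)}$ regime all appear in the paper, and the ``win--win'' philosophy is correct in spirit. But you would need, at minimum, (i) to replace the decomposition step by a stability-style case analysis along the paper's lines (leaf-degree, diameter, density, escape vertices, separability), since the decomposition you rely on need not exist or be usable, and (ii) to add the random guessing of leaf-adjacent images to handle the non-separable case, which is the genuinely new idea the theorem requires.
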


In other words,  \probSTIAG  admits a randomized  \classFPT algorithm.  We state \Cref{thm:treecontainment-fpt} for the decision variant of the problem. However, the proof of the theorem is constructive and if it exists, the corresponding subgraph isomorphism can also be constructed in the same running time.

It is useful to compare and contrast  \Cref{thm:treecontainment-fpt} and the algorithm ``above Dirac'' from \cite{FominGSS22} that decides whether a connected graph contains a path of length at least $2\delta(G)+k$ in time $f(k)\cdot n^{\Oh(1)}$. On the one hand,  the statement of \Cref{thm:treecontainment-fpt} holds for any tree, not only paths. On the other hand, the ``combinatorial threshold'' in the ``above Dirac'' algorithm is $2\delta(G)$ and in   \Cref{thm:treecontainment-fpt} it is $\delta(G)$. While in the statement of   Chv{\'{a}}tal's Lemma the value $\delta(G)$ cannot be replaced by $(1+\varepsilon)\delta(G)$ for $\varepsilon>0$, it is not clear a priory that the threshold  $\delta(G)$ in \Cref{thm:treecontainment-fpt} cannot be increased. Our next theorem rules out this option. 

\begin{restatable}{theorem}{lowerboundtheorem}
	\label{thm:lower-bound}
	For any $\varepsilon>0$, \probSTI is \classNP-complete when restricted to instances $(G,T)$  
	with $|V(T)|\leq (1+\varepsilon)\delta(G)$.
\end{restatable}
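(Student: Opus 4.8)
The proof is a polynomial-time reduction from \textsc{Hamiltonian Path}, which is \NPC already on connected graphs; membership in \classNP is immediate since a subgraph embedding is a polynomial-size certificate. Given a connected $n$-vertex graph $H$, we must output a pair $(G,T)$ with $|V(T)|\le(1+\varepsilon)\delta(G)$ such that $G$ contains a subgraph isomorphic to $T$ iff $H$ has a Hamiltonian path. The obstruction to just taking $(G,T)=(H,P_n)$ is that $\delta(H)$ can be tiny while a Hamiltonian path already has $n$ vertices; the whole point is to \emph{inflate} the minimum degree of the host without destroying the Hamiltonicity question and while keeping the surplus $|V(T)|-\delta(G)$ below $\varepsilon\,\delta(G)$ (and above $1$, so that \Cref{chvatal-theorem} does not make the instance trivially positive).

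The plan is to amplify $H$ by a blow-up. Fix a large integer $t=t(\varepsilon,n)$ and let $H_t$ be obtained from $H$ by replacing each vertex with an independent set of size $t$ and each edge with a complete bipartite graph; set $G=H_t$ (with, if needed, one extra clique-type gadget joined to $H_t$ to tune $\delta(G)$ to exactly the right value). Then $\delta(G)=t\,\delta(H)$, and the key structural fact is that a tree $T'$ embeds into $H_t$ \emph{precisely when} there is a homomorphism $T'\to H$ under which every vertex of $H$ receives at most $t$ vertices of $T'$ — because an edge of $T'$ must land between two distinct blobs, which are mutually complete iff the corresponding vertices of $H$ are adjacent. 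Thus after the blow-up the adjacency pattern of $H$ becomes \emph{pervasive} in the host: an embedding of $T$ cannot slip into a "structureless" region, it is forced to project onto a bounded-multiplicity homomorphic image of $H$. One then chooses $T$ — essentially a long path threaded with a precisely calibrated number of pendant leaves — and the parameter $t$ so that $|V(T)|\le(1+\varepsilon)\,\delta(G)$ while a bounded-multiplicity homomorphism of $T$ onto $H$ of the allowed size exists exactly when $H$ has a Hamiltonian path: the spine of $T$ is forced to pass through each vertex of $H$ essentially once (a Hamiltonian path), and the pendant leaves absorb the leftover multiplicity inside the blobs. The easy direction is then immediate: a Hamiltonian path of $H$ tells us where to put the spine, and the leaves drop into the blobs.

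The hard direction — and the main obstacle — is to show that \emph{every} subgraph embedding of $T$ into $G$ is of this intended form, i.e.\ that the induced homomorphism onto $H$ cannot be, say, a short walk repeated many times, or one concentrated on a few vertices of $H$, or one that pushes a long edgeless chunk of $T$ onto a single vertex. This is exactly where the numerology must be engineered: the length of the spine, the leaf counts, and $t$ have to be balanced so that the only way to respect the multiplicity bound $t$ at every vertex of $H$ simultaneously is to spread the spine evenly over all of $V(H)$ in path order, and in particular so that $T$ is "unbalanced" enough that it does not embed into the blow-up of a Hamiltonian-path-free $H$. Ruling out these "cheating" embeddings — which would collapse the reduction to a polynomially decidable property of $H$ — is the technical heart of the argument; the rest is arithmetic on the sizes ensuring $|V(T)|/\delta(G)\le 1+\varepsilon$.
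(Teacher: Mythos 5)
Your observation that a tree $T'$ embeds in the blow-up $H_t$ exactly when there is a homomorphism $T'\to H$ with at most $t$ preimages per vertex is correct (blobs are independent, so adjacent vertices of $T'$ project to adjacent vertices of $H$; the converse uses that adjacent blobs are complete bipartite). The gap is in the hard direction, which you flag but do not resolve, and with this construction it cannot be resolved. The budget $|V(T)|\le(1+\varepsilon)\delta(G)=(1+\varepsilon)t\,\delta(H)$ allows a path of only about $(1+\varepsilon)t\,\delta(H)$ spine vertices, whereas a ``Hamiltonian'' walk that really spreads over all $n$ blobs would need about $tn$ of them. Since \textsc{Hamiltonian Path} is only hard on graphs with $\delta(H)$ well below $n$, the multiplicity-$t$ homomorphism of $T$ into $H$ can be supported entirely on roughly $(1+\varepsilon)\delta(H)<n$ vertices of $H$; the multiplicity bound is only an upper bound, so the walk may concentrate on a small connected subgraph and never be forced to visit every vertex of $H$. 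The pendant leaves do not repair this: each spine vertex $p_i$ mapping to $v\in V(H)$ has $t\cdot\deg_H(v)\ge t\,\delta(H)$ room for its leaves in neighboring blobs, which dwarfs any calibration of leaf counts you can afford under the $(1+\varepsilon)$ cap. In short, blow-ups of connected graphs are too homogeneous — there is no distinguished high-degree anchor to rigidify the image of $T$ — and that is what makes ``cheating'' embeddings unavoidable.

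This is exactly why the paper reduces from \textsc{$3$-Partition} rather than \textsc{Hamiltonian Path}. There, the tree $T$ is built so that it has a unique degree-$\Delta$ vertex $r$ with $\Delta=\Delta(G)$, and the host $G$ is built so that the only degree-$\Delta$ vertices are a special vertex $x$ and its true twins; this pins $\sigma(r)$ essentially uniquely and then the bijection $N_T(r)\to N_G(x)$ is forced. The blobs $L_h$ of size exactly $B+3$ (with $B$ the 3-Partition target) then realize the packing constraint: the $v_i$ must land on the $y_h^{(j)}$, and the pendant sets $R_i$ of sizes $s(a_i)$ must be squeezed disjointly into $L_h\setminus\{y_h^{(1)},y_h^{(2)},y_h^{(3)}\}$, which has exactly $B$ vertices — a true packing problem, unlike a walk constraint. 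If you want to salvage a blow-up-style argument you would at minimum need to graft onto $H_t$ some unique high-degree anchor and a bounded-capacity region that must be saturated, at which point you are essentially rebuilding the paper's gadget and \textsc{$3$-Partition}'s packing structure is the more natural source problem.
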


\paragraph{Related Work}   \probSTI plays an important role both in graph theory and in graph algorithms. 

\medskip\noindent\textsl{Extremal Graph Theory.} According to Maya Stein~\cite{Stein20}:  \emph{``One of the most intriguing open questions in the area is to determine degree conditions a graph $G$ has to satisfy in order to ensure it contains a fixed tree $T$, or more generally, all trees of a fixed size.''}  While the conjecture of Erd\H{o}s and Sós \cite{Erdos64}  about the average degree remains open, various other conditions have been suggested that might ensure the appearance of all trees or forests of some fixed size \cite{Babu05,Brandt94,MR4075918,MR1356577,Lucky13,ErdosFLS95}. We refer to the survey of Stein   \cite{Stein20} for a comprehensive overview of the area. 

Our work is also closely related to \emph{stability theorems} in extremal combinatorics. Informally, a stability theorem establishes that an ``almost nice'' extremal structure can always be obtained by slightly modifying a ``nice structure''. For example, coming back to Dirac's and Erd{\H{o}}s-Gallai theorems, there is a significant amount of literature devoted to sharper versions of these classic results \cite{MR3548292,MR3777040,MR4078813,MR4140611,MR0463030,zhu2022stability}.  The typical statement of such results is that when we weaken the condition on the minimum vertex degree or an average degree of a graph, the graph contains a long cycle (or path) unless it possesses a very specific structure.  To prove  \Cref{thm:treecontainment-fpt}, we have to establish several stability variants of  Chv{\'{a}}tal's Lemma.

\medskip\noindent\textsl{Algorithms.} \probSTI is the special case of \textsc{Subgraph Isomorphism}, where the guest graph $T$ is a tree. Matula in \cite{matula1978subtree} gave a polynomial time algorithm for \probSTI when the host graph $G$ is also a tree. 
According to 
Matou{\v{s}}ek  and Thomas  \cite{MatousekT92}, \probSTI is  \classNP-complete when all vertices of $T$  but one are of degree $\leq 3$ and $G$ is a treewidth $2$ graph and all vertices of $G$ but one are of degree $\leq 3$. The result of Matou{\v{s}}ek  and Thomas shows a sharp difference in the complexity of  \probSTI  and the \textsc{Longest Path}, which is \classFPT parameterized by the treewidth of $G$. The exhaustive study of  \textsc{Subgraph Isomorphism} by Marx and  Pilipczuk \cite{MarxP13} establishes several hardness results about \probSTI for different classes of graphs $G$ and trees $T$. There is a broad literature in graph algorithms on a related problem of finding a spanning tree in a graph with specified properties, see e.g.  \cite{papadimitriou1982complexity,furer1992approximating,goemans2006minimum}.  

The seminal work of Alon, Yuster,  and Zwick on color coding  \cite{AlonYZ95}  shows that \probSTI is \classFPT parameterized by the size of $T$. In other words, deciding whether $G$ contains a tree $T$ of size $t$ could be done in time $2^{\Oh(t)} n^{\Oh(1)}$.
Let us remark that in the setting of \Cref{thm:treecontainment-fpt}, the  color coding method provides an algorithm of running time 
$2^{\Oh(\delta(G)+k)} n^{\Oh(1)}$, which is not  \classFPT in $k$. 

Several results in the literature provide \classFPT algorithms for long paths, and cycles parameterized above some degree conditions. Our work is an extension of this line of research to more general subgraph isomorphism problems.  
Fomin, Golovach, Lokshtanov, Panolan, Saurabh, and Zehavi~\cite{FominGLPSZ20}  gave  \classFPT algorithms for computing long cycles and paths above the degeneracy of a graph. The tractability of these problems was extended by the authors in~\cite{FominGSS22esa} above the so-called Erd{\H{o}}s-Gallai bound, which is above the average vertex degree 
of a graph. In \cite{FominGSS21,FominGSS22}, we established that finding a cycle above Dirac's bound
is  \classFPT. In other words, we gave an algorithm of running time $2^{\Oh (k)} \cdot n^{\Oh (1)}$ deciding whether a 
$2$-connected graph $G$ contains a cycle of length at least $\min\{2\delta(G), n\}+k$. The ideas and methods used to prove all the above results about cycles and paths are quite different from those we use in the proof of   \Cref{thm:treecontainment-fpt}.

From a more general perspective, \Cref{thm:treecontainment-fpt} belongs to a rich subfield of  Parameterized Complexity concerning parameterization above/below specified guarantees ~\cite{AlonGKSY10,CrowstonJMPRS13,GargP16,GutinIMY12,LokshtanovNRRS14,MahajanRS09,DBLP:conf/wg/Jansen0N19}. We refer to the recent survey of Gutin and Mnich \cite{GutinMnich22} for an overview of this area.
In particular, the parameterized complexity of finding an $(s,t)$-path above the distance between vertices $s$ and $t$, the \textsc{Detour} problem, attracted significant attention recently \cite{BezakovaCDF17,FominGLPSZ21,FominGLSS022,HatzelMPS23,JacobWZ23}.


\section{Definitions and preliminaries}
For a positive integer $t$, we define $[t]=\{1,\ldots,t\}$.

We use standard graph-theoretic notation and refer to the textbook of Diestel~\cite{Diestel} for non-defined notions. We consider only finite simple undirected graphs. We use $V(G)$ and $E(G)$ to denote the sets of vertices and edges of a graph $G$, respectively; $n$ and $m$ are used to denote the number of vertices and edges  if this does not create confusion. A vertex $v$ is a \emph{non-neighbor} of $u$ if $v\neq u$ and $uv\notin E(G)$.
For a  graph $G$ and a subset $X\subseteq V(G)$ of vertices, we write $G[X]$ to denote the subgraph of $G$ induced by $X$.
We use $G-X$ to denote the  graph obtained by deleting the vertices of $X$, that is, $G-X=G[V(G)\setminus X]$; we write $G-v$ instead of $G-\{v\}$ for a single element set.
For a vertex $v$, $N_G(v)=\{u\in V(G)\mid vu\in E(G)\}$ is the \emph{open neighborhood} of $v$ and $N_G[v]=N_G(v)\cup\{v\}$ is the \emph{closed neighborhood}. For a set of vertices $X$, 
$N_G(X)=\big(\bigcup_{v\in X}N_G(v)\big)\setminus X$ and  $N_G[X]=\bigcup_{v\in X}N_G[v]$.  We use $\deg_G(v)=|N_G(v)|$ to denote the \emph{degree} of a vertex $v$; $\Delta(G)=\max_{v\in V(G)}\deg_G(v)$ is the \emph{maximum degree} of $G$ and $\delta(G)=\min_{v\in V(G)}\deg_G(v)$ is the \emph{minimum degree}. In the above notation, we may omit subscripts denoting graphs if this does not create confusion. 
We write $P=v_1-\dots -v_k$ to denote a (simple) path in a graph $G$ with $k$ vertices $v_1,\ldots,v_k$ of length $k-1$; $v_1$ and $v_k$ are the \emph{end-vertices} of $G$ and we say that $P$ is an $(v_1,v_k)$-path. 
The \emph{diameter} of $G$, denoted  $\diam(G)$, is the maximum length of a shortest $(u,v)$-path in $G$ over all $u,v\in V(G)$. Two vertices $u$ and $v$ compose a \emph{diametral} pair if the distance between them, i.e. the length of the shortest path, is $\diam(G)$.  

An \emph{isomorphism} of a graph $H$ into a graph $G$, a bijective mapping $\varphi\colon V(H)\rightarrow V(G)$ such that $uv\in E(H)$ for $u,v\in V(H)$ if and only if $\varphi(u)\varphi(v)\in E(G)$. A \emph{subgraph isomorphism} of $H$ into $G$ is an injective mapping  $\sigma\colon V(H)\rightarrow V(G)$ such that $uv\in E(H)$ for $u,v\in V(H)$ if and only if $\varphi(u)\varphi(v)\in E(G)$. In words, this means that $G$ contains $H$ as a subgraph. We use $\Ima\sigma$ to denote $\sigma(V(H))$.


Throughout our paper, we use the following specific notions. 

\begin{definition}[Maximum leaf-degree $\ld(T)$]\label{def:nd}
	The \emph{leaf-degree} of a vertex $v$ in $T$ is  the number of leaves of $T$ that are neighbors of $v$. The \emph{maximum leaf-degree} of $T$, $\ld(T)$,  is the maximum of the leaf-degrees over all vertices of $T$.
\end{definition}

\begin{definition}[Neighbor deficiency $\ndef(v)$]\label{def:ndef}
	For a graph $G$, an integer $k\ge 0$ and a vertex $v \in V(G)$ we define the \emph{neighbor deficiency} of $v$ as $$\ndef(v)=\max\{(\delta(G)+k-1)-\deg_G(v),0\}.$$ 
	If $\ndef(v)=0$, we say that $v$ is \emph{non-deficient}.
\end{definition}

\begin{definition}[$q$-escape vertex]\label{def:escape}
	For a graph $G$ and integer $q$, a vertex $v$ in $G$ is an \emph{$q$-escape vertex}, if $\deg_G(v)\ge \delta(G)+q$ or the maximum matching size between $N[v]$ and $V(G)\setminus N[v]$ in $G$ is at least $q$.
\end{definition}

\begin{definition}[\separable{$q$}]\label{def:separable}
	We say that a tree $T$ is \emph{\separable{$q$}} if there is an edge in $T$ whose removal separates $T$ into two subtrees consisting of at least $q$ vertices each.
\end{definition}

The following simple lemma about relation between the size of the tree, its diameter, and the number of leaves will be useful.

\begin{lemma}[folklore]
    Let $T$ be a non-empty tree with $|V(T)| \ge q \cdot \diam(T)$, for some integer $q$. Then $T$ has at least $q$ leaves.
    \label{lemma:leaves_diameter}
\end{lemma}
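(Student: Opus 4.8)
The plan is to deduce the lemma from the purely combinatorial inequality $|V(T)| \le \ell\cdot\diam(T)$, where $\ell$ is the number of leaves of $T$. Granting this, and since we may assume $\diam(T)\ge 1$ (for $\diam(T)=0$ the tree is a single vertex and the statement only makes sense, and is trivial, for $q\le 1$), dividing by $\diam(T)$ gives $\ell\ge |V(T)|/\diam(T)\ge q$. The case $\diam(T)=1$ is immediate, as then $T=K_2$ and $|V(T)|=2=\ell\cdot 1$, so the whole content is in the regime $d:=\diam(T)\ge 2$.

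For $d\ge 2$ I would fix a diametral path $u_0-u_1-\cdots-u_d$ and root $T$ at its near-midpoint $c:=u_{\lfloor d/2\rfloor}$. The key claim is that $c$ has eccentricity at most $d-1$: for any vertex $w$, let $u_j$ be the unique vertex of the diametral path at which the branch towards $w$ attaches; then the distances from $w$ to $c$, to $u_0$, and to $u_d$ all differ from $\mathrm{dist}_T(w,u_j)$ by additive constants depending only on $j$ and $d$, and since $\mathrm{dist}_T(w,u_0)$ and $\mathrm{dist}_T(w,u_d)$ are at most $d$, a short case analysis on the side of $u_j$ relative to $c$ yields $\mathrm{dist}_T(w,c)\le \lceil d/2\rceil\le d-1$. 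Now, rooting $T$ at $c$: every vertex lies on the path from $c$ down to some leaf of the rooted tree (descend greedily from the vertex), each such path has at most $\mathrm{ecc}(c)+1\le d$ vertices, and there are at most $\ell$ such maximal paths since every leaf of the rooted tree has degree $1$ in $T$. Hence $V(T)$ is covered by at most $\ell$ vertex sets of size at most $d$, so $|V(T)|\le \ell d$, as needed.

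The step requiring the most care is exactly this eccentricity bound $\mathrm{ecc}(c)\le d-1$: the $-1$ is essential, because the hypothesis of the lemma is stated with $\diam(T)$ rather than $\diam(T)+1$, and rooting at an arbitrary vertex only guarantees depth $\le d$, which would give the useless estimate $|V(T)|\le \ell(d+1)$. So the crux is choosing the root near the center of $T$ and checking that its eccentricity really drops to $\lceil d/2\rceil$. An alternative that avoids the eccentricity computation altogether is induction on $\ell$: if $T$ has at least three leaves, walk straight out from any leaf until reaching the first vertex $x$ of degree $\ge 3$ (one exists, for otherwise $T$ is a path), and delete the internal vertices of this pendant path together with that leaf; this decreases the number of leaves by exactly one, does not increase the diameter, and removes at most $\diam(T)$ vertices, so $|V(T)|\le (\ell-1)\diam(T)+\diam(T)=\ell\cdot\diam(T)$ by induction, with a path $T$ as the base case.
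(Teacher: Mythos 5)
Your proposal is correct, and your \emph{primary} argument is genuinely different from the paper's. You prove the sharper combinatorial inequality $|V(T)|\le \ell\cdot\diam(T)$ (for $\diam(T)\ge 1$, with $\ell$ the number of leaves) directly, by rooting at a central vertex $c$ of a diametral path and covering $V(T)$ by the $\ell$ root-to-leaf paths. The crucial quantitative point, which you correctly isolate, is that $\mathrm{ecc}(c)=\lceil \diam(T)/2\rceil\le\diam(T)-1$ for $\diam(T)\ge 2$ (verified by the attach-and-project case split you sketch), so each covering path has at most $\diam(T)$ vertices, not $\diam(T)+1$; rooting at an arbitrary vertex would lose this and only give the useless $|V(T)|\le\ell(\diam(T)+1)$. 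The paper instead argues by induction on $q$: it peels off one leaf together with its maximal pendant path of degree-2 vertices, observes that at most $\diam(T)$ vertices are removed while the diameter does not increase, and recurses. Your ``alternative'' induction-on-$\ell$ proof at the end is essentially that same pendant-path-removal argument, just with the induction variable renamed and with the inequality inverted. The trade-off between the two routes is stylistic: the eccentricity argument is a one-shot covering bound and makes the tight constant transparent, whereas the pendant-path induction avoids any computation of the center's eccentricity at the cost of an inductive setup; both hinge on essentially the same ``one unit to spare'' observation. The only caveat, which you flag and which the paper also silently assumes, is that the statement is vacuous or ill-posed when $\diam(T)=0$.
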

\begin{proof}
    We show the statement by induction on $q$.
    For $q = 1$, any non-empty tree has at least one leaf.
    Assume the statement already holds for $q$, and consider a tree $T$ with $|V(T)| \ge (q + 1) \cdot \diam(T)$, we argue that it has at least $(q + 1)$ leaves. Let $\ell$ be an arbitrary leaf of $T$, and let $P$ be a maximal path in $T$ that starts with $\ell$ and contains only degree-two vertices of $T$ as internal vertices; let $v$ be its other endpoint.
    Now consider the tree $T'$ obtained by removing the vertices of $V(P) \setminus \{v\}$ from $T$. Clearly, $|V(P) \setminus \{v\}| \le \diam(T)$, thus $|V(T')| \ge |V(T)| - \diam(T) \ge q \cdot \diam(T')$. By induction, $T'$ has at least $q$ leaves. Observe that $v$ is either the only vertex of $T'$, or not a leaf of $T'$, otherwise the path $P$ is not maximal. Therefore all leaves of $T'$ remain leaves in $T$, and $T$ has at least $q + 1$ leaves, including $\ell$.
\end{proof}

Finally, we give here the following extension of  Chv\'{a}tal's Lemma (\Cref{chvatal-theorem}).  While Chv\'{a}tal's Lemma
indeed provides the guarantee of $\delta(G)+1$ for \probSTIAG, throughout the proof of the main theorem, we often need a more general statement. 
Its proof repeats the original proof of Chv\'{a}tal, and we provide it here for completeness.

\begin{proposition}[\cite{Chvatal77}]\label{prop:chvatal-generalized}
	Let $G$ be a graph and let $T$ be a tree on at most $\delta(G)+1$ vertices.
	Let $\sigma': V(T') \to V(G)$ be a subgraph isomorphism mapping a connected subtree $T'$ of $T$ into $G$.
	Then there is a subgraph isomorphism $\sigma:V(T)\to V(G)$ mapping $T$ into $G$ such that $\sigma$ is an extension of $\sigma'$.
\end{proposition}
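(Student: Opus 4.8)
The plan is to mimic Chv\'{a}tal's original greedy argument, but starting from the partial embedding $\sigma'$ rather than from a single vertex. The key structural fact is that a subtree $T'$ of a tree $T$ can be grown to all of $T$ one leaf at a time: there is an ordering $v_1,\dots,v_r$ of $V(T)\setminus V(T')$ such that each $v_i$ has exactly one neighbor among $V(T')\cup\{v_1,\dots,v_{i-1}\}$ (obtain it by repeatedly peeling off a leaf of $T$ that is not in $T'$, or equivalently by taking a BFS/DFS order of $T$ rooted inside $T'$ restricted to the vertices outside $T'$). So it suffices to show that any subgraph isomorphism $\tau$ of a proper subtree $S\subsetneq T$ into $G$ can be extended to $\tau'$ embedding $S\cup\{v\}$, where $v$ is a leaf of some larger subtree whose unique neighbor $u$ lies in $S$; iterating this along the ordering yields $\sigma$.

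For the single extension step, let $u\in V(S)$ be the vertex with $\tau(u)=x\in V(G)$, and note that we must pick a vertex $y\in N_G(x)$ with $y\notin \Ima\tau$ to serve as $\tau'(v)$. The number of forbidden choices is $|\Ima\tau| - 1 = |V(S)| - 1$ (all image vertices except $x$ itself, which in any case is adjacent to $y$ but already used). Since $S$ is a proper subtree of $T$, we have $|V(S)| \le |V(T)| - 1 \le \delta(G)$, hence $|V(S)| - 1 \le \delta(G) - 1 < \delta(G) \le \deg_G(x)$. Therefore $N_G(x)$ contains a vertex $y$ outside $\Ima\tau$, and setting $\tau'(v) = y$ extends the embedding: the only new edge of $S\cup\{v\}$ is $uv$, which maps to $xy\in E(G)$, and no non-edge is violated because $v$ is adjacent only to $u$ in $S\cup\{v\}$.

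Putting it together: build the leaf ordering $v_1,\dots,v_r$ of $V(T)\setminus V(T')$, set $T_0 = T'$, $\sigma_0 = \sigma'$, and for $i=1,\dots,r$ let $T_i = T_{i-1}\cup\{v_i\}$ and extend $\sigma_{i-1}$ to $\sigma_i$ on $T_i$ using the step above, which is applicable since $T_{i-1}$ is a proper subtree of $T$ (as $i\le r$ and $v_i$ is still missing). Then $\sigma = \sigma_r$ is the desired subgraph isomorphism extending $\sigma'$.

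I do not expect any genuine obstacle here; the proposition is a routine strengthening of Chv\'{a}tal's Lemma, and the only thing to be careful about is the bookkeeping of the count of used image vertices versus $\deg_G(x)$ — specifically that a \emph{proper} subtree has at most $\delta(G)$ vertices, so at most $\delta(G)-1$ vertices of its image other than $x$ block the extension, leaving a free neighbor of $x$. The hypothesis that $T$ has at most $\delta(G)+1$ vertices is used exactly once, in that inequality, and the ``extension of $\sigma'$'' conclusion is immediate from never modifying previously-assigned values.
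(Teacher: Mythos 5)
Your proof is correct and proceeds essentially identically to the paper's: both peel leaves of $T$ that lie outside $T'$ down to $T'$ and then greedily re-attach them one at a time, using the count $|\Ima\tau|-1 \le \delta(G)-1 < \deg_G(x)$ to guarantee a free neighbor at each step. (Your parenthetical ``which in any case is adjacent to $y$ but already used'' is garbled --- the reason $x$ is not counted is simply that $x\notin N_G(x)$ --- but the inequality you write is correct and the argument goes through.)
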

\begin{proof}[Proof of \Cref{prop:chvatal-generalized}]
	The proof is constructive.
Let $T_0, T_1, \ldots, T_q$ be a sequence of trees such that $q=|V(T)|-|V(T')|$, $T_0=T'$, $T_q=T$,  and for each $i \in [q]$, $T_{i-1}=T_{i}-\ell_i$, where  $\ell_i$ is a leaf of $T_i$ that is not present in $T'$.
One way to construct the sequence in reverse order is to start from $T_q:=T$.
Then, to obtain $T_{i-1}$ from $T_{i}$ for each consecutive $i$ in $\{q,q-1,\ldots, 1\}$, we just delete a leaf of $T_i$ that does not belong to $T'$. Since $T'$ is a connected subtree of $T_i$, 
such a leaf always exists in $T_i$.

	We then construct a series of subgraph isomorphisms $\sigma_0, \sigma_1, \ldots, \sigma_q$, such that for each $i \in \{0,\ldots,q\}$, $\sigma_i$ is a subgraph isomorphism of $T_i$ into $G$.
	We start from $\sigma_0=\sigma'$. 
	Then consecutively for each $i\in [q]$, $\sigma_i$ is obtained by extending $\sigma_{i-1}$ on $\ell_i$.
	The leaf $\ell_i$ has only one neighbor $s_i$ in $T_i$.
	Then the image of $\ell_i$ in $\sigma_i$ should be the neighbor of $\sigma_{i-1}(s_i)$.
	Since $|\Ima\sigma_{i-1}|\le\delta(G)$, and $\sigma_{i-1}(s_i) \in \Ima\sigma_{i-1}$,  
	$\Ima \sigma_{i-1}$ contains 
	at most $\delta(G)-1$ neighbors of $\sigma_{i-1}(s_i)$ in $G$.
	Then at least one neighbor of the image of $s_i$ is not occupied by $\sigma_{i-1}$. We obtain $\sigma_i$ from $\sigma_{i-1}$ by extending mapping  $\ell_i$ to such a neighbor.
	
	The procedure produces the subgraph isomorphism $\sigma_q$ of $T_q=T$ into $G$.
\end{proof}

 
 \section{Main ideas and structure of the proof of \Cref{thm:treecontainment-fpt}}
 
 In this overview, we will provide some intuition on how several various structural cases for $G$ and $T$ guarantee that $G$ contains $T$. In several situations, we push the structural analysis to the limit. In the remaining cases, when the structural analyses (or stability theorems) cannot be pushed further, the obtained structural properties allow to design algorithms.  Such a WIN/WIN approach 
 results is a randomized \classFPT algorithm running in time $2^{k^{\Oh(1)}}\cdot\polyn$.
The order of presentation of the case analysis in the overview slightly disagrees with the order of the section in the main part of the paper.
In the overview, we aimed to make the presentation more intuitive while in the main part of the paper, we put parts using the same techniques and ideas closer to each other.
The summary of the case analysis in the main part is given in \Cref{section:mainresult}.

 The natural way to proceed above the Chv{\'{a}}tal's  $\delta(G)+1$ guarantee is to ask whether $G$ contains an arbitrary $T$ of size at most $\delta(G)+2$.
The answer to this question appears to be quite simple, yet it settles the starting point of our work.
We state it as \Cref{prop:first_step}. 
While this result is not explicitly used in the proof of the main theorem,  its proof exposes the key ideas we use to prove the main theorem.

\begin{proposition}\label{prop:first_step}
	Every connected graph $G$ contains every tree on at most $\min\{|V(G)|,\delta(G)+2\}$ vertices, unless $G$ is $\delta(G)$-regular and $T$ is isomorphic to the star graph $K_{1,\delta(G)+1}$ with $\delta(G)+1$ leaves.
\end{proposition}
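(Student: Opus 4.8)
The plan is to first dispose of two easy situations and then attack a single essential case. If $|V(T)|\le\delta(G)+1$, then $G$ contains $T$ already by Chv\'{a}tal's Lemma (\Cref{chvatal-theorem}); and if $|V(G)|=\delta(G)+1$, then the minimum-degree hypothesis forces $G=K_{\delta(G)+1}$, which contains every tree on at most $\delta(G)+1$ vertices. Hence I may assume $n:=|V(G)|\ge\delta+2$ and $|V(T)|=\delta+2$, writing $\delta:=\delta(G)$.

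Next I would split on whether $T$ is a star. The only star on $\delta+2$ vertices is $K_{1,\delta+1}$, and $G$ contains it as a subgraph if and only if $G$ has a vertex of degree at least $\delta+1$, i.e.\ if and only if $G$ is not $\delta$-regular (map the centre to such a vertex and the $\delta+1$ leaves to $\delta+1$ of its neighbours). This is exactly the exceptional case in the statement, so from now on $T$ is not a star, and the goal is to show $T\subseteq G$ with no exception.

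Assume in addition that $T$ is not a path. Since $T$ is not a star, no vertex of $T$ is adjacent to all leaves of $T$ (otherwise the internal vertices of $T$ would form a one-vertex subtree), so I can fix two leaves $\ell_1\ne\ell_2$ of $T$ with distinct $T$-neighbours $s_1\ne s_2$. Let $T^\star:=T-\ell_1-\ell_2$, a tree on $\delta$ vertices; since $T$ is not a path, $T^\star$ has a leaf $z\notin\{s_1,s_2\}$. Using \Cref{chvatal-theorem}, fix a subgraph isomorphism $\sigma$ of $T^\star$ into $G$ with image $Y$, so $|Y|=\delta$, and put $v_i:=\sigma(s_i)$. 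Since $|N_G(v_i)\setminus Y|\ge\deg_G(v_i)-(|Y|-1)\ge1$, if there exist $a\in N_G(v_1)\setminus Y$ and $b\in N_G(v_2)\setminus Y$ with $a\ne b$, then $\ell_1\mapsto a$, $\ell_2\mapsto b$ extends $\sigma$ to a copy of $T$ in $G$. The only obstruction is $N_G(v_1)\setminus Y=N_G(v_2)\setminus Y=\{a\}$ for a single common vertex $a\notin Y$, and then the same count forces $\deg_G(v_i)=\delta$ and $N_G(v_i)=(Y\setminus\{v_i\})\cup\{a\}$ for $i\in\{1,2\}$.

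Eliminating this rigid configuration is the heart of the argument and the step I expect to be hardest. I would exploit two kinds of slack. First, $G$ is connected with $n\ge|Y|+2$, so some edge leaves $Y\cup\{a\}$; as every neighbour of $v_1$ and of $v_2$ lies inside $Y\cup\{a\}$, that edge must be incident to $a$ or to a vertex of $Y\setminus\{v_1,v_2\}$. Second, Chv\'{a}tal's construction is flexible below the last step: for the leaf $z$ with $T^\star$-neighbour $p$, restricting $\sigma$ to $T^\star-z$ and re-extending lets the image of $z$ be sent to any of the at least $\deg_G(\sigma(p))-(\delta-2)\ge2$ free neighbours of $\sigma(p)$. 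Tracking how such a move changes the neighbourhoods of $v_1$ and $v_2$ relative to the new image, one produces an embedding of $T^\star$ that is no longer saturated around both $v_1$ and $v_2$, and then extends it to $\ell_1,\ell_2$ as above. Two families of corner cases need separate handling. When $T$ is a path (so $T\cong P_{\delta+2}$), I would instead invoke the classical fact that a longest path of a connected graph with $n\ge\delta+2$ and minimum degree $\delta$ has at least $\delta+2$ vertices, which gives a copy of $T$ directly. And when $G$ is $\delta$-regular the degree slack above disappears, so one is forced to argue purely from connectivity together with the observation that the rigid pattern would then have to reappear in \emph{every} embedding of $T^\star$ into $G$ --- which is where I expect the most delicate case analysis, and where the proof most closely previews the structural/WIN-WIN dichotomies used for the main theorem.
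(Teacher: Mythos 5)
Your setup matches the paper's: both reduce to $|V(T)|=\delta(G)+2$, identify the star $K_{1,\delta(G)+1}$ versus $\delta(G)$-regular $G$ as the unique obstruction, and dispatch the case where $G$ has a vertex of degree $\geq\delta(G)+1$. The divergence comes afterwards, and there your argument has a genuine gap which you yourself flag. You remove \emph{two} leaves with distinct parents, embed $T^\star$ on $\delta(G)$ vertices by Chv\'atal, and then need both parent-images $v_1,v_2$ to have disjoint free neighbours. The obstruction where $N_G(v_1)\setminus Y=N_G(v_2)\setminus Y=\{a\}$ is not disposed of: the ``re-run Chv\'atal with $z$ sent elsewhere'' manoeuvre is a hope, not an argument -- nothing you write down shows the perturbed image $Y'$ escapes the same coincidence, and in the $\delta(G)$-regular case (exactly where you say you expect ``the most delicate case analysis'') you have essentially no degree slack to lean on. Punting the path case to a longest-path theorem is correct but adds a third case split.

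The paper's proof avoids the coincidence problem entirely by removing just \emph{one} leaf $\ell$ and engineering the embedding of $T-\ell$ before invoking Chv\'atal. Concretely (this is ``Idea II'' in the overview): since $T$ is not a star, the parent $t$ of $\ell$ has a non-leaf neighbour $x$, which in turn has a neighbour $y\neq t$. One fixes an arbitrary $u\in V(G)$ and, using only connectivity and $|V(G)|>|N_G[u]|=\delta(G)+1$, maps $t,x,y$ to a path $u-v-w$ with $w\notin N_G[u]$. Extending to $T-\ell$ by \Cref{prop:chvatal-generalized} then forces the image of $T-\ell$ (which has $\delta(G)+1$ vertices) to use at least one vertex outside $N_G[u]$, so at least one neighbour of $u$ is still free for $\ell$. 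No two-parent coincidence arises, no separate path case is needed, and the $\delta(G)$-regular case is handled by the same three-line argument. The missing idea in your proposal is precisely this pre-mapping of a length-two path out of $N_G[u]$ \emph{before} invoking Chv\'atal, which is what makes the regular case go through cleanly.
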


\begin{proof}[Proof of \Cref{prop:first_step}]
	Note that the only case that has to be proved above Chv\'{a}tal's Lemma is when $|V(T)|=\delta(G)+2$.
	
	We have that $G$ is connected and has at least $\delta(G)+2$ vertices.
	Let $T$ be an arbitrary tree on exactly $\delta(G)+2$ vertices.
	Clearly, if $T$ has a vertex of degree $\delta(G)+1$ but the maximum degree of $G$ equals $\delta(G)$  then $G$ does not contain $T$ as a subgraph.
	This is equivalent to $T$ being isomorphic to $K_{1,\delta(G)+1}$ and $G$ being $\delta(G)$-regular.
	It is left to show that in any other case, $G$ contains $T$.
	
	Assume first that  $G$ has a vertex of degree at least $\delta(G)+1$.
	Denote this vertex by $u$.
	Since $T$ has at least two vertices, there is a vertex in $T$ adjacent to a leaf.
	Denote this vertex by $t$ and its adjacent leaf by $\ell$.
	We start constructing a mapping of $T$ into $G$ by mapping $t$ to $u$.
	Since $T-\ell$ is a tree consisting of exactly $\delta(G)+1$ vertices, we apply \Cref{prop:chvatal-generalized} and extend mapping $t\to u$ to a subgraph isomorphism $\sigma$ of $T-\ell$ into $G$.
	The size of $N_G[u]$ is at least $\delta(G)+2$. Hence, at least one vertex in $N_G[u]$ is not used by $\sigma$.
	We extend $\sigma$ by mapping $\ell$ to this vertex.
	Then $\sigma$ becomes a subgraph isomorphism of $T$ into $G$.
	That is, if $G$ has a vertex of degree at least $\delta(G)+1$ then $G$ contains $T$.

	The remaining case is when all vertices of $G$ are of degree $\delta(G)$ but $T$ is not isomorphic to a star.
	Again, we take a leaf $\ell$ and its neighbor vertex $t$ in $T$.
	Since $T$ is not a star, there is a neighbor of $t$ in $T$ that is not a leaf.
	Denote this neighbor by $x$. Then $x$ has at least one neighbor distinct from $t$, denote it by $y$.
	
	We have a path on three vertices $t-x-y$ in the tree $T-\ell$ and we map this path into $G$ as follows.
	Take an arbitrary vertex $u$ in $G$.
	Since $G$ is connected and $|V(G)|>\delta(G)+1=|N_G[u]|$, there is at least one edge $vw\in E(G)$ such that $v\in N_G[u]$ and $w\in V(G)\setminus N_G[u]$.
	We have that $u-v-w$ is a path in $G$ where $u$ is not adjacent to $w$.
	We initiate the construction of a subgraph isomorphism of $T$ into $G$ by mapping $t,x,y$ into $u,v,w$ respectively.
	Then, by making use of  \Cref{prop:chvatal-generalized},  we extend this mapping into a subgraph isomorphism $\sigma$ of $T-\ell$ into $G$.
	
	In contrast to the previous case, the closed neighborhood of $u$, $N_G[u]$ is of size $\delta(G)+1$.
	But we ensured that $\sigma$ uses at least one vertex outside $N_G[u]$ by mapping $y$ to $w$ initially.
	Hence, $N_G[u]\setminus\Ima\sigma $ is not empty.
	Therefore, we can  extend $\sigma$ by mapping $\ell$ to an arbitrary vertex in $N_G[u]\setminus\Ima\sigma$.
	The obtained mapping is a subgraph isomorphism of $T$ into $G$.
	The proof of the proposition is complete.
\end{proof}

Let us highlight and discuss the key ideas of the proof above.

\medskip\noindent\textbf{Idea \textrm{I}.}
\textit{Saving space for mapping leaves starts from mapping their neighbors.}

\noindent In the proof of \Cref{prop:first_step},  we cut off a leaf $\ell$ of $T$ and decide where it will be mapped later.
Since the pruned tree has exactly $\delta(G)+1$ vertices, we can map it into $G$ by \Cref{chvatal-theorem}. 
However, we have to make sure that the removed leaf can be mapped to a free vertex. To achieve that, we initially set the image of its neighbor $t$ to a specific vertex in $G$.
If $G$ has a vertex of degree more than $\delta(G)$ then any such vertex will do as an image of $t$.
This is because the isomorphism can occupy at most $\delta(G)$ of its neighbors, so one of them is always vacant to host $\ell$.
The other case occurs when $G$ is regular and hence has no vertex of a large degree.
This brings us to the next key idea.

\medskip\noindent\textbf{Idea II.}
\textit{Saving space for leaves is achieved by mapping outside specific sets.}

\noindent
When we deal with the case of a regular graph $G$, we map $t$ to an arbitrary vertex $u$ of degree $\delta(G)$.
Now a subgraph isomorphism of $T-\ell$ into $G$ can occupy all $\delta(G)$ neighbors of $u$. If this happens, then
the isomorphism uses no vertex outside $N_G[u]$ since already 
$\delta(G)+1$ vertices are occupied. 
Thus, if we force the subgraph isomorphism to use at least one vertex outside $N_G[u]$, then at least one vertex in $N_G[u]$ would be saved.
This is exactly what we do in the proof of \Cref{prop:first_step}.
We initially map a path of three vertices in $T$ starting in $t$ into a path of three vertices in $G$ starting in $u$ making 
sure that the last vertex of this path in $G$ is not a neighbor of $u$.
When this isomorphism is extended onto $T-\ell$, it automatically leaves at least one neighbor of $u$ in $G$ unused.
This neighbor finally becomes the image of $\ell$.

These two ideas bring to a polynomial time algorithm for finding trees of size $\delta(G)+2$ in a graph $G$.
It appears that we can push the applicability of these ideas further.
However, it does not come without additional effort. Let us provide some intuition on how the two ideas could be extended to the case when  $T$ is a tree on $\delta(G)+k$ vertices for $k\ge 3$.

\subsection{Saving neighbors of a single vertex}

Let $T$ be a tree on $\delta(G)+k$ vertices for $k\ge 3$.
The first question arising when we try to adapt \Cref{prop:first_step} to $T$ is the following: If there is a vertex $t$ in $T$ with at least $k-1$ adjacent leaves, can we shave off $k-1$ leaves $t$ and repeat the same arguments to map the shaved tree and save $k-1$ neighbors of the image of $t$?
This would allow us to map the shaved leaves into the saved neighbors and obtain the subgraph isomorphism of $T$ into $G$.
According to the notion of the \emph{maximum-leaf degree} (see \Cref{def:nd}), the existence of a vertex $t$ in $T$ is equivalent to the condition $\ld(T)\ge k-1$.
\Cref{section:vertex-many-leaves} of our work is devoted to this particular case.
We continue the current subsection with a discussion of this case as well.

Note that different possible images of $t$ can require different numbers of saved neighbors.
For example, if $\deg_G(u)\ge\delta(G)+k-1$, where $u$ is the image of $t$,  then saving neighbors is not required at all, as $k-1$ neighbors of $u$ remain vacant after mapping the shaved tree of size $\delta(G)+1$.
The less the degree of $u$ is, the more vertices are required to be mapped outside $N_G[u]$. We use $\ndef(u)$, the \emph{neighbor deficiency} of $u$, to denote the number of vertices to map outside $N_G[u]$ (see \Cref{def:ndef}).

In the proof of \Cref{prop:first_step}, we initially mapped a path of three vertices in order to map one vertex outside $N_G[u]$.
\Cref{lemma:2k_path}, the major auxiliary result of \Cref{section:vertex-many-leaves}, pushes the applicability of this method further.
It shows that mapping a path of length $3k$ (starting from $t$ in $T$) initially can save $\ndef(u)$ neighbors of some vertex $u$ in $G$.
This is achieved by mapping the path in $T$ into a path in $G$ starting in $u$ that has roughly every third vertex outside $N_G[u]$.
The choice of $u$ depends drastically on the structure of $G$.
The sufficient requirement for $u$ is having enough vertices in $N_G[u]$ that have more than $k-1$ neighbors outside $N_G[u]$.

If such a choice of $u$ is not possible, the regular ``map every third outside'' procedure is not possible.
In this case, we show that the minimum vertex degree of graph $G-N_G[u]$ is at least $k$.
Such a graph has a path of length at least $k$ starting in an arbitrary vertex.
Then we construct a path in $G$ that starts in $u$, goes to some vertex $v$ in $V(G)\setminus N_G[u]$ through an intermediate vertex in $N_G(u)$, and ends by following a path of length $k-2$ inside $G-N_G[u]$.
This path is of length $k$ and has its last $k-1$ vertices outside $N_G[u]$.
It is easy to see that mapping the first $k+1$ vertices of the path in $T$ into the constructed path in $G$ saves $k-1\ge\ndef(u)$ neighbors of $u$.  
This concludes the overview of \Cref{lemma:2k_path}.

However, \Cref{lemma:2k_path} is not applicable when (i) $T$ has no path of length at least $3k$ starting in $t$ \emph{or} (ii) the minimum degree of $G$ is in $\Oh(k^2)$.
In each of these cases, containment of $T$ in $G$ is not guaranteed and requires an algorithmic approach.
We handle each case separately.
The second case is easier than the first: It guarantees that the size of $T$ is bounded by $\Oh(k^2)$.
Since \probSTI can be solved in $2^{\Oh(|V(T)|)}\cdot\polyn$ running time using the color-coding technique of Alon, Yuster and Zwick \cite{AlonYZ95}, we achieve a time $2^{\Oh(k^2)}\cdot\polyn$ algorithm for the case $\delta(G)=\Oh(k^2)$.
We highlight this idea for further use in the overview.

\medskip\noindent\textbf{Idea III.} \textit{Color-coding applies for graphs of minimum degree bounded by $k^{\Oh(1)}$.}
\medskip

It remains to discuss case (i) when $T$ has no path of length at least $3k$ starting in $t$.
The algorithm for this case also involves color-coding.
The initial step towards the algorithm here is to note that $T$ has a bounded diameter and so does every subtree of $T$.
For the previous cases, we constructed a path-to-path isomorphism that hits enough vertices in $V(G)\setminus N_G[u]$, where $u$ is the fixed\footnote{The choice of the vertex $u$ in $G$ is fixed in the outer loop of the algorithm. Thus, the algorithm considers every possible vertex in $G$ as a candidate for the image of $t$.} image of $t$.
In this case, $T$ has no path of enough length starting in $u$.
However, we know that if $T$ is isomorphic to a subgraph of $G$, then the subgraph isomorphism of $T$ into $G$ occupies at least $\ndef(u)$ vertices outside $N_G[u]$.
Thus, solving the problem is equivalent to finding an isomorphism of a connected subtree of $T$ containing $t$ into $G$ that maps $t$ to $u$ and also maps at least $\ndef(u)$ vertices of $T$ to vertices in $G$ that are outside $N_G[u]$.

We aim to find such a subtree of minimum size, the existence of such a tree is equivalent to the containment of $T$ in $G$.
By minimality, the leaves of such a subtree are necessarily mapped to either $t$ or into $V(G)\setminus N_G[u]$.
Then by the minimality, this tree should have at most $\ndef(u)+1$ leaves, so the size of any minimal subtree is at most $\diam(T)\cdot (\ndef(u)+1)\le k\cdot \diam(T)$.
The obtained bound allows us to use color-coding to find the required subtree in $G$.
We color $G$ with $k\cdot\diam(T)$ colors uniformly at random and use the coloring to find a subgraph of $G$ that is 
(a) isomorphic to the required subtree of $T$, (b) the isomorphism maps $t$ to $u$, and  (c) the isomorphism occupies $\ndef(u)$ vertices in $V(G)\setminus N_G[u]$. 
This is done via dynamic programming in time $2^{\Oh(k\cdot\diam(T))}\cdot\polyn$.
As there is no vertex at a distance at least $3k$ from $t$ in $T$, we have that  $\diam(T)=\Oh(k)$. Hence the overall running time of the algorithm is $2^{\Oh(k^2)}\cdot\polyn$, similar to the case (ii).

This algorithm is important not only for the particular case $\ld(T)\ge k-1$---we shall recall it once again in the overview in a slightly more general setting.
\Cref{subsec:annotated} is dedicated to the corresponding problem which we call \probHitIso.
In this problem, it is allowed to have an arbitrary starting mapping from $T$ into $G$ and have multiple sets in $G$ that are required to be hit by the isomorphism.
Our algorithm for this problem works in time $2^{k^{\Oh(1)} \cdot \diam(T)}\cdot\polyn$, if the total number of vertices to hit by the isomorphism is bounded by $k^{\Oh(1)}$.

The above arguments bring us to an algorithm for \probSTI running in time $2^{\Oh(k^2)}\cdot\polyn$, where $k=|V(T)|-\delta(G)$, for the special case $\ld(T)\ge k-1$.
This is the main result of \Cref{section:vertex-many-leaves} formulated in \Cref{thm:large_leaf_degree}.

\subsection{Filtering out yes-instances}

We still have to consider the case of the {maximum-leaf degree} $\ld(T)<k-1$.
As in the case $\ld(T)\geq k-1$, we first filter out some structural properties of $G$ and $T$ that guarantee  $G$ containing $T$.
 Taking the success of the complement case into account, a good candidate for being filtered out first is the case of $T$ having less than $k-1$ leaves (that actually stops us from applying Idea II by shaving leaves).

In this case, $T$ can be covered by at most $k-3$ (the number of leaves minus one) paths starting in the same arbitrary leaf of $T$ and ending in pairwise distinct leaves of $T$.
Each of these paths consists of at most $\diam(T)+1$ vertices.
Since these paths share at least one common vertex, we obtain $|V(T)|\le 1+(k-3)\cdot \diam(T)<k \cdot \diam(T).$
It follows that $\diam(T)$ is at least $\delta(G)/k$.
As we can handle small values of $\delta(G)$ by making use of Idea III, we need to look only at trees of large diameter.

Up to this point, we still did not introduce any mechanism of constructing isomorphisms of $T$ into $G$ other than mapping outside a neighborhood of a \emph{specific} vertex.
To avoid keeping track of this, we discover that it is possible to save neighbors of \emph{all} vertices of $G$ simultaneously.
Moreover, path-to-path isomorphisms are very suitable for this purpose.
This is the next  idea in the proof of the main result

\medskip\noindent\textbf{Idea IV.} \emph{Shortest paths are very good for saving neighbors.}

\noindent
Assume that $G$ has a shortest $(s,t)$-path that consists of at least $k+2$ vertices.
On the one hand, each vertex of $G$ that does not belong to this path cannot have more than three neighbors belonging to this path (otherwise we can make the $(s,t)$-path shorter).
On the other hand, each vertex of the $(s,t)$-path cannot have more than two neighbors inside the path for the same reason.
Then every vertex in $G$ has at least $k-1$ non-neighbors in the path.
Hence, if we map a path of $T$ into the $(s,t)$-path initially, we can extend this mapping to a subgraph isomorphism of $T$ into $G$ without any additional work following Idea II.
Due to that for each $u\in V(G)$, we have at least $k-1\ge\ndef(u)$ non-neighbors in the isomorphism initially.

The idea above is the basis for an alternative mechanism that is the core part of \Cref{sec:largediameter}.
Idea IV allows to deal with $G$ for the case $\diam(G)\ge k+1$ automatically and the additional work is required only to deal with the case $\diam(G)\le k$.
We turn this upper bound on the diameter of $G$ into our advantage.
In fact, we are able to take any set $S$ of size $o(\delta(G))$ and construct a path of size $\Oh(k\cdot |S|)$ that traverses all vertices of $S$.
We do this iteratively by connecting consecutive vertices in $S$ by a shortest path in $G$.
Since we do not want such segments to intersect, we have to remove the prefix of the path from $G$ before finding the shortest path to the next vertex of $S$.
If during this process the diameter of $G$ becomes even greater than $2k$, we make good use of it (following Idea IV with additional arguments) and construct a path that contains enough non-neighbors of each vertex of $G$.

The rest of \Cref{sec:largediameter} is devoted to finding the appropriate set $S$ in $G$.
This set is required to contain at least $\ndef(u)$ non-neighbors for every vertex $u \in V(G)$.
If the size of the graph is slightly above $\delta(G)$, that is, $|V(G)|\le (1+\epsilon)\cdot \delta(G)$, the existence of a small $S$ is hardly possible.
For this reason, we dedicate a separate  \Cref{sec:diracs} that especially deals with this case for arbitrary $\epsilon\le \frac{1}{4k}$.
The main result of this section, \Cref{lem:dense}, states that if $\delta(G)=\Omega(k^2)$, $|V(G)|\le (1+\epsilon)\cdot \delta(G)$ and $\ld(T)<k$, then $G$ contains $T$ as a subgraph.
The proof is based on extending a partial isomorphism of $T$ into $G$ using unoccupied vertices.
This is always possible to do because the vertices outside the partial isomorphism in $G$ should have many neighbors inside it.
It grants many options for an outer vertex to be inserted between the vertices of the isomorphism, and at least one option will always suffice for extension.

By achieving the lower-bound $|V(G)|\ge (1+\epsilon)\cdot \delta(G)$, we are able to construct the set $S$ of bounded size.
We use the probabilistic method here. For simplicity, we slightly increase the lower bound for the number of vertices up to $|V(G)|\ge (1+\epsilon)\cdot \delta(G)+k^{\Omega(1)}\cdot\log\delta(G)$. (The increase is achieved by lowering the value of $\epsilon$.)
Each vertex $u$ in $G$ with $\ndef(u)>0$ has at least $|V(G)|-\delta(G)>\epsilon \cdot \delta(G)$ non-neighbors in $G$.
For a fixed $u \in V(G)$, a random choice of a vertex in $V(G)$ gives a non-neighbor of $u$ with probability at least $\frac{\epsilon}{1+\epsilon}$.
Then the expected value of deficient vertices $u\in V(G)$ that are not neighbors of a random vertex of $G$ is at least $\frac{\epsilon}{1+\epsilon}\cdot |V(G)|$, so there should be a single vertex in $G$ that is non-neighbor to at least $\frac{\epsilon}{1+\epsilon}$ ratio of all deficient vertices.
We find such vertex in $G$ and put it inside $S$.
With a careful analysis, we show that repeating this step for $\Oh(\log \delta(G) / \epsilon)$ times results in a set $S$ having at least one non-neighbor for each deficient vertex of $G$.
We have to repeat the whole process  $k-2$ times and obtain the required set $S$ of size $\Oh(k\cdot \log \delta(G)/\epsilon)$.
Note that the additive part of the lower bound is required for maintaining the probability $\frac{\epsilon}{1+\epsilon}$ during the consecutive choices of distinct vertices into $S$.
The value of $\epsilon$ we use throughout \Cref{sec:largediameter} is of the form $\frac{1}{k^{\Theta(1)}}$, so the final bound on the size of $S$ is $k^{\Oh(1)}\cdot\log\delta(G)$.

The vertices of $S$ are finally tied into a path in $G$ of length at most $k^{\Oh(1)}\cdot\log\delta(G)$ as discussed above, and a path of sufficient length in $T$ is mapped to the path in $G$.
The discussion sums up into the main result of \Cref{sec:largediameter}, \Cref{thm:large_diameter_many_vertices}.
It filters out trees of diameter $k^{\Omega(1)}\cdot\log\delta(G)$, and the gap between this bound and the desirable $k^{\Oh(1)}$ still remains.

\subsection{Shaving off leaves from distinct neighbors}

The results discussed above leave us with a tree $T$ satisfying $\ld(T)<k-1$ and $\diam(T)\le k^{\Oh(1)}\cdot \log\delta(G)$.
The number of leaves in $T$, in this case, is at least  $k^{\Omega(1)}$ (following the $\delta(G)/\diam(T)$ lower bound discussed before).
Then we can choose at most $k-1$ vertices in $T$, such that in total they have at least $k-1$ adjacent leaves.
We denote this set by $W$. 
The strategy we want to implement in this case is to use Idea II and to construct a mapping of a connected subtree of $T$ containing the vertices of $W$ into $G$ such that at least $\ndef(w)$ non-neighbors of each $w\in W$ are occupied by this mapping.
This is done in \Cref{sec:medium}. The formal proof encapsulates several methods of constructing a mapping that hits non-neighbors of $W$ by exploiting the structure of $T$ and $G$.
This is the final step of filtering out yes-instances before applying the last (and the most involved) of the algorithms we use to prove the main theorem.

The first obstacle encountered here is that the mapping we require to construct should map all vertices of $W$ (as we should know their respective images in $G$ in order to collect non-neighbors).
To our advantage, the choice of $W$ can vary (while $N_T(W)$ has at least $k-1$ leaves) and we will make this choice depending on the structure of $T$.

The initial step here  (in \Cref{subsection:contracting-trivial-paths}) filters out the case of $T$ with $\diam(T)\ge \Omega(k^4)$.
First, $W$ is chosen in a way that it contains two vertices on distance exactly $\diam(T)-2$.
Then we consider the minimum spanning tree $T_W$ of $W$ in $T$.
Its leaves form a subset of $T$, so the number of leaves in $T_W$ is at most $k-1$.
Since $T_W$ has large diameter, there exist long paths in $T_W$ consisting of degree-$2$ vertices, which we refer to as \emph{trivial path}.
We further shrink $T_W$ by contracting edges of each long trivial paths down until its length reaches $2k$.
The shrank tree $T'_W$ has at most $\Oh(k^2)$ vertices and we initialize the mapping with an arbitrary subgraph isomorphism of $T'_W$ into $G$.

The rest of the work is to transform this isomorphism to the isomorphism of $T_W$ into $G$ by embedding vertices of $G$ into trivial paths of the isomorphism image.
We exploit the existence of a trivial path of length $\Omega(k^3)$ and embed a set $S$ (containing $k-1$ non-neighbors for the image of each $w\in W$) into the image of this path.
The technical work here is done using the arguments of \Cref{sec:largediameter} that we discussed already.
If it cannot be done at some moment, then Idea IV helps us to construct a mapping in an alternative way.

The case of $\diam(T)\ge \Omega(k^4)$ is now dealt with yet.
To proceed further, we cannot rely on path-to-path isomorphisms anymore since $T$ has no very long paths.
Then we focus more on the structure of $G$.
This is when we put the notion of $q$-escape vertices into play (see \Cref{def:escape}).
At the beginning of \Cref{subsec:escape}, we start with an isomorphism of $T_W$ into $G$ and show how a $k^{\Omega(1)}$-escape vertex helps in embedding the isomorphism with non-neighbors of $W$.
This, however, requires a vertex of degree $k^{\Omega(1)}$ in $T$.

The rest of \Cref{subsec:escape} partially resolves this issue by making additional assumptions of the structure of $T$.
\Cref{lemma:escape_or_separator} shows an alternative (to exploiting an escape vertex) mechanism to extend a mapping of $T_W$ with enough non-neighbors of $W$.
If this mechanism fails, it produces a separator of $G$ of size $k^{\Omega(1)}$.
Finally, \Cref{lemma:separator_and_separable_tree} allows to use this separator to map a \separable{$k^{\Oh(1)}$} tree $T$ (see \Cref{def:separable}) inside $G$.
This exhausts the discussion of constructions of \Cref{sec:medium} used to prove its main result \Cref{thm:medium_diameter_escape_vertices}.

\subsection{Solving remaining case algorithmically}

We move on to the remaining case of \probSTIAG: $G$ contains no $k^{\Omega(1)}$-escape vertices and $T$ is not \separable{$k^{\Omega(1)}$}, while $\delta(G)\ge k^{\Omega(1)}$.
The main result of \Cref{section:smalldiameter}, \Cref{thm:densest_part}, states that there is a randomized algorithm solving such cases in $2^{k^{\Oh(1)}}\cdot\polyn$ running time.
In the rest of the current part of the overview, we discuss the main parts of the proof of this result.

Following  Idea II, the first part addresses hitting non-neighbors of images of vertices in $W$ (the set of leaf-adjacent vertices in $T$) similarly to the previous part of the overview.
Before we used it for proving that $G$ contains $T$ as a subgraph.
That was achieved by occupying at least $\ndef(w)$ non-neighbors of the image of $w$ in $G$ for each $w\in W$.
But this is only a sufficient condition.
In fact, not all vertices in $W$ require so many saved neighbors.
For example, if $W$ has exactly $k-1$ vertices then we require exactly one vacant neighbor when it comes to mapping the leaf to the neighbor of the image of $w\in W$.
Then if we have one vacant neighbor for $w_1$, two vacant neighbors for $w_2$, three vacant neighbors for $w_3$, and so on, the extension of the isomorphism is guaranteed to be possible.

The paragraph above suggests that a necessary condition should be discovered.
This is exactly what we do in \Cref{subsec:hard_mapping}.
We prove that if the mapping of $W$ into $G$ is known initially then the existence of a subgraph isomorphism of $T$ into $G$ respecting this mapping is equivalent to the existence of a mapping that hits specific sets in $G$.
The details of this are quite complex, so we refer the reader to the statement of \Cref{lemma:multi_leaf_algo}.
It automatically provides a one-to-many reduction to \probHitIso.  
Since the diameter of $T$ is at most $k^{\Oh(1)}$, we obtain an algorithm that runs in $2^{k^{\Oh(1)}}\cdot\polyn$ time and correctly decides the containment of $T$ in $G$, provided the mapping of $W$ into $G$ is given.

The first part alone does not provide any clue on how to choose the mapping of $W$ into $G$.
Trivial enumeration of all possible mappings gives $|V(G)|^{|W|}< |V(G)|^{k}$ possible options and yield only an \classXP-algortihm for \probSTI parameterized by $k$.
The second part of \Cref{section:smalldiameter} resolves this issue. It allows to reduce the situation when a set $W$ guessed by making use of a random sampling.
Its central result, \Cref{lemma:guess_a_mapping} can be turned into a polynomial-time randomized procedure that takes $G$, $T$, $W$ and a single vertex $u\in V(G)$ as input and produces a mapping of $W$ into $G$.
If $T$ is contained in $G$ as a subgraph then with probability at least $2^{-k^{\Oh(1)}}$, this mapping is a restriction of some subgraph isomorphism of $T$ into $G$. This is the only source of randomness in our algorithm which we do not know how to derandomize. 
Combining the results of two parts we obtain a (one-sided error) randomized algorithm for the specific case of \probSTI  with running time $2^{k^{\Oh(1)}}\cdot\polyn$.



\section{Vertex of high leaf-degree}\label{section:vertex-many-leaves}

In this section, we make the first step toward the proof of the main result of the paper. It concerns the case when $T$ has a vertex adjacent to many leaves. 
One of the trivial cases of the problem is when  
a tree $T$ on $\delta(G)+k$ vertices is a star---we guess where the center of the start could be mapped and then check whether there is enough space to map the leaves. 
The main result of this section extends such arguments to the case when $T$ has a vertex of high-leaf degree.

\begin{theorem}\label{thm:large_leaf_degree}
Let $G$ be an $n$-vertex graph, $k$ an integer, and $T$ a tree on $\delta(G)+k$ vertices and 
	with $\ld (T) \ge k - 1$. There is an algorithm deciding whether $G$ contains $T$  as a subgraph in time $2^{\Oh(k^2)} \cdot \polyn$.
\end{theorem}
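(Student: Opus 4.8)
The plan is to follow the strategy outlined in the overview: fix a candidate vertex $u \in V(G)$ as the image of the vertex $t \in V(T)$ with at least $k-1$ adjacent leaves, shave off $\ld(T)$ leaves adjacent to $t$, apply Chv\'atal's Lemma (via \Cref{prop:chvatal-generalized}) to map the pruned tree $T - (\text{shaved leaves})$, and then argue that enough neighbors of $u$ remain vacant to host the shaved leaves. The number of vacant neighbors of $u$ we need is exactly $\ndef(u)$: since the pruned tree has $\delta(G)+1$ vertices, after mapping it at most $\delta(G)$ vertices of $N_G[u]$ are occupied plus whatever is mapped outside $N_G[u]$; forcing the isomorphism to hit $\ndef(u)$ vertices outside $N_G[u]$ saves $\ndef(u)$ neighbors of $u$, which suffices. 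The outer loop of the algorithm tries every $u \in V(G)$.

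\textbf{The core subroutine.} The first main step is \Cref{lemma:2k_path} (taken as given from this section): if $T$ has a path of length at least $3k$ starting at $t$, and $\delta(G) = \Omega(k^2)$, then for every deficient $u$ we can map an initial segment of that path into a walk in $G$ starting at $u$ that visits $\ndef(u)$ vertices outside $N_G[u]$ — either by a ``map every third vertex outside $N_G[u]$'' procedure (when $N_G[u]$ contains enough vertices with more than $k-1$ neighbors outside $N_G[u]$), or, failing that, by observing $\delta(G - N_G[u]) \ge k$ and appending a length-$(k-2)$ path inside $G - N_G[u]$. Extending this partial mapping via \Cref{prop:chvatal-generalized} yields the desired subgraph isomorphism, so in this regime the answer is always YES and no search is needed.

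\textbf{The two residual cases handled algorithmically.} When \Cref{lemma:2k_path} does not apply, either (ii) $\delta(G) = \Oh(k^2)$, or (i) $T$ has no path of length $\ge 3k$ starting at $t$. In case (ii) we have $|V(T)| = \delta(G) + k = \Oh(k^2)$, so color-coding of Alon--Yuster--Zwick solves \probSTI directly in time $2^{\Oh(k^2)}\cdot\polyn$ (Idea III). In case (i), every subtree of $T$ has diameter $\Oh(k)$. Here containment of $T$ in $G$ is equivalent to the existence of a subgraph isomorphism of \emph{some} connected subtree $T'$ of $T$ containing $t$ into $G$ that sends $t$ to $u$ and hits at least $\ndef(u)$ vertices of $V(G) \setminus N_G[u]$ (one direction: restrict a full isomorphism; the other: extend via \Cref{prop:chvatal-generalized}). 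A size-minimal such $T'$ has all its leaves mapped to $u$ or into $V(G)\setminus N_G[u]$, hence at most $\ndef(u)+1 \le k$ leaves, so by \Cref{lemma:leaves_diameter} we get $|V(T')| \le \diam(T)\cdot(\ndef(u)+1) = \Oh(k^2)$. We then color $V(G)$ uniformly with $\Oh(k\cdot\diam(T)) = \Oh(k^2)$ colors and run a dynamic program over colorful subtrees that simultaneously tracks which colors are used and how many vertices outside $N_G[u]$ are hit; this finds the required $T'$ (if one exists) with good probability in time $2^{\Oh(k^2)}\cdot\polyn$. This is exactly the \probHitIso machinery of \Cref{subsec:annotated} specialized to one annotated set.

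\textbf{Main obstacle.} The technically delicate part is the correctness of \Cref{lemma:2k_path} — in particular the dichotomy on whether a good vertex $u$ exists and the careful bookkeeping that mapping roughly every third vertex of a length-$3k$ path outside $N_G[u]$ indeed secures $\ndef(u) \le k-1$ vacant neighbors, together with showing $\delta(G - N_G[u]) \ge k$ in the complementary sub-case. Everything else is assembling these pieces: the case split on $\delta(G)$ versus $k^2$ and on the diameter of $T$, and the color-coding, which is standard once the $\Oh(k^2)$ bound on the relevant subtree size is in hand. Combining the three cases gives the claimed $2^{\Oh(k^2)}\cdot\polyn$ running time and proves \Cref{thm:large_leaf_degree}.
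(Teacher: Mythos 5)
Your proposal follows the paper's own proof essentially step for step: the same three-way case split (small $\delta(G)$ handled by Alon--Yuster--Zwick color coding; no length-$3k$ path from $t$ reduced to \probHitIso with $F_1 = V(G)\setminus N_G[u]$ and $k_1 = \ndef(u)$, solvable in $2^{\Oh(k\cdot\diam(T))}\cdot\polyn$ time; and the remaining regime settled affirmatively by \Cref{lemma:2k_path}), relying on \Cref{lemma:multi_saved_k_neighbours}-style leaf re-insertion after pruning, with the same $\Oh(k^2)$ bound on the relevant subtree size driving the running time. The only cosmetic deviation is shaving $\ld(T)$ rather than exactly $k-1$ leaves, which changes nothing substantive.
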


\subsection{Hitting sets with isomorphism}\label{subsec:annotated}
Before we move on to the main result of this section, we first show auxiliary algorithms for finding small-sized tree containments.
We use the following result of Alon, Yuster, and Zwick \cite{AlonYZ95}.


\begin{proposition}[Theorem~6.1, \cite{AlonYZ95}]\label{col:color_coding_subtree} 
 Let $G$ be an $n$-vertex graph and $T$ be a tree. It can be determined in time $2^{\Oh(|V(T)|)} \cdot \polyn$ whether $G$ contains $T$ as a subgraph. 
\end{proposition}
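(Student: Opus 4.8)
The plan to prove \Cref{col:color_coding_subtree} is to apply the color-coding method of Alon, Yuster, and Zwick; I sketch the argument for completeness. Write $t = |V(T)|$ and assume $t \le n$, since otherwise $G$ trivially contains no copy of $T$. \emph{First}, I would reduce to detecting a \emph{colourful} copy: color $V(G)$ uniformly and independently at random by $c \colon V(G) \to [t]$, and call a subgraph isomorphism $\sigma \colon V(T) \to V(G)$ colourful if the $t$ vertices of $\Ima\sigma$ receive pairwise distinct colors. If $G$ contains $T$, a fixed copy $\sigma$ is colourful with probability exactly $t!/t^t \ge e^{-t}$. Hence an algorithm that detects a colourful copy in time $2^{\Oh(t)} \cdot \polyn$, run independently on $\Oh(e^t)$ fresh colorings and answering ``yes'' if any run succeeds, yields a one-sided-error randomized algorithm with constant success probability in the claimed running time.

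\emph{Second}, I would detect a colourful copy by dynamic programming over $T$. Root $T$ at an arbitrary vertex and fix an ordering of the children of every node. For a node $x$ with children $c_1, \dots, c_{d}$ and $0 \le i \le d$, let $T_{x,i}$ be the subtree formed by $x$ together with the full subtrees rooted at $c_1, \dots, c_i$ (so $T_{x,0}$ is the single vertex $x$, and $T_{r,d_r} = T$ at the root $r$). The table entry $D[x,i,v,S]$, indexed by a node $x$, an index $0\le i\le d$, a vertex $v \in V(G)$, and a color set $S \subseteq [t]$ with $|S| = |V(T_{x,i})|$, records whether some colourful embedding of $T_{x,i}$ into $G$ maps $x$ to $v$ and uses exactly the colors in $S$. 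The base case sets $D[x,0,v,\{c(v)\}] = \mathrm{true}$ for all $v$; the transition from $i-1$ to $i$ sets $D[x,i,v,S] = \mathrm{true}$ iff there is a $G$-neighbor $w$ of $v$ and a partition $S = S_1 \uplus S_2$ with $c(v) \in S_1$ such that $D[x,i-1,v,S_1]$ and $D[c_i, d_{c_i}, w, S_2]$ are both $\mathrm{true}$; and $G$ contains a colourful copy of $T$ iff $D[r,d_r,v,[t]] = \mathrm{true}$ for some $v$. Filling the table in postorder on $T$ respects all dependencies; the number of $(x,i)$ pairs is $\sum_x (d_x+1) = \Oh(t)$, so the table has $2^{\Oh(t)}\cdot n$ entries, and each is computed by scanning the neighbors of $v$ and the $3^t = \sum_{S\subseteq[t]}2^{|S|}$ partitions, for a total of $2^{\Oh(t)} \cdot \polyn$.

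\emph{Third}, to obtain the deterministic bound one replaces the random colorings by an explicit $(n,t)$-perfect hash family $\mathcal F \subseteq [t]^{V(G)}$ — a family such that every $t$-element subset of $V(G)$ is mapped injectively by some $f \in \mathcal F$ — of size $e^{t}\, t^{\Oh(\log t)}\log n$, constructible in time $2^{\Oh(t)}\cdot n\log n$. Running the dynamic program for every $f \in \mathcal F$ and answering ``yes'' iff some run succeeds is correct, because any subgraph isomorphism $\sigma$ of $T$ into $G$ is colourful for at least one $f \in \mathcal F$; the total running time remains $2^{\Oh(t)} \cdot \polyn$.

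\emph{The main obstacle} is the dynamic-programming step, and within it the handling of the tree structure: a node of $T$ may have large degree, so combining the contributions of all its children simultaneously would cost time exponential in that degree; the ``attach one child at a time'' device (effectively binarizing $T$ on the fly through the index $i$) is what makes the recurrence both correct and of size $2^{\Oh(t)}\cdot n$, and the subset partition $S = S_1 \uplus S_2$ is precisely where colourfulness — hence the $2^{\Oh(t)}$ dependence rather than an $n^{\Oh(t)}$ one — enters. The probability amplification and the perfect-hash-family derandomization are then routine.
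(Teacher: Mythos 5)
Your proof is correct and is exactly the standard color-coding argument of Alon, Yuster, and Zwick that the paper relies on: the paper itself gives no proof of this proposition, citing it directly as Theorem~6.1 of \cite{AlonYZ95}. The random-coloring reduction, the child-by-child dynamic program over color subsets, and the perfect-hash-family derandomization all match the intended source, so nothing further is needed.
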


In the rest of this section, and especially in the later section, a variant of \Cref{col:color_coding_subtree}  with additional constraints will be used.
First, it is helpful to fix the images of some vertices, e.g., the root of the tree.
Second, in the view of upcoming in the later sections structural properties of the solution, it is required to find partial isomorphisms of $T$ ``hitting'' certain subsets of $V(G)$.
A generalized problem encapsulating the properties above is defined next.

\defproblema{\probHitIso}{Graph $G$, tree $T$, an injection $\kappa: S\to V(G)$ for some $S\subset V(T)$, a family of $r$ sets $F_1,F_2,\ldots,F_r\subset V(G)$, and $r$ non-negative integers $k_1,k_2,\ldots,k_r$.}{Determine whether there exists a connected subgraph $T'$ of $T$, and subgraph isomorphism $\sigma:V(T')\to V(G)$ from $T'$ into $G$ respecting $\kappa$, such that $V(T')\supset S$ and $|\Ima\sigma\cap F_i|\ge k_i$ for each $i\in[r]$.}

First we observe that just as the original \probSTI, \probHitIso is in \classFPT parameterized by the size of the tree $T$, by an application of color coding. We show this for the case when the target subtree $T'$ is equal to the whole tree $T$, in order to use later as a blackbox when the subtree achieving the isomorphism is already fixed.

\begin{lemma}\label{lemma:full_tree_color_coding}
	There is an algorithm that determines whether the desired isomorphism of the tree $T' = T$ into $G$ in an instance of \probHitIso exists in time $2^{\Oh(|V(T)| + k)}\cdot\polyn$, where $k = \sum_{j = 1}^r k_j$.
\end{lemma}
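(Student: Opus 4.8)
The plan is to prove Lemma~\ref{lemma:full_tree_color_coding} by a color-coding argument: we randomly color the vertices of $G$ with $|V(T)| + k$ colors, where $k = \sum_{j=1}^r k_j$, and look for a \emph{colorful} copy of $T$ in $G$ that respects $\kappa$, where the colorfulness requirement is designed both to ensure injectivity of the isomorphism and to force the hitting constraints. Concretely, I would reserve a block of $k_j$ dedicated colors for each set $F_j$, for a total of $k$ colors, plus $|V(T)|$ further colors. A coloring $c\colon V(G)\to [|V(T)| + k]$ is ``good'' if there is a subgraph isomorphism $\sigma\colon V(T)\to V(G)$ respecting $\kappa$ such that all $|V(T)|$ images receive distinct colors, the images lying in each $F_j$ collectively cover all $k_j$ colors reserved for $F_j$ (so in particular at least $k_j$ image-vertices lie in $F_j$), and no image uses a color reserved for any $F_j$ unless that vertex actually lies in the corresponding $F_j$. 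If a valid isomorphism exists, then a uniformly random coloring is good with probability at least $2^{-\Oh(|V(T)| + k)}$: pick a witnessing $\sigma$ together with a choice of $k_j$ representative image-vertices inside each $F_j$, and the event that these $k$ representatives get the reserved colors (one each, in the right blocks) and the remaining $|V(T)| - k$ images get distinct colors from the remaining palette is a fixed-probability event of the required order. Repeating $2^{\Oh(|V(T)|+k)}$ times drives the error probability below any constant; for the deterministic statement one composes with a standard $(n, |V(T)|+k)$-perfect hash family (Alon–Yuster–Zwick).

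The core algorithmic step is then a dynamic program over the rooted tree $T$ that tests, for a \emph{fixed} coloring $c$, whether a good embedding exists. Root $T$ arbitrarily (respecting $S$ if convenient). For a vertex $t$ of $T$, let $T_t$ be its subtree; the DP table entry is indexed by $t$, by a candidate image $v = \sigma(t) \in V(G)$, and by the set $C \subseteq [|V(T)|+k]$ of colors used by the images of $V(T_t)$. The entry is true iff there is a subgraph isomorphism of $T_t$ into $G$ sending $t$ to $v$, using exactly the color set $C$ with all distinct colors, consistent with $\kappa$ on $S \cap V(T_t)$, and consistent with the $F_j$-reservation rule. The recurrence processes the children of $t$ one at a time, performing subset convolution over the color sets: combining the partial table for the first $i$ children with the table for child $i+1$ requires the child's color set to be disjoint from the accumulated set, its image to be adjacent to $v$ in $G$, and the non-edges to be respected automatically since $T$ is a tree (distinct subtrees share no vertices, so we only need disjointness of images, which colorfulness enforces). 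At the root we accept iff some entry is true whose color set $C$ contains, for every $j$, all $k_j$ colors reserved for $F_j$. The table has $|V(T)| \cdot n \cdot 2^{|V(T)|+k}$ entries and each subset-convolution step costs $2^{\Oh(|V(T)|+k)}\cdot \polyn$, giving total time $2^{\Oh(|V(T)|+k)}\cdot\polyn$ as claimed.

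There are a couple of points requiring care rather than genuine difficulty. First, the reservation rule must be stated so that it is both \emph{necessary} for a good coloring to witness a real solution and \emph{sufficient} to guarantee $|\Ima\sigma\cap F_j|\ge k_j$: using $k_j$ distinct reserved colors that only appear on vertices of $F_j$ forces at least $k_j$ image-vertices to lie in $F_j$, and conversely any genuine isomorphism with $\ge k_j$ images in $F_j$ admits, after choosing $k_j$ of them and relabeling colors, a coloring that is good. Second, the sets $F_1, \dots, F_r$ may overlap, and a single image-vertex could in principle be counted toward several $F_j$'s; since we assign each reserved color to one specific $F_j$ and forbid it elsewhere, a vertex in $F_1 \cap F_2$ can still legitimately carry, say, a color reserved for $F_1$, and it simply does not help satisfy $F_2$'s quota — this is fine, because we only need the quotas to be met \emph{somehow}, and the probabilistic argument picks disjoint representatives for each quota whenever a solution exists (the $k_j$ representatives across different $j$ need not be distinct vertices, but we can always choose them distinct by slightly inflating constants, or simply allow a vertex to receive one color and note the probability bound still holds with the stated asymptotics). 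The main thing to get right is therefore the bookkeeping of the color budget $|V(T)| + k$ and verifying that the success probability of a random coloring is $2^{-\Oh(|V(T)|+k)}$; the rest is the standard color-coding DP for subtree isomorphism of Alon–Yuster–Zwick (Proposition~\ref{col:color_coding_subtree}), augmented with the $\kappa$-constraint (which just pins down certain table entries) and the reserved-color accounting.
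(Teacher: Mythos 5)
Your overall plan---random coloring, tree DP, derandomization via perfect hash families, pinning colors for the $\kappa$-fixed vertices---matches the paper's approach in outline. The substantive departure is how the hitting constraints are enforced: you reserve $k_j$ dedicated colors for each $F_j$ and require the images in $F_j$ to cover exactly those colors, whereas the paper uses only $|V(T)|$ colors and carries a composition vector $(k_1',\ldots,k_r')$ in the DP state recording how many hits each $F_j$ has received so far. Your variant is cleaner to state, but it has a genuine gap that you notice and then dismiss incorrectly.

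The gap is the overlap case. You claim ``the probabilistic argument picks disjoint representatives for each quota whenever a solution exists,'' but this is false in general: take $F_1=F_2=\{v,w\}$ and $k_1=k_2=2$. Any valid isomorphism with $\{v,w\}\subseteq\Ima\sigma$ is a yes-instance, yet in a colorful coloring $v$ and $w$ carry exactly two colors total, and those two colors cannot simultaneously equal the two colors reserved for $F_1$ \emph{and} the two colors reserved for $F_2$ (the blocks are disjoint). So no ``good'' coloring exists, the DP never accepts, and the algorithm falsely reports a no-instance. Your two suggested escapes do not help: disjoint representatives simply may not exist (it is not a matter of constants), and ``letting a vertex carry one color'' leaves the quotas of the other $F_j$'s uncertified by the reserved-color check, at which point the DP needs some other mechanism to count those hits. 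Note this is not a degenerate corner case for the paper---in \Cref{lemma:multi_leaf_algo} the sets $F_i=N_G(u_i)$ can heavily overlap. The paper's resolution is precisely to add $(k_1',\ldots,k_r')$ to the DP state so that a single image-vertex in $F_1\cap F_2$ advances both counters; this costs a $2^{\Oh(k)}$ factor in the table size, matching the claimed bound, and is the fix you need.
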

\begin{proof}
    When the size of the tree is a parameter, accounting for the additional constraints of the \probHitIso problem is a straightforward extension of the color-coding technique of  \Cref{col:color_coding_subtree}. For the color-coding step, we use $|V(T)|$ colors, where $|S|$ colors are assigned to the images of the vertices of $S$ under $\kappa$,  and the remaining ones are picked randomly and independently throughout the rest of the graph. The dynamic programming then computes, whether for every rooted subtree $T'$ of $T$ there exists a colorful solution with the given characteristics: the partial isomorphism uses precisely the colors among the set $C$, and it hits exactly $k_j'$ elements in each $F_j$, for a given composition $(k_1', \ldots, k_r')$, where $0 \le k_j' \le k_j$ for each $j \in [r]$.
    
    For a fixed solution, random coloring makes it colorful with probability at least $2^{-\Omega(|V(T)|)}$, and the dynamic programming runs in time $2^{\Oh(|V(T)|)} \cdot \polyn$, since there are $2^{|V(T)|}$ choices for the set of colors $C$, and $2^{\Oh(k)}$ choices for the composition $(k_1', \ldots, k_r')$. The total running time is thus $2^{\Oh(|V(T)| + k)}\cdot\polyn$.
    The algorithm can also be derandomized in the standard fashion.
\end{proof}

Finally, we show the main result of this subsection: that \probHitIso is efficiently solvable as long as the total size of ``special'' targets for the isomorphism and the diameter of $T$ are bounded.

\begin{lemma}\label{lemma:general_color_coding_algo}
	\probHitIso admits an algorithm running in time $2^{\Oh(p\cdot \diam(T))}\cdot\polyn$, where $p=|S|+\sum_{i=1}^r k_i$.
\end{lemma}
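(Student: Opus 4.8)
Here is the plan. The key structural fact is that if the instance is positive, it admits a solution $(T',\sigma)$ with $|V(T')|=\Oh(p\cdot\diam(T))$. To see this, fix a solution $(T^*,\sigma^*)$ minimising $|V(T^*)|$ and bound the number of leaves of $T^*$ by $p$. At most $|S|$ leaves of $T^*$ lie in $S$. For a leaf $\ell\notin S$, deleting it leaves a connected subtree $T^*-\ell$ of $T$ still containing $S$, and $\sigma^*$ restricted to it is still a subgraph isomorphism respecting $\kappa$; by minimality it is not a solution, so it violates some hitting constraint, i.e.\ there is an index $i(\ell)$ with $\sigma^*(\ell)\in F_{i(\ell)}$ and $|\Ima\sigma^*\cap F_{i(\ell)}|=k_{i(\ell)}$. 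Since $\sigma^*$ is injective, for a fixed $j$ the leaves $\ell\notin S$ with $i(\ell)=j$ have pairwise distinct images inside the $k_j$-element set $\Ima\sigma^*\cap F_j$, so there are at most $k_j$ of them; summing gives at most $\sum_j k_j$ such leaves, hence $T^*$ has at most $|S|+\sum_j k_j=p$ leaves. As $T^*$ is a subtree of $T$ we have $\diam(T^*)\le\diam(T)$, so the contrapositive of \Cref{lemma:leaves_diameter} yields $|V(T^*)|\le (p+1)\cdot\diam(T)$ (the degenerate situations $p=0$ or $|V(T)|\le 1$ are trivial and checked directly). Put $C:=\Oh(p\cdot\diam(T))$ equal to this bound.

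Next I would feed this into a color-coding dynamic program over $G$. Colour $V(G)$ with $C$ colours independently and uniformly at random; with probability $2^{-\Oh(C)}$ the isomorphism $\sigma^*$ of the minimal solution is \emph{colourful} (distinct vertices of $T^*$ get distinct colours). Root $T$ arbitrarily, and for $v\in V(T)$, $g\in V(G)$, a colour set $D$, and a vector $(h_1,\dots,h_r)$ with $0\le h_i\le k_i$, let $\mathrm{DP}[v][g][D][\vec h]$ be true iff there is a connected subtree $R$ of the subtree $T_v$ rooted at $v$ with $v\in V(R)$ and $S\cap V(T_v)\subseteq V(R)$, together with a colourful subgraph isomorphism $\varrho\colon V(R)\to V(G)$ respecting $\kappa$, with $\varrho(v)=g$, colour set exactly $D$, and $|\Ima\varrho\cap F_i|=h_i$ for all $i$ (counts capped at $k_i$; if $v\in S$ then $g$ is forced to $\kappa(v)$). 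The table is filled bottom-up in the standard way: at a leaf of $T$ only $R=\{v\}$ is possible, and at an internal vertex with children $c_1,\dots,c_d$ one decides for each $c_j$ whether $R$ descends into $T_{c_j}$ — forced whenever $S\cap V(T_{c_j})\neq\emptyset$ — and if so picks a state at $c_j$ with image in $N_G(g)$, merging children by disjoint union of colour sets and capped sum of hitting vectors. The instance (for the given colouring) is positive iff $\mathrm{DP}[v][g][D][(k_1,\dots,k_r)]$ holds for some $v$ with $S\subseteq V(T_v)$ and some $g,D$; the minimal colourful solution is detected at the topmost vertex of $T^*$. Since there are $n\cdot n\cdot 2^{C}\cdot\prod_i(k_i+1)\le n^2\cdot 2^{C}\cdot 2^{p}=2^{\Oh(C)}\cdot\polyn$ states and each merge costs $2^{\Oh(C)}\cdot\polyn$, one run takes $2^{\Oh(C)}\cdot\polyn=2^{\Oh(p\cdot\diam(T))}\cdot\polyn$; repeating $2^{\Oh(C)}$ times boosts the success probability to a constant, and the procedure derandomises via splitters/perfect hash families as usual.

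The conceptual core is a routine color-coding argument, so the two places that need care are: (i) the charging argument pinning the number of leaves of a minimal solution to $p$, which is what keeps the colour count — and hence the running time — at $2^{\Oh(p\cdot\diam(T))}$ rather than picking up an extra logarithmic factor that a naive enumeration of the "shape" of $T'$ would incur; and (ii) the bookkeeping in the dynamic program, namely forcing $R$ to cover all of $S$, respecting $\kappa$ on $S$, tracking all $r$ hitting requirements simultaneously, and allowing $T'$ to be an arbitrary connected subtree of $T$ (handled by letting any vertex of $T$ play the role of the top of $T'$). I expect (ii) to be the most tedious part, but it follows the standard color-coding template and introduces no new ideas.
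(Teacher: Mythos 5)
Your proof is correct, but it takes a genuinely different route from the paper's. The paper proceeds in two layers: it first enumerates, essentially by brute force, all candidate subtrees $T'$---it fixes the Steiner tree $T''$ of $S$ in $T$, branches on a composition $a_1+\cdots+a_t\le k$ of how many of the required hitting vertices descend from each $w_i\in V(T'')$, and then branches on an abstract rooted tree $T_i'$ with $a_i$ leaves and depth $\le\diam(T)$ to attach at each $w_i$, verifying each candidate against $T$ via \Cref{col:color_coding_subtree}---and only then, for each fixed $T'$, invokes the fixed-tree color-coding routine of \Cref{lemma:full_tree_color_coding}. Your approach collapses these two layers into a single color-coding dynamic program over $T$ and $G$ simultaneously: the choice of which connected subtree $R\subseteq T_v$ to commit to is made inside the DP rather than enumerated up front. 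To make this work you need (and correctly supply) an explicit leaf-charging argument for a minimal solution $T^*$, showing it has at most $|S|+\sum_j k_j$ leaves and hence, via \Cref{lemma:leaves_diameter}, at most $\Oh(p\cdot\diam(T))$ vertices; in the paper this size bound is present only implicitly through the composition enumeration. Both approaches meet the stated running time. Your version is arguably cleaner conceptually---no need to count rooted tree shapes or verify candidate attachments against $T$---at the cost of a DP with more bookkeeping (one must force $R$ to cover $S\cap V(T_v)$, respect $\kappa$, and track the capped hitting vector through the child merges), which you have reasoned through correctly. One cosmetic refinement you could make is to pre-assign the colors of $\kappa(S)$ rather than color them randomly, as the paper does in \Cref{lemma:full_tree_color_coding}, but since $\kappa$ is an injection this only affects the constant in the success probability and not correctness.
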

\begin{proof}
%

    It suffices to show that a sufficient set of choices for $T'$ may be enumerated in the desired time, and then use \Cref{lemma:full_tree_color_coding} to determine whether the desired isomorphism exists for a fixed choice  of $T'$.

    Let $T''$ be the unique minimal subtree of $T$ containing all vertices of $S$, clearly $T''$ is also a subtree of any $T'$ such that the desired isomorphism of $T'$ into $G$ exists. We may assume that $S$ and $T''$ are non-empty, as otherwise we can branch on an arbitrary vertex in $T'$ and its image in polynomial time.

    The algorithm tries all possible choices to extend $T''$ to $T'$. Namely, let $k = k_1 + k_2 + \ldots + k_r$, $t = |V(T'')|$,
    and let $V(T'') = \{w_1, w_2, \ldots, w_t\}$. First the algorithm branches on all compositions of the value of at most $k$ into $t$ terms $a_1$, \ldots, $a_t$, i.e., $k \ge a_1 + \cdots + a_t$, where for every $i \in [t]$, $a_i$ is a nonnegative integer. Intuitively, for every $i \in [t]$, $a_i$ is the number of preimages in $T'$ among the required $k$ vertices in $\bigcup_{j = 1}^r F_j$, that are not in $T''$ but are ``rooted'' in $w_i$ with respect to $T''$. Then for every $i \in [t]$, the algorithm branches on the choice of a rooted tree $T_i'$ that has  $a_i$ leaves each at depth at most $\diam(T)$; if $a_i = 0$ then $T_i'$ is the one-vertex tree. Intuitively, $T_i'$ describes exactly how the selected $a_i$ vertices are connected to $w_i$ in $T$. Finally, the resulting tree $T'$ is obtained from $T''$ by rooting the tree $T'_i$ at $w_i$, for every $i \in [t]$.

    The algorithm then verifies whether the constructed tree $T'$ is a subtree of $T$ such that each vertex in $S$ keeps its place under the isomorhism from $T'$ into $T$. For every $i \in [t]$, let $T_i$ be the subtree of $T$ rooted at $w_i$ that avoids $T''$. More formally, $T_i$ is the connected component of $w_i$ in $T$ after removing the edges of $T''$, that is additionally rooted in $w_i$.
    
    It is easy to observe that $T'$ admits the desired isomorphism into $T$ if and only if for each $i \in [t]$, $T_i'$ is a rooted subtree of $T_i$. The algorithms verifies the latter by invoking the algorithm of \Cref{col:color_coding_subtree} on $T_i'$ and $T_i$. If there exists $i \in [t]$ such that $T_i'$ is not a rooted subtree of $T_i$, the algorithm rejects the choice of $T'$.

    Finally, for every choice of $T'$ that passes the procedure above, the algorithm invokes \Cref{lemma:full_tree_color_coding} to determine whether the desired isomorphism exists from $T'$ into $G$. The algorithm reports that the given instance admits a solution if for at least one choice of $T'$  such an isomorphism exists.

    We next verify the correctness of the algorithm. Clearly, if the algorithm finds a subtree $T'$ of $T$ for which the desired isomoprphism from $T'$ into $G$ exists, then the given instance is a yes-instance. Thus, it only remains to verify that the considered set of candidate subtrees $T'$ is sufficient. Let $T^*$ be a minimal subtree of $T$ such that the desired isomorphism $\sigma^*: V(T^*) \to V(G)$ exists. We show that in this case there also exists a subtree $T'$ of $T$ considered by the algorithm, for which the desired isomorphism also exists.

    Similarly to $T_i$ and $T_i'$, for each $i \in [t]$ let $T_i^*$ be the subtree of $T^*$ rooted at $w_i$ that avoids other vertices of $T''$. Consider a leaf $\ell \ne w_i$ of $T_i^*$, by minimality of $T^*$ there exists $j \in [r]$ such that $\sigma^*(\ell) \in F_j$, and moreover, $|\sigma^*(V(T)) \cap F_j| = k_j$, otherwise removing $\ell$ from $T^*$ would still give a valid solution. This implies that the total number of leaves in $\{T_i^*\}_{i \in [t]}$ that are not in $T''$ is at most $k = k_1 + \cdots + k_r$. Let $a_i^*$ be the number of leaves in $T_i^*$ not counting $w_i$, by the above $a_1^* + \cdots + a_t^* \le k$. Therefore at a certain step the algorithm considered the same composition $(a_1, \ldots, a_t)$, i.e., for each $i \in [t]$, $a_i = a_i^*$. Since $T_i^*$ is a tree with $a_i$ leaves, and the distance from each to the root does not exceed $\diam(T)$, as $T_i^*$ is a subtree of $T$, the algorithm also considered the choice $T_i' = T_i^*$, for each $i \in [t]$. At this step of the algorithm, the constructed tree $T'$ is exactly $T^*$, and since $T_i^*$ is a subtree of $T$ rooted at $w_i$ and avoiding $T''$, the verification correctly determined that $T'$ is a valid choice for a subtree. Thus the algorithm of \Cref{lemma:full_tree_color_coding} is queried with the subtree $T'$ at the respective step of the algorithm, and since $\sigma^*$ is an isomorphism of $T^* = T'$ into $G$ with the desired properties, the algorithm correctly identified the given instance as a yes-instance.

    It remains to analyze the running time of the algorithm. Since $T''$ is the minimal subtree of $T$ containing the selected $|S|$ vertices, its size $t$ is at most $|S| \cdot \diam(T)$. The number of choices for the composition $a_1 + \cdots + a_t \le k$ is at most $2^{\Oh(k + t)}$. The rooted tree $T_i'$ has $a_i$ leaves and at most $a_i \cdot \diam(T) + 1$ vertices, thus the number of choices is at most $2^{\Oh(a_i \cdot \diam(T))}$, resulting in $2^{\Oh(p \cdot \diam(T))} \cdot \polyn$ choices for the tree $T'$, as $p = |S| + k$. Verifying that the tree $T_i'$ is a rooted subtree of $T_i$ takes time $2^{\Oh(a_i \cdot \diam(T))}$ by \Cref{col:color_coding_subtree}, which is still under $2^{\Oh(p \cdot \diam(T))} \cdot \polyn$ in total. Finally, the size of $T'$ is at most $|V(T'')| + \sum_{i = 1}^t a_i \cdot \diam(T) \le p \cdot \diam(T)$, thus the algorithm of \Cref{lemma:full_tree_color_coding} takes time $2^{\Oh(p \cdot \diam(T))} \cdot \polyn$. Therefore, the total running time is bounded by $2^{\Oh(p \cdot \diam(T))} \cdot \polyn$.
\end{proof}

\subsection{Extending isomorphism of $k - 1$ leaves}
As the next preparatory step, we show another general tool for constructing isomorphisms. Namely, if we manage to find a partial isomorphism of a subtree of $T$ that contains $k - 1$ leaves, while leaving sufficiently many neighbors of the leaves' parents unoccupied, such an isomorphism can always be extended to the isomomorphism of the whole tree $T$.

\begin{lemma}\label{lemma:multi_saved_k_neighbours}
Let $G$ be a graph and $T$ be a tree on $\delta(G)+k$ vertices. 
	Let $L$ be a set of $k - 1$ leaves of $T$ and let $W:=N_T(L)$ be the set of vertices adjacent to $L$ in $T$.
	Let $T'$ be a  subtree of $T$ such that $W\subseteq V(T')\subseteq V(T-L)$.
	If there exists a subgraph isomorphism $\sigma: V(T')\to V(G)$ such that for each $w\in W$ $$|\Ima\sigma\setminus N_G[\sigma(w)]|\ge \ndef(\sigma(w)),$$
	then $G$ contains $T$ as a subgraph. 
\end{lemma}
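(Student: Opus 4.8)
The goal is to extend the given isomorphism $\sigma$ of $T'$ into $G$ to an isomorphism of all of $T$. Since $V(T') \supseteq W = N_T(L)$ and $V(T') \subseteq V(T - L)$, we must map the vertices of $T \setminus V(T')$, which splits into two groups: the $k-1$ leaves of $L$, and the other vertices of $T - L$ that are not in $T'$. First I would deal with the ``non-$L$'' vertices: the subtree $T - L$ has exactly $\delta(G)+k-(k-1)=\delta(G)+1$ vertices, and $T'$ is a connected subtree of it, so by \Cref{prop:chvatal-generalized} we can extend $\sigma$ to a subgraph isomorphism $\hat\sigma: V(T-L) \to V(G)$. The only subtlety is that \Cref{prop:chvatal-generalized} extends isomorphisms of trees on at most $\delta(G)+1$ vertices, which $T-L$ satisfies exactly, so this step goes through directly.

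\textbf{Mapping the leaves.} Now $|\Ima\hat\sigma| = \delta(G)+1$, and it remains to map the $k-1$ leaves of $L$ to distinct vertices of $V(G) \setminus \Ima\hat\sigma$, each leaf $\ell$ going to a neighbor of $\hat\sigma(w)$ where $w = N_T(\ell) \in W$. I would process the leaves one at a time, maintaining the invariant that the partial isomorphism built so far, call it $\sigma_j$ after placing $j$ leaves, satisfies $|\Ima\sigma_j \setminus N_G[\sigma(w)]| \ge \ndef(\sigma(w)) - (\text{number of leaves already placed at } w)$ for each $w \in W$ — actually the cleaner bookkeeping is: for each $w$, the number of neighbors of $\sigma(w)$ in $G$ that are already occupied is at most $|N_G(\sigma(w))| - (\text{leaves of } L \text{ at } w \text{ still to be placed})$. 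The crux of the argument: when we want to place a leaf $\ell$ at $w$, the image $\sigma(w)$ has $\deg_G(\sigma(w))$ neighbors in $G$; the vertices already used are $\Ima\hat\sigma$ (size $\delta(G)+1$) plus the at most $k-2$ leaves placed so far, but the hypothesis tells us that $\ge \ndef(\sigma(w))$ vertices of $\Ima\hat\sigma$ lie \emph{outside} $N_G[\sigma(w)]$, hence at most $(\delta(G)+1) - \ndef(\sigma(w)) - 1 = \delta(G) - \ndef(\sigma(w))$ vertices of $\Ima\hat\sigma$ are neighbors of $\sigma(w)$. Combining with the definition $\ndef(\sigma(w)) = \max\{(\delta(G)+k-1)-\deg_G(\sigma(w)), 0\}$: if $\sigma(w)$ is deficient, this gives at most $\deg_G(\sigma(w)) - (k-1)$ occupied neighbors from $\hat\sigma$, leaving at least $k-1$ free neighbors, enough to absorb all leaves at $w$; if non-deficient, $\deg_G(\sigma(w)) \ge \delta(G)+k-1$, so even with all $\delta(G)$ of its possibly-occupied neighbors from $\hat\sigma$ plus up to $k-2$ other leaves, there remain $\ge 1$ free neighbors at each placement step.

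\textbf{The main obstacle.} The one genuinely delicate point is the counting across \emph{all} of $W$ simultaneously: a single vertex $u$ of $G$ could be the image of several vertices of $W$, or a leaf placed at $w_1$ could occupy a neighbor of $\sigma(w_2)$ for a different $w_2 \in W$, so the ``free neighbor'' budgets are not independent. The clean way to handle this, which I would carry out, is a global Hall-type / greedy argument: order the placements so that we never overspend — formally, since $|L| = k-1$ and each of the (at most $k-1$, possibly fewer) vertices $w \in W$ with $c_w \ge 1$ leaves attached needs, in the worst case, to donate $c_w$ free neighbors, while the hypothesis guarantees $k-1$ free neighbors for deficient images and the degree surplus covers the non-deficient ones, one shows by a direct induction on the number of placed leaves that a valid choice always exists (the total shortfall introduced by previously-placed leaves acting on $\sigma(w)$ is at most $k-1-c_w$, exactly balanced against the slack). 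I would phrase this as: at each step pick any unplaced leaf $\ell$ at some $w$, observe $|N_G(\sigma(w)) \setminus (\Ima\sigma_{j})| \ge (\text{leaves at } w \text{ remaining}) \ge 1$ using the per-$w$ invariant, place $\ell$ there, and update. This completes the isomorphism of $T$ into $G$.
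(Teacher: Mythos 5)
Your proof is correct and follows essentially the same route as the paper: first extend $\sigma$ to a subgraph isomorphism of $T-L$ via \Cref{prop:chvatal-generalized} (using $|V(T-L)|=\delta(G)+1$), observe that the hypothesis guarantees each $\sigma(w)$, $w\in W$, has at least $k-1$ unoccupied neighbors in $G$, and then place the $k-1$ leaves of $L$ greedily, noting that each placement can reduce any one budget by at most one. The ``Hall-type'' framing in your last paragraph is more machinery than is actually needed — the argument is a plain greedy one, exactly as in the paper — but the count you carry out is the right one and the invariant you state is maintained.
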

\begin{proof}
	We first extend $\sigma$ to an isomorphism of $T-L$ into $G$.
	This is always possible by \Cref{prop:chvatal-generalized} since $|V(T-L)|=\delta(G)+1$.
	Now we have an isomorphism of $T-L$ into $G$, and it remains to map the $k - 1$ vertices of $L$.
	
	Let vertices in $L$ be $\ell_1, \ell_2, \ldots, \ell_{k - 1}$ and let the only neighbour of $\ell_i$ in $T$ be $w_i \in W$.
	Since the isomorphism occupies at least $\ndef(\sigma(w_i))$ non-neighbours of $\sigma(w_i)$ in $G$, at least $k - 1$ neighbours of $\sigma(w_i)$ are not occupied, for each $i \in [k - 1]$.
	First, take any of the free neighbours of $\sigma(w_1)$ as the image of $\ell_1$.
	Each $\sigma(w_i)$ still has at least $k-2$ neighbours that are not occupied; take any of the free neighbours of $\sigma(w_2)$ as the image of $\ell_2$.
	Repeat this process for each $j \in \{3,\ldots, k - 1\}$: at the $j$-th step, each $\sigma(w_i)$ has at least $k-j$ non-occupied neighbours, so there is always a free neighbour for the image of $\ell_j$.
\end{proof}

\subsection{\probSTI when there is a vertex with many leaves}

Finally, we work directly towards the proof of Theorem~\ref{thm:large_leaf_degree}.
We state the main combinatorial observation behind the theorem next.

\begin{lemma}
    \label{lemma:2k_path}
    Let $G$ be a graph and $T$ be a tree on $\delta(G)+k$ vertices.  Suppose that $T$ contains a vertex 
 $s \in V(T)$, such that $s$ has $k - 1$ leaf neighbors, and such that there is a path of length $3k$ in $T$ starting with $s$. If 
 $\delta(G) \ge 11k^2$, and $|V(G)| \ge |V(T)|$, then $G$ contains $T$ as a subgraph. 
%
%
\end{lemma}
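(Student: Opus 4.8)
The plan is to follow Idea~II: map the path of length $3k$ in $T$ starting at $s$ onto a carefully chosen walk in $G$ that begins at the image $u$ of $s$ and repeatedly leaves $N_G[u]$, so that by the time we invoke \Cref{prop:chvatal-generalized} to extend the mapping to $T - L$ (where $L$ is the set of $k-1$ leaf-neighbors of $s$), at least $\ndef(u) \le k - 1$ vertices of the image already lie outside $N_G[u]$; then \Cref{lemma:multi_saved_k_neighbours}, applied with $W = \{s\}$, finishes the job. The real work is choosing $u$ and building the walk. First I would fix a vertex $u \in V(G)$ to serve as the image of $s$. The deficiency $\ndef(u)$ is at most $k-1$, and if $\deg_G(u) \ge \delta(G) + k - 1$ then $\ndef(u) = 0$ and \Cref{prop:chvatal-generalized} plus the free-neighbor count already suffice, so assume $\ndef(u) > 0$. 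Now I would split into two cases depending on whether $G$ admits a ``good'' starting vertex $u$, meaning one for which $N_G[u]$ contains at least, say, $3k$ vertices each having more than $k-1$ neighbors outside $N_G[u]$.

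\textbf{Case 1: a good vertex $u$ exists.} Here I build the path greedily in blocks of three: from the current vertex in $N_G[u]$, step to one of its $\ge k$ neighbors outside $N_G[u]$, then from there step to any unused neighbor (which may be anywhere), then step back into $N_G[u]$ via some edge — using that $u$ has $\delta(G) \ge 11k^2$ neighbors and only $3k$ path-vertices have been placed, so a fresh landing vertex in $N_G(u)$ is always available, and similarly the $\ge 3k$ special vertices of $N_G[u]$ guarantee we can always be sitting on a vertex with an external neighbor. Each block of three path-vertices contributes at least one image vertex outside $N_G[u]$; after $k-1$ blocks (using $3(k-1) < 3k$ path vertices) we have placed $\ge k - 1 \ge \ndef(u)$ vertices outside $N_G[u]$ while keeping the whole partial image inside a set of size $\le 3k + \delta(G) < |V(G)|$ of room to extend. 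Then extend via \Cref{prop:chvatal-generalized} and apply \Cref{lemma:multi_saved_k_neighbours}.

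\textbf{Case 2: no good vertex.} Then for every $u$, all but at most $3k$ vertices of $N_G[u]$ have $\le k-1$ neighbors outside $N_G[u]$; a counting/degree argument should show that $G - N_G[u]$ has minimum degree at least $k$ — indeed each vertex $w \notin N_G[u]$ has $\deg_G(w) \ge \delta(G) \ge 11k^2$, of which at most (something like) $3k \cdot (k-1) + O(k^2)$ edges can go into $N_G[u]$, leaving $\ge k$ neighbors in $G - N_G[u]$. A graph of minimum degree $\ge k$ contains a path of length $\ge k$ starting at any prescribed vertex, so pick any neighbor $v \in N_G(u)$, any $w \in N_G(v) \setminus N_G[u]$ (exists since $|V(G)| > \delta(G) + 1$ and by connectivity / a counting argument on $v$'s degree), and append a length-$(k-2)$ path inside $G - N_G[u]$ from $w$; this gives a path in $G$ of length $k$ starting at $u$ whose last $k-1$ vertices avoid $N_G[u]$. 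Mapping the first $k+1$ vertices of the length-$3k$ path in $T$ onto this path places $k - 1 \ge \ndef(u)$ image vertices outside $N_G[u]$, and we finish as before.

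\textbf{Main obstacle.} The delicate point is making the two cases genuinely exhaustive with the right threshold ($3k$ special vertices, minimum degree bound $k$) and checking the arithmetic that $\delta(G) \ge 11k^2$ is large enough to (a) always find fresh landing vertices in $N_G(u)$ throughout the $3k$-step walk in Case~1 and (b) force $\delta(G - N_G[u]) \ge k$ in Case~2 — this is where the constant $11$ is presumably spent. A secondary subtlety is ensuring the partial image never collides with itself (injectivity of the walk), which needs the block construction to always pick \emph{unused} vertices; the slack $\delta(G) - 3k \gg 0$ handles this, but it must be tracked carefully.
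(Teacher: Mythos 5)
Your Case~1 follows the spirit of the paper's ``expanding case,'' but your Case~2 has a genuine gap: the claim that ``no good vertex'' forces $\delta(G-N_G[u])\ge k$ does not follow. The hypothesis only restricts how many vertices \emph{inside} $N_G[u]$ have many neighbors outside $N_G[u]$; it places no bound on how many neighbors in $N_G[u]$ a vertex $w\notin N_G[u]$ can have. Indeed, $w$ could have all $\delta(G)$ of its neighbors in $N_G[u]$ while every one of those neighbors is non-expanding (each sees $w$ as one of its $\le k-1$ external neighbors) — so $\deg_{G-N_G[u]}(w)$ could even be $0$. Your attempted edge-count ``at most $3k\cdot(k-1)+O(k^2)$ edges from $w$ into $N_G[u]$'' conflates a global bound on edges leaving $N_G[u]$ with a per-vertex bound; no such per-vertex bound exists. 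The paper instead handles the dense case by branching on the diameter of $G$: it shows every non-expanding vertex in $N_G(v)$ has $\ge\delta(G)-4k$ non-expanding neighbors \emph{inside} $N_G(v)$ (so the set $B$ of non-expanding neighbors has $\delta(G[B])\ge\delta(G)-4k\ge k$), and if $\diam(G)\ge3$ picks $u,v$ at distance $3$ (so $N_G(u)\cap N_G(v)=\emptyset$) and threads the $(k+1)$-vertex prefix through a path $u-a-b$ followed by $k-2$ vertices inside $B\subseteq V(G)\setminus N[u]$; if $\diam(G)\le2$ it runs two further sub-analyses on common neighborhoods. None of that appears in your proposal.

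A minor secondary issue in your Case~1: after stepping outside $N_G[u]$ and then to ``any unused neighbor,'' you propose to ``step back into $N_G[u]$,'' but the current vertex may have no free neighbor in $N_G[u]$. The paper sidesteps this by \emph{not} forcing a return: from an Outer vertex it simply takes an arbitrary unused neighbor (which may again be Outer), and the invariant is the weaker but sufficient one that \textbf{Non-expanding}$\to$\textbf{Expanding}$\to$\textbf{Outer}, so at least every third mapped vertex is Outer regardless of whether the walk re-enters $N_G[v]$.
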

Intuitively, if the conditions of the lemma are not satisfied, then either $\delta(G)$ and so $|V(T)|$ are bounded by $\Oh(k^2)$, or the diameter of $T$ is bounded by $\Oh(k)$. In this case \Cref{col:color_coding_subtree}  or \Cref{lemma:general_color_coding_algo} can be used to find a suitable  (partial) isomorphism of $T$, thus Lemma~\ref{lemma:2k_path} is the main obstacle towards showing Theorem~\ref{thm:large_leaf_degree}.

\begin{proof}[Proof of Lemma~\ref{lemma:2k_path}]
    By~\Cref{lemma:multi_saved_k_neighbours}, it suffices to find a suitable subtree $T'$ of $T$. 
    If there is a vertex $v \in V(G)$ of degree at least $\delta(G) + k - 1$, then $T'$ is the subtree containing the single vertex $s$, and the isomorphism maps $s$ to $v$. Since $\ndef(v) = 0$ in this case, \Cref{lemma:multi_saved_k_neighbours} is immediately applicable.
    Thus in the following we assume $\Delta(G) < \delta(G) + k - 1$.

    We now consider two cases based on whether there is a vertex in $G$ that has a good ``expansion'' out from its neighborhood. For a vertex $v \in G$, we call a vertex in $N_G(v)$ \emph{expanding} if it has at least $k - 1$ neighbors outside of $N_G(v)$. Intuitively, if we have sufficiently many expanding vertices in $N_G(v)$, we can always embed the long path of $T$ into $G$ by mapping $s$ to $v$ and then going outside of $N_G(v)$ sufficiently often. In the complement case, we have that no vertex has many expanding vertices and thus the neighborhood of every vertex is extremely dense, which is helpful for finding a suitable embedding in a different way. We now move on to the details of both cases.

    \subparagraph*{Expanding case.} Let $v$ be a vertex in $G$ that has at least $3k$ expanding vertices in $N_G(v)$. Let $P = p_0-p_1-\dots -p_t$ be a path in $T$ where $t = 3k$ and $p_0 = s$. We construct a (partial) isomorphism $\sigma: P \to G$ that we will later use to invoke \Cref{lemma:multi_saved_k_neighbours} on. We start by setting $\sigma(p_0) = v$, and then continue defining $\sigma$ inductively. Let $i \in \{0, 1, \ldots, t - 1\}$, we consider three cases based on $\sigma(p_i)$:
    \begin{description}
        \item[Outer] $\sigma(p_i) \notin N_G(v)$, then we set $\sigma(p_{i + 1})$ to be an arbitrary neighbor of $\sigma(p_i)$ that is not yet used by $\sigma$;
        \item[Non-expanding] $\sigma(p_i) \in N_G(v)$ and $\sigma(p_i)$ is not expanding in $N_G(v)$, then we set $\sigma(p_{i + 1})$ to be an arbitrary neighbor of $\sigma(p_i)$ in $N_G(v)$ that is expanding and not previously used by $\sigma$;
        \item[Expanding] $\sigma(p_i) \in N_G(v)$ and $\sigma(p_i)$ is expanding in $N_G(v)$, then we set $\sigma(p_{i + 1})$ to be an arbitrary neighbor of $\sigma(p_i)$ outside of $N_G(v)$ that is not previously used by $\sigma$.
    \end{description}
    In case the respective rule out of the above is not applicable (i.e., a suitable next vertex does not exist), we stop the procedure and let $t'$ be the last index such that $p_{t'}$ is assigned by $\sigma$; we also let $P' = p_0p_1\ldots p_{t'}$ be the respective subpath of $P$.
By construction, $\sigma$ is an isomorphism of $P'$ into $G$. We now argue that $\sigma$ maps at least $k - 1$ vertices of $P'$ outside $N[v]$.

First, by construction of $\sigma$, every \textbf{Expanding} vertex is necessarily followed by an \textbf{Outer} vertex, and every \textbf{Non-expanding} vertex is followed by an \textbf{Expanding} vertex. Thus, at least every third vertex in the sequence is \textbf{Outer}, and starting from $i \ge 1$ none of them coincide with $v$. Therefore, if $t' = t$, then the claim holds as $t \ge 3k$. Otherwise, $t' < t$ and the respective rule is not applicable to $\sigma(p_{t'})$. Observe that $\sigma(p_{t'})$ is not \textbf{Outer} as any neighbor of $\sigma(p_{t'})$ can be chosen in this case, and $\deg \sigma(p_{t'}) \ge \delta(G) \ge 3k + 1 = |V(P)|$. Assume $\sigma(p_{t'})$ is \textbf{Expanding}, then there are no more neighbors of $\sigma(p_{t'})$ outside of $N_G(v)$ not taken by $\sigma$. By definition of expanding vertices, there are at least $k - 1$ such neighbors and thus $\sigma$ maps at least $k - 1$ vertices outside of $N[v]$, fulfilling the claim. Finally, assume $\sigma(p_{t'})$ is \textbf{Non-expanding}. Then all of its expanding neighbors in $N_G(v)$ are taken by $\sigma$. There are at least $\delta(G)$ neighbors of $\sigma(p_{t'})$ in $G$, and at least $\delta(G) - k + 1$ of them are in $N_G(v)$, since $\sigma(p_{t'})$ is not expanding. In total, there are at most $\delta(G) + k - 2$ vertices in $N_G(v)$ and at least $3k$ of them are expanding, thus the number of expanding neighbors of $\sigma(p_{t'})$ is at least $k - 1$. Since all of them are taken by $\sigma$ and each is followed by a vertex outside of $N[v]$, $\sigma$ takes at least $k - 1$ vertices outside of $N[v]$ in this case as well.

It remains to observe that invoking \Cref{lemma:multi_saved_k_neighbours} on $\sigma$ gives the desired isomorphism, as at least $k - 1 \ge \ndef(v)$ vertices outside of $N[v]$ are taken by the constructed partial isomorphism.

    \subparagraph*{Dense case.}
    Since the conditions for the expanding case are not satisfied, we can assume that for every $v \in V(G)$, there are less than $3k$ expanding vertices in $N_G(v)$. By the pigeon-hole principle, this implies that every non-expanding vertex in $N_G(v)$ has at least $\delta(G) - 4k$ non-expanding neighbors in $N_G(v)$, as it has at least $\delta(G)$ neighbors in total, at most $k - 1$ outside of $N_G(v)$ since it is not expanding, and less than $3k$ expanding vertices exist in $N_G(v)$.
    We further subdivide the dense case into two subcases depending on the diameter of $G$.

    \textit{Diameter of $G$ is at least $3$.} 
    Consider vertices $u, v \in V(G)$ at distance $3$ from each other. Take vertices $a \in N_G(u)$, $b \in N_G(v)$ with $ab \in E(G)$, i.e., $u-a-b-v$ 
    is the shortest $(u,v)$-path. Denote by $B$ the set of non-expanding neighbors of $v$, observe that $N_G(v) \cap N_G(u) = \emptyset$, as otherwise there is a shorter path between $u$ and $v$, thus $B \cap N_G(u) = \emptyset$. Also, by the starting assumption of the dense case, $\delta(G[B]) \ge \delta(G) - 4k \ge k$. Let $P'$ be the prefix of $P$ on $k + 1$ vertices, i.e., $P'$ is also a path in $T$ starting with $s$. Construct the isomorphism $\sigma : V(P') \to V(G)$ in the following way. First, map $s$ to $u$, the second vertex of the path to $a$, and the third vertex to $b$. Then greedily map the remaining $k - 2$ vertices inside $B$, since $\delta(G[B]) \ge k$ and exactly $k - 1$ vertices are to be mapped inside $B$, finding the next unoccupied neighbor is always possible. Finally, by Lemma~\ref{lemma:multi_saved_k_neighbours}, $\sigma$ can be extended to an isomorphism of the whole tree $T$ into $G$.

    \textit{Diameter of $G$ is at most $2$.} 
    First, assume there exist two non-adjacent vertices $u, v \in V(G)$ with $|N_G(u) \cap N_G(v)| < 6k$.
    Under this assumption, we always find an isomorphism similarly to the previous case of $\diam(G) \ge 3$. Let $B$ be the set of non-expanding vertices in $N_G(u)$. Again, $\delta(G[B]) \ge \delta(G) - 4k$, and since $|N_G(u) \cap N_G(v)| < 6k$, $\delta(G[B \setminus N_G(v)]) > \delta(G) - 10k \ge k$. We constuct an isomorphism of $P'$, a prefix of $P$ with $k + 1$ vertices, to $G$ by mapping $s$ to $v$, the next vertex to an arbitrary common neighbor of $u$ and $v$, the third vertex to $u$, and then procceding greedily inside $B \setminus N_G(v)$. Since the last $k - 1$ vertices of $P'$ are all outside of $N[v]$, by Lemma~\ref{lemma:multi_saved_k_neighbours}, we can extend this partial isomorphism to an isomorphism of $T$.

    Otherwise, for every two non-adjacent $u, v \in V(G)$, there exists a vertex $w\in N_G(u) \cap N_G(v)$ that is non-expanding for both $u$ and $v$.
    Let $u \in V(G)$ be the vertex achieving $\deg u = \delta(G)$, and let $v \in V(G)$ be another vertex that is non-adjacent to $u$. We can assume $v$ exists since otherwise $|V(G)| = \deg u + 1 = \delta(G) + 1 < |V(T)|$ which contradicts the conditions of the lemma. Let $w$ be a non-expanding vertex in $N_G(u) \cap N_G(v)$ for both $u$ and $v$. This means that $|N_G(w) \cap N_G(u)| \ge \delta(G) - k$ and $|N_G(w) \cap N_G(v)| \ge \delta(G) - k$. Also, $|N_G(u)|, |N_G(v)| \le \Delta(G) < \delta(G) + k$, thus both $|N_G(u) \setminus N_G(w)|$ and $|N_G(v) \setminus N_G(w)|$ are at most $2k$. Therefore, $|N_G(w)| \ge |N_G(u) \cup N_G(v)| - 4k$, and since $|N_G(w)| \le \Delta(G) < \delta(G) + k$,  $|N_G(u) \cup N_G(v)| < \delta(G) + 5k$. Since $|N_G(v)| \ge \delta(G)$, we get $|N_G(u) \setminus N_G(v)| < 5k$.
    
    Now, assume there exist $k - 1$ distinct non-neighbors of $u$, by the above each of them is not adjacent to less than $5k$ vertices in $N_G(u)$, thus in total less than $5k^2$ vertices of $N_G(u)$ are not adjacent to some of the selected $k - 1$ non-neighbors of $u$. This means that the number of expanding vertices in $N_G(u)$ is at least $|N_G(u)| - 5k^2 + 1 > 3k$, since $|N_G(u)| = \delta(G) \ge 8k^2$, which is a contradiction as we assume to not be in the expanding case.

Finally, we get that there are less than $k - 1$ non-neighbors of $u$. The whole graph $G$ therefore contains at most $\delta(G) + k - 1$ vertices, as $|N_G(u)| = \delta(G)$, which means that $|V(G)| < |V(T)| = \delta(G) + k$, contradicting the conditions of the lemma.

\end{proof}
\subsection{Proof of \Cref{thm:large_leaf_degree}}
We now complete the proof of the main result of this section by using Lemma~\ref{lemma:2k_path}.

\begin{proof}[Proof of \Cref{thm:large_leaf_degree}]
    First, if $\delta(G) < 11k^2$, then $|V(T)| \le 11k^2 + k$, and by using the algorithm from \Cref{col:color_coding_subtree} , we process the instance in time $2^{\Oh(k^2)} \cdot \polyn$.

    Then, assume there is no path of length $3k$ starting from $s$ in $T$, which implies that $\diam(T) < 6k$.
    In this case, we reduce the problem to \probHitIso. Specifically, we fix $S = \{s\}$, and try all possible variants of the isomorphism $\kappa: \{s\} \to V(G)$, which is equivalent to fixing the image $v \in V(G)$ of $s$, i.e., $\kappa(s) = v$. We also set $F_1 = V(G) \setminus N[v]$, and $k_1 =\ndef(v)$. We now invoke the algorithm given by \Cref{lemma:general_color_coding_algo} on the instance $(G, T, \kappa, \{F_1\}, \{k_1\})$ of \probHitIso. We report the yes-instance if for some choice of $v$ the isomorphism is found, and no-instance otherwise. The running time is bounded by $2^{\Oh(k^2)} \cdot \polyn$ by \Cref{lemma:general_color_coding_algo}. It remains to show that this procedure always returns the correct answer.

    In one direction, let $\sigma' : T' \to V(G)$ be a solution to $(G, T, \kappa, \{F_1\}, \{k_1\})$, i.e., $T'$ is a subtree of $T$, $\sigma'$ is an isomorphism that maps $s$ to $v$, and at least $\ndef(v)$ vertices of $T'$ are mapped outside of $N[v]$. We can assume that $T'$ contains no leaves adjacent to $s$---removing them from $T'$ does not change the property of the solution, i.e., $T'$ is still connected and an adjacent leaf of $s$ could not be mapped outside of $N[v]$. We now let $L$ be an arbitrary set of $k - 1$ leaves adjacent to $s$ in $T$, and invoke Lemma~\ref{lemma:multi_saved_k_neighbours} on $\sigma'$. Since by the above $|\Ima\sigma'\setminus N[v]| \ge \ndef(v)$, this gives an isomorphism of $T$ to $G$.

    In the other direction, let $\sigma$ be an isomorphism of $T$ into $G$. Let $v = \sigma(s)$, $\sigma$ maps at least $\ndef(v)$ vertices outside $N[v]$, since $\deg_G( v) \le |V(T)| - 1 - \ndef(v)$ by definition of $\ndef(v)$. Therefore $\sigma$ is a solution to the instance $(G, T, \kappa, \{F_1\}, \{k_1\})$ of \probHitIso constructed for $\kappa(s) = v$.

    Finally, if neither of the above cases occurs, then $\delta(G) \ge 11k^2$, and there is a path of length $3k$ starting with $s$ in $T$. By Lemma~\ref{lemma:2k_path}, the isomorphism of $T$ to $G$ always exists and can be constructed in polynomial time.
\end{proof}

\section{Small diameter trees and separable $G$}\label{section:smalldiameter}

In this section, we prove the following algorithmic result.
It allows to handle trees with small leaf-degree, but the scope of its application is restricted to the specific structure of $T$ and $G$. 

\begin{theorem}\label{thm:densest_part}
	Let $p\ge 1$ and $k\geq 3$  be integers.
	There is a randomized algorithm that for a given graph $G$,  a tree $T$ on $\delta(G)+k$ vertices such that
	\begin{enumerate}
		\item $\delta(G)\ge k^{3p+1}$,  
				\item there are no $k^{p}$-escape vertices in $G$, 
		\item $\ld(T)<k$, and
		\item $T$ is not \separable{$k^{p}$},
	\end{enumerate}
	determines whether  $G$ contains $T$ as a subgraph in time $2^{{k^{\Oh(p)}}}\cdot\polyn$ with probability at least $\frac{1}{2}$.
	The algorithm is one-sided error and reports no false-positives.
\end{theorem}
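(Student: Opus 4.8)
The plan is to combine a structural reduction with the algorithmic machinery for \probHitIso developed earlier. The overall strategy follows Idea~II: fix a set $W$ of at most $k-1$ vertices of $T$ whose neighborhoods contain at least $k-1$ leaves in total (this exists since $\ld(T)<k$ forces $T$ to have many leaves, while the bound on $\diam(T)$ coming from the absence of long paths gives $|V(T)|\le\diam(T)\cdot(\text{number of leaves})$, hence $T$ has $k^{\Omega(1)}$ leaves); then guess a correct partial mapping $\psi\colon W\to V(G)$; then test whether this partial mapping extends to a full subgraph isomorphism. First I would invoke \Cref{lemma:guess_a_mapping}-style random sampling: for each vertex $u\in V(G)$ used as a seed, run a polynomial-time randomized procedure that outputs a candidate mapping $\psi\colon W\to V(G)$; if $G$ contains $T$ then with probability $2^{-k^{\Oh(p)}}$ the output $\psi$ is the restriction of a genuine subgraph isomorphism. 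Repeating $2^{k^{\Oh(p)}}$ times amplifies the success probability to at least $\tfrac12$, and since we only ever certify a yes-answer by actually exhibiting an isomorphism, there are no false positives.

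The second ingredient is to turn ``does $\psi$ extend?'' into an instance of \probHitIso. Following \Cref{lemma:multi_leaf_algo}, fixing the images of $W$ reduces the existence of a subgraph isomorphism of $T$ into $G$ respecting $\psi$ to the existence of a partial isomorphism of a connected subtree $T'\supseteq W$ of $T$ that hits, for each $w\in W$, a prescribed number of vertices in the set $V(G)\setminus N_G[\psi(w)]$ (more precisely, a carefully chosen family of sets $F_i$ with thresholds $k_i$ that encode the combinatorial condition of \Cref{lemma:multi_saved_k_neighbours} made both necessary and sufficient). Since $\sum_i k_i \le k$ and $|S|=|W|\le k-1$, we have the parameter $p_{\mathrm{HitIso}}=|S|+\sum_i k_i = \Oh(k)$ in \Cref{lemma:general_color_coding_algo}. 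The key point is that $\diam(T)$ is bounded: because $T$ has no long path (which is exactly the hypothesis driving us into this case, via the earlier sections) and is not \separable{$k^p$}, its diameter is $k^{\Oh(p)}$. Hence \Cref{lemma:general_color_coding_algo} solves the resulting \probHitIso instance in time $2^{\Oh(p_{\mathrm{HitIso}}\cdot\diam(T))}=2^{k^{\Oh(p)}}\cdot\polyn$.

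The technical heart — and the step I expect to be the main obstacle — is establishing the exact equivalence in \Cref{lemma:multi_leaf_algo}: that respecting the fixed mapping $\psi$ on $W$, the whole tree embeds iff some subtree hits the right sets. One direction (sufficiency) is \Cref{lemma:multi_saved_k_neighbours} almost verbatim, adapted so that different $w\in W$ may require different numbers of saved neighbors depending on $\deg_G(\psi(w))$ and on how many leaves of $T$ are attached to each $w$; the greedy leaf-placement argument (process the $\ell_i$ in an order so that after each placement every $\psi(w)$ still has enough free neighbors) carries over, but one must be careful that the bookkeeping of "how many free neighbors $\psi(w)$ needs" is a threshold that can be encoded as a hitting-set constraint. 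The reverse direction (necessity) is where the structural hypotheses — no $k^p$-escape vertex, $T$ not \separable{$k^p$}, $\delta(G)\ge k^{3p+1}$ — must be used: one takes a hypothetical isomorphism $\sigma$ of $T$, restricts it to a minimal subtree witnessing the hitting constraints, and argues via the absence of escape vertices and separability that this subtree has bounded size (bounded by $k^{\Oh(p)}$), so that it is found by the color-coding search. The other obstacle is the correctness and success probability of the sampling procedure for $\psi$; here the role of $\delta(G)\ge k^{3p+1}$ is to guarantee enough ``room'' so that a uniformly (or cleverly) chosen vertex lands in the correct place with probability bounded below by $1/\delta(G)^{\Oh(1)}$ after conditioning appropriately, and the role of the no-escape-vertex and non-separability hypotheses is to confine the relevant part of any solution to a region of $G$ that the sampling can actually see. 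Assembling these two lemmas with the standard amplification then yields the claimed one-sided-error randomized algorithm in time $2^{k^{\Oh(p)}}\cdot\polyn$.
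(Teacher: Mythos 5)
Your proposal follows the paper's proof essentially verbatim in structure: \Cref{lemma:guess_a_mapping} supplies the random sampling of images for a set of $k-1$ leaf-adjacent vertices, \Cref{lemma:multi_leaf_algo} decides extendibility of a fixed such mapping via a reduction to \probHitIso in time $2^{k^{\Oh(1)}\cdot\diam(T)}\cdot\polyn$, the bound $\diam(T)=k^{\Oh(p)}$ comes from $T$ not being \separable{$k^{p}$} (every tree is \separable{$\diam(T)/2$}), and standard amplification finishes. One factual correction to your sketch: the no-escape-vertex and $\delta(G)\ge k^{3p+1}$ hypotheses are consumed entirely inside \Cref{lemma:guess_a_mapping} — they bound the probability that a random $(k-1)$-subset of $N_G(u)$ consists of images of the guessed family $\mathcal{W}$ — whereas \Cref{lemma:multi_leaf_algo} holds unconditionally for any $G$ and $T$, with running time governed solely by $\diam(T)$, so its necessity direction uses none of those structural hypotheses.
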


Before we proceed with the proof of \Cref{thm:densest_part}, we establish two key ingredients required for the proof.

\subsection{Extending leaf-adjacent mappings}\label{subsec:hard_mapping}

The main idea of the following lemma is close to \Cref{lemma:multi_saved_k_neighbours}.
However, \Cref{lemma:multi_saved_k_neighbours} gives only a sufficient condition for the existence of a subgraph isomorphism.
The proof of the following lemma requires more precise and sophisticated counting neighbors of sets.

\begin{lemma}\label{lemma:multi_leaf_algo}
	Let $G$ be a graph and let $T$ be a tree consisting of $\delta(G)+k$ vertices for $k\ge 1$.
	Let $S$ be a set of $k-1$ leaf-adjacent vertices of $T$.
	There is an algorithm that, given $G$ and $T$, a mapping $\kappa: S\to V(G)$, determines whether there exists a subgraph isomorphism from $T$ into $G$ respecting $\kappa$.
	The running time of the algorithm is $2^{k^{\Oh(1)}\cdot\diam(T)}\cdot\polyn$.
\end{lemma}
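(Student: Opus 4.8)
\textbf{Proof plan for \Cref{lemma:multi_leaf_algo}.}
The plan is to reduce the problem to a bounded number of instances of \probHitIso, so that \Cref{lemma:general_color_coding_algo} can be invoked with parameter $p = k^{\Oh(1)}$, giving the claimed running time $2^{k^{\Oh(1)}\cdot\diam(T)}\cdot\polyn$. The starting point is the same as in \Cref{lemma:multi_saved_k_neighbours}: if $T'$ is the subtree of $T$ obtained by deleting all leaves adjacent to vertices of $S$, then by \Cref{prop:chvatal-generalized} any subgraph isomorphism of $T'$ into $G$ respecting $\kappa$ extends to an isomorphism of $T - L$ into $G$ (where $L$ is the set of deleted leaves), since $|V(T-L)| = \delta(G)+1$; and mapping $T - L$ always leaves exactly $k-1$ vertices of $G$ free. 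The crux is then a precise feasibility criterion: after $T-L$ is mapped, we must place the $|L| \le \ld(T)\cdot(k-1)$ leaves of $L$ onto the free vertices, each leaf going to a free neighbor of the image of its parent in $S$. This is exactly a bipartite matching / Hall-type condition, but the subtlety \Cref{lemma:multi_saved_k_neighbours} glosses over is that the condition depends on \emph{how many} free vertices land in each $N_G[\kappa(w)]$, and for different parents these neighborhoods overlap. So the real combinatorial statement I would prove is: a $\kappa$-respecting isomorphism of $T$ exists if and only if there is a $\kappa$-respecting isomorphism $\sigma$ of a connected subtree $T''$ (with $S \subseteq V(T'') \subseteq V(T-L)$) such that for every subset $A \subseteq S$, the number of vertices of $\Ima\sigma$ lying outside $\bigcup_{w \in A} N_G[\kappa(w)]$ is at least $\big(\sum_{w\in A} \deg_T(w)\text{-leaf-count}\big) - \big(\delta(G) - |\bigcup_{w\in A} N_G[\kappa(w)] \setminus \Ima\sigma|\big)$ — in other words, a Hall condition over the at most $2^{k-1}$ subsets of $S$ guaranteeing the final matching of $L$ into free neighborhoods succeeds.

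Concretely, the steps I would carry out are as follows. First, fix $T'' := T - L$ as the target subtree and observe that extending any $\kappa$-respecting isomorphism of $T''$ to $T$ amounts to finding a perfect matching saturating $L$ in the bipartite graph between $L$ and the set $R$ of $k-1$ vertices of $G$ unused by the chosen isomorphism, where $\ell \in L$ is joined to $r \in R$ iff $r \in N_G(\kappa(\mathrm{parent}(\ell)))$. By Hall's theorem this matching exists iff for every $A \subseteq S$ the leaves whose parents lie in $A$ have enough free neighbors, i.e. $|R \cap \bigcup_{w\in A} N_G(\kappa(w))| \ge \sum_{w \in A} (\text{number of leaves of } T \text{ attached to } w \text{ in } L)$. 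Since $|R| = k-1$ is a fixed quantity, $|R \cap \bigcup_{w\in A} N_G(\kappa(w))| = (k-1) - |R \setminus \bigcup_{w\in A} N_G(\kappa(w))|$, and $R \setminus \bigcup_{w\in A} N_G(\kappa(w))$ is precisely the set of vertices of $\Ima\sigma^+$ (the full $T-L$ isomorphism) that lie outside $\bigcup_{w\in A} N_G[\kappa(w)]$, minus the vertices of $T''$-image already in that region — a quantity controlled by $\Ima\sigma$ restricted to $T''$. Rewriting the Hall condition this way turns it into a lower bound, for each of the $\le 2^{k-1}$ sets $A$, on $|\Ima\sigma \cap F_A|$ where $F_A := V(G) \setminus \bigcup_{w\in A} N_G[\kappa(w)]$. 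That is exactly the shape of a \probHitIso constraint.

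Hence the second half of the proof is the reduction: set up a single \probHitIso instance with the mapping $\kappa$ already prescribed on $S$, with the family $\{F_A : A \subseteq S,\ A \neq \emptyset\}$ of at most $2^{k-1}$ target sets, and with thresholds $k_A$ computed from the leaf-counts and a guessed value of $|\bigcup_{w\in A} N_G[\kappa(w)] \cap \Ima\sigma|$ — actually, to avoid the guessing, it is cleaner to note each $k_A$ depends only on $\kappa$ and the fixed combinatorics of $T$, so it can be computed directly. Then $p = |S| + \sum_A k_A \le (k-1) + 2^{k-1}\cdot(k-1) = k^{\Oh(1)}\cdot 2^{k}$, wait — this is $2^{\Theta(k)}$, not polynomial; but that is still of the form $k^{\Oh(1)}$? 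No. Here I must be careful: $\sum_A k_A$ could be as large as $2^{k-1}(k-1)$, so $2^{\Oh(p\cdot\diam T)}$ would be doubly exponential. The fix — and this is \textbf{the main obstacle I expect} — is to avoid summing over all $2^{k-1}$ subsets: instead observe that the Hall condition need only be checked on subsets $A$ that are "downward closed in the overlap structure", or better, reformulate the leaf-placement as a single flow/matching feasibility that, by LP-duality or by a direct greedy argument generalizing the one in \Cref{lemma:multi_saved_k_neighbours}, reduces to $O(|S|) = O(k)$ independent threshold constraints (one per vertex of $S$, after ordering the $\kappa(w)$ by a suitable laminar-like refinement of their neighborhoods). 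With only $O(k)$ targets each with threshold $\le k$, we get $p = k^{\Oh(1)}$ and \Cref{lemma:general_color_coding_algo} yields the stated bound. I would spend the bulk of the write-up justifying this collapse from $2^{k}$ Hall inequalities down to $k^{\Oh(1)}$ constraints — likely via the observation that since each leaf's parent is in $S$ and $|S| = k-1$ while only $k-1$ free slots exist, the matching polytope here is governed by a laminar family of at most $2|S|-1$ tight sets, so testing those suffices.
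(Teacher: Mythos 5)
Your high-level plan is the same as the paper's: reduce to \probHitIso, apply \Cref{lemma:general_color_coding_algo}, and recognize that the delicate part is turning the ``final leaf-placement'' feasibility condition into a bounded number of hit-set constraints. You also correctly diagnose the central obstacle --- a naive Hall condition over all $A\subseteq S$ gives $2^{k-1}$ constraints with total threshold $\Theta(2^k\cdot k)$, which pushes $p$ out of $k^{\Oh(1)}$. Up to that point the proposal is on target.

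The gap is in how you close that obstacle, and it is a genuine one. First, your claim that the thresholds $k_A$ ``depend only on $\kappa$ and the fixed combinatorics of $T$'' is false. The Hall condition is about the $k-1$ vertices of $G$ \emph{left free} after $T-L$ is embedded, and which of those free vertices land in $N_G(\kappa(w))$ depends on the embedding of $T-L$, not only on $\kappa$. (Two $\kappa$-respecting embeddings of $T-L$ can leave entirely different free sets $R$.) So there is no way to precompute the constraints from $\kappa$ alone; some information about the witness isomorphism has to be guessed or rediscovered, and that is exactly what the paper's proof does. Second, your proposed collapse of the Hall system to a laminar family of $2|S|-1$ tight sets is unsupported and, in the stated generality, wrong: Hall-tight sets in a bipartite saturation problem form a lattice under union and intersection, not a laminar family, and nothing in the structure of the sets $N_G(\kappa(s_i))\cap R$ forces laminarity. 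Without a concrete argument for why $\Oh(k)$ constraints suffice, the reduction to \probHitIso with $p=k^{\Oh(1)}$ does not go through.

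For comparison, the paper's proof avoids Hall altogether. It fixes $L$ to contain exactly one leaf per vertex of $S$ (so $|L|=k-1$ and $|V(T-L)|=\delta(G)+1$), and for the hypothetical witness restriction $\sigma'$ on $T':=T-L$ it defines a small vector of parameters: the capped savings $a_i=\min\{|\Ima\sigma'\setminus N_G[u_i]|,\ndef(u_i)\}$, the set $B$ of ``problematic'' $u_i$, the vertex set $X\subseteq N_G(B)\setminus\bigcap_{u_i\in B}N_G(u_i)$ (proved to have size $<k^2$ and to live inside a pool of size $<k^3$), and $a_B=|\Ima\sigma'\setminus N_G[B]|$. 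There are only $2^{\Oh(k^2)}$ parameter tuples; each tuple induces one \probHitIso instance with $k+1$ hit-sets whose thresholds sum to $\Oh(k^2)$, so $p=\Oh(k^2)$. The heart of the proof is then a direct (non-Hall) matching construction: any $\xi'$ matching those parameters admits a saturation of $\kappa(S)$ into the free vertices, proved by a case split of the problematic set $B$ into $B_1,B_2,B_3$ according to where the true isomorphism $\sigma$ placed the leaves $\ell_i$, and a counting identity \Cref{eqn:main_iso_counting} that balances the ``budgets'' of $\sigma'$ and $\xi'$. This is a different shape of argument from what you outline, and the step you were hoping to postpone (``LP-duality or a direct greedy argument'') is where all of the work actually is.
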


\begin{proof}
	Denote the vertices of $S$ by $s_1, s_2, \ldots, s_{k-1}$.
	Let $L=\{\ell_1, \ell_2, \ldots, \ell_{k-1}\}$ be a set of $k-1$ leaves of $T$, such that it contains exactly one neighbor $\ell_i$ of $s_i$ for each $i \in [k-1]$.
	By $T'$, we denote the subtree of $T$ without the leaves of $L$, i.e.\ $T':=T-L$.
	We also define $u_i:=\kappa(s_i)$ for each $i\in [k-1]$.
	
	To present the algorithm, we first study the parameters of the subgraph isomorphism $\sigma$ that extends $\kappa$.
	These parameters can be represented as a sequence of $\Oh(k)$ non-negative integers bounded by $k$.
	Our algorithm will consider every possible combination of the parameters.
	Based on a fixed combination, the algorithm tries to reconstruct a subgraph isomorphism satisfying these properties.
	In what follows, we show that if $\sigma$  exists, then our algorithm will successfully construct a subgraph isomorphism from $T$ into $G$ extending $\kappa$, for the choice of parameters corresponding to $\sigma$.

	Suppose that there exists a subgraph isomorphism $\sigma: V(T)\to V(G)$  such that the restriction of $\sigma$ onto $S$ equals $\kappa$.
	Let $\sigma'$ be the restriction of $\sigma$ onto $V(T')$.
	The first set of parameters of $\sigma$ (in fact, parameters of $\sigma'$) is the sequence $a_1, a_2, \ldots, a_{k-1}$.
	For each $i \in [k-1]$, we define $a_i$ to be the number of non-neighbors of $u_i$, the image of $s_i$,  that are occupied by $\sigma'$, but capped with the deficiency of $u_i$.
	That is,
	$$a_i(\sigma')=\min\{|\Ima\sigma'\setminus N_G[u_i]|, \ndef(u_i)\}.$$
	Thus $a_i$ represents the number of free neighbors of $u_i$ that can be used for mapping $\ell_i$  when extending $\sigma'$  to a   subgraph isomorphism of the whole $T$.
	Note that $a_i$ is formally defined as a function of $\sigma'$, but we will often omit the ``$(\sigma')$'' argument for simplicity. 
	
If for a subgraph isomorphism of $T'$ we have $a_i=\ndef(u_i)$ for each $i\in [k-1]$, then by \Cref{lemma:multi_saved_k_neighbours},  
the mapping of $T'$ could be extended to 
  a subgraph isomorphism of $T$. 
	When this condition is not satisfied, it could be that the subgraph isomorphism of $T'$ cannot be extended. Even worse, it also could happen that  $\sigma'$ might have $a_i=0$ for some (or even all) $i\in[k-1]$.
This forces us to identify more properties and parameters of $\sigma'$.

	Let $B$ be the set of ``problematic'' (for the extension of $\sigma'$) vertices among $u_1,u_2,\ldots, u_{k-1}$.
	That is,
	$$B(\sigma'):=\{u_i: i \in [k-1], a_i(\sigma')<\ndef(u_i)\}.$$	
Another property of a partial isomorphism that is important for analyzing whether it could be further extended is the number of neighbors of the problematic vertices $B(\sigma')$ in $G$. This brings us to the definition of the third parameter of $\sigma'$ 
	$$X(\sigma'):=\left(\Ima\sigma'\cap N_G(B(\sigma'))\right)\setminus \bigcap_{u_i\in B(\sigma')} N_G(u_i),$$
	that is, the neighbors of the vertices in $B$ occupied by $\sigma'$, except the vertices that are adjacent to all vertices of $B$.
	The following claim bounds the search space for $X$.
	
	\begin{claim}
		$|N_G(B)\setminus \cap_{u_i\in B}N_G(u_i)|< k^3$ and $|X|< k^2$.
	\end{claim}
	\begin{claimproof}
		First note that for each $u_i \in B$ we have $0\le a_i<\ndef(u_i)$, so $\deg_G(u_i)<\delta(G)+k$.
		Also $|\Ima \sigma'\cap N_G(u_i)|\ge |\Ima\sigma'|-a_i$.
		Hence, at most $a_i$ vertices in $\Ima\sigma'$ are non-neighbors to $u_i$ for each $u_i\in B$.
		
		We have that at most $\sum_{u_i\in B}a_i\le |B|\cdot (k-1)$ vertices in $\Ima\sigma'$ can be a non-neighbor to at least one $u_i \in B$.
		Hence vertices in $B$ have at least $|\Ima\sigma'|-|B|\cdot(k-1)$ common neighbors.
		Since none of the vertices of $X$ is adjacent to all vertices of $B$ inside $\Ima\sigma'$, we have that  $|X|\le |B|\cdot(k-1)<k^2$.
		
		For each $u_i\in B$, the number of its neighbors that are not the common neighbors of all vertices of $B$, is  is at most 
		$$\deg_G(u_i)-(|\Ima\sigma'|-|B|\cdot(k-1))<(\delta(G)+k)-(\delta(G)+1-|B|\cdot(k-1))=(|B|+1)\cdot(k-1).$$
				
		Summing up these bounds over all $u_i\in B$, we obtain the bound $|B|\cdot(|B|+1)\cdot (k-1)<k^3$, as required by the statement of the claim.
	\end{claimproof}
	
	The final parameter of $\sigma'$ that we consider is the number of vertices occupied by $\sigma'$ that are not related to $B$, that is,
	$$a_B(\sigma'):=|\Ima\sigma'\setminus N_G[B(\sigma')]|.$$
	This parameter is very close in meaning to $a_i$, since it represents the number of preserved neighbors of $B$ in $G$.
	Let us quickly bound its value.
	
	\begin{claim}
		$a_B\le \min\{a_i: u_i \in B\}$.
	\end{claim}
	\begin{claimproof}
		By definition, $a_i\le |\Ima\sigma'\setminus N_G[u_i]|$ for each $u_i\in B$.
		At the same time, $N_G[B]\supset N_G[u_i]$ and, consequently, $(\Ima\sigma'\setminus N_G[u_i])\supset(\Ima\sigma'\setminus N_G[B])$. 
		Hence, by definition of $a_B$, $a_i\ge a_B$ for each $u_i \in B$.
	\end{claimproof}
	
	We have defined all parameters of $\sigma'$, that is, $a_1,a_2,\ldots,a_{k-1}$, $B$, $X$ and $a_B$. While there could be $n^{\Oh(|V(T')|)}$ different ways for a mapping $\sigma'$ to map $T'$ into $G$, the number of possible combinations of the parameters is significantly smaller. Indeed, 
	the values of $a_i$ and $a_B$ are integers within the range $\{0,\ldots, k-1\}$.
	The value of $B$ is derived from $a_1,a_2,\ldots,a_{k-1}$.
	To form set $X$, we have at most $(k^3)^{k^2}$ possible options  depending on $B$ only.
	We conclude that there are $2^{\Oh(k^2)}$ different combinations of these parameters.  All these combinations are easily enumerated in time, up to a polynomial factor,  proportional to the total number of combinations.
	
	We move on to the core part of the algorithm. The algorithm does not go through all possible guesses for mapping $\sigma'$. Instead, we run the algorithm for every 
	 valid choice of the parameters $a_1,a_2,\ldots,a_{k-1}$, $B$, $X$, and $a_B$.
	For each choice of the parameters, the algorithm either outputs an isomorphic embedding of $T$ in $G$ or fails. We will show that 
	when the choice of the 
	parameters corresponds to $\sigma'$, then the algorithm always finds  the required subgraph  isomorphism of $T$ in $G$. 
	Thus, if there exists a subgraph isomorphism $\sigma: V(T)\to V(G)$  such that the restriction of $\sigma$ onto $S$ equals $\kappa$ and 
$\sigma'$ is the restriction of $\sigma$ onto $V(T')$, then our algorithm constructs  an isomorphic embedding of $T$ into $G$ for the choice of the parameters corresponding to $\sigma'$.

We assume that we guessed correctly the parameters of $\sigma'$. Let us remark that the algorithm does not recover mappings $\sigma'$ and $\sigma$. Instead, it uses the parameters of $\sigma'$ to compute another mapping $\xi'$, and then extends $\xi'$ to  subgraph  isomorphism $\xi$ mapping $T$ to $G$ that is also compatible with mapping $\kappa$. 
 %
	We want subgraph isomorphism $\xi'$ from $T'$ to $G$ to satisfy the following conditions:
	\begin{itemize}
		\item $a_i(\xi')\ge a_i(\sigma')$,
		\item $\Ima\xi'\supset X(\sigma')$, and
		\item $|\Ima\xi'\setminus N_G[B(\sigma')]| \ge a_B(\sigma')$.
	\end{itemize}
Let us remark that in particular, $\sigma'$ satisfies these three conditions. 
	\begin{claim}\label{claim:runningtimeofxi}
		The subgraph isomorphism $\xi'$ can be found in $2^{\Oh(k^2) \cdot \diam(T)}\cdot\polyn$ running time.
	\end{claim}
	\begin{claimproof}
		First we find a partial isomorphism from $T'$ into $G$ by reducing to \probHitIso.
		We choose $F_i:=N_G(u_i)$ and $k_i:=a_i(\sigma')$ for each $i\in[k-1]$.
		The next set is $F_k:=X(\sigma')$ with $k_k:=|X(\sigma')|$.
		The final set is $F_{k+1}:=V(G)\setminus N_G[B(\sigma')]$ with $k_{k+1}:=a_B(\sigma')$.
		
		The constructed instance of \probHitIso thus is $$\left(G,T',\kappa,(N_G(u_1),N_G(u_2),\ldots,N_G(u_{k-1}),X,V(G)\setminus N_G[B]),(a_1,a_2,\ldots,a_{k-1},|X|, a_B)\right).$$
		This instance's constraints exactly correspond to the three conditions in the definition of $\xi'$.
		This instance is a yes-instance by our assumption about the existence of $\sigma'$,  and, hence of $\xi'$.
		Apply algorithm of \Cref{lemma:general_color_coding_algo} to this instance and in time $2^{\Oh(k^2)\cdot \diam(T')}\cdot\polyn$ obtain the intermediate partial isomorphism.
		Since $|V(T')|=\delta(G)+1$, by \Cref{prop:chvatal-generalized}, this mapping could be extended to the subgraph isomorphism $\xi': V(T')\to V(G)$ in polynomial time. 
	\end{claimproof}

	The last subroutine of the algorithm extends $\xi': V(T')\to V(G)$ into an isomorphism $\xi:V(T)\to V(G)$.
	The subroutine here is quite straightforward, since we just need to find a matching between $\kappa(S)$ and $V(G-\Ima\xi')$ in $G$ saturating every $u_i\in\kappa(S)$.
	Then the edge of the matching incident with $u_i$ gives an image for $\ell_i$.
	The matching, if it exists,  could be found 
	in polynomial time. Thus what is left is the proof that such a matching always exists.
	
	\begin{claim}
		There is a matching between $\kappa(S)$ and $V(G-\Ima\xi')$ in $G$ saturating $\kappa(S)$.
	\end{claim}
	\begin{claimproof}
		We focus on saturating the vertices in $B(\sigma')$, since each $u_i\notin B(\sigma')$ has at least $k-1$ neighbours outside $\Ima\xi'$.
		They need not be taken care of, we just saturate them arbitrarily in the end.
		In the rest of the proof of the claim, we assume that $B(\sigma')$ is not empty.
	
		Define the common neighbors of $B$ as		
		$$C(\sigma')= \bigcap_{u_i\in B(\sigma')} N_G(u_i).$$
		Observe that by the definition of $B,C,X$ and $a_B$,
		$$|\Ima\sigma'|=|B(\sigma')|+|\Ima\sigma'\cap C(\sigma')|+|X(\sigma')|+a_B(\sigma'),$$
		On the other hand,
		$$|\Ima\xi'|=|B(\sigma')|+|\Ima\xi'\cap C(\sigma')|+|\Ima\xi'\cap N_G(B(\sigma'))\setminus C(\sigma')|+|\Ima\sigma'\setminus N_G[B(\sigma')]|.$$
		
		Since $|\Ima\xi'|=|\Ima\sigma'|$, we have
		\begin{equation}\label{eqn:main_iso_counting}
		|\Ima\sigma'\cap C|+|X|+a_B(\sigma')=|\Ima\xi'\cap C|+|\Ima\xi'\cap N_G(B)\setminus C|+|\Ima\xi'\setminus N_G[B]|.
		\end{equation}
		
		We split the vertices in $B$ in three groups depending on how $\sigma$ maps their neighbors in $L$.
		The first group consists of vertices $u_i\in B$ such that the image of its leaf $\ell_i$ in $\sigma$ is not occupied by $\xi'$: $$B_1:=\{u_i\in B: \sigma(\ell_i)\notin\Ima\xi'\}.$$
		The second group consists of $u_i\in B$ such that $\sigma(\ell_i)$ is occupied by $\xi'$, but is common to all vertices in $B$: $$B_2:=\{u_i\in B: \sigma(\ell_i)\in\Ima\xi'\cap C\}.$$
		The remaining group  is $$B_3:=\{u_i\in B:\sigma(\ell_i)\in \Ima\xi'\setminus C\}.$$
		
		We  construct the matching saturating $B=B_1\cup B_2\cup B_3$.
		Denote by $L_1:=\{\ell_i: u_i\in B_1\}$ the set of leaves for $B_1$.
		For each $u_i\in B_1$, we add edge $u_i\sigma(\ell_i)$ to the matching.
		This is legitimate since $\sigma(\ell_i)\notin\Ima\xi'$ for each $\ell_i\in L_1$.
		
		The rest of the vertices in $B$, that is, vertices of $B_2\cup B_3$ we match with vertices from  $C\setminus(\Ima\xi'\cup\sigma(L_1))$.
		Since $C$ are the common neighbors of $B$ ($B$ and $C$ induce a complete bipartite subgraph in $G$), we only have to prove that  $C$ has enough free vertices to match to.
		
		We have to perform some counting to proceed.
		Since the leaf neighbors of $B_2$ are matched to $C$ in $\sigma$, we have
		\begin{equation*}
			|C\setminus(\Ima\sigma'\cup\sigma(L_1))|\ge|B_2|.
		\end{equation*}
		Because  $X\subset\Ima\sigma'$ and $B_3\cap\Ima\sigma'=\emptyset$, and both $X$ and $B_3$ are subsets of $\Ima\xi'$, we have
		\begin{equation}\label{eqn:help_eqn_iso}
			|\Ima\xi'\cap N_G(B)\setminus C|\ge |X|+|B_3|.
		\end{equation}
		Combining \Cref{eqn:help_eqn_iso} and $|\Ima\xi'\setminus N_G[B]|\ge a_B$ with \Cref{eqn:main_iso_counting}, we conclude that
				\begin{equation*}
			|C\cap\Ima\xi'|\le |C\cap \Ima\sigma'|-|B_3|.
		\end{equation*}
		
	Therefore, 
		\begin{equation*}
		\begin{aligned}
		|C\setminus(\Ima\xi'\cup\sigma(L_1))|&=|C\setminus\sigma(L_1)|-|C\cap\Ima\xi'|\\ &\ge| C\setminus\sigma(L_1)|-|C\cap\Ima\sigma'|+|B_3|\\&=|C\setminus(\Ima\sigma'\cup\sigma(L_1))|+|B_3|\ge|B_2|+|B_3|.
		\end{aligned}
		\end{equation*}		
		
		It means that  $C$ has enough free vertices to be matched with $B_2\cup B_3$.
		This finishes the proof of the claim.
	\end{claimproof}

	This completes the proof of the correctness of the algorithm for the right choice of the parameters.
	Note that if our guess of the parameters is incorrect, the algorithm still could find a subgraph isomorphism, then we stop. Or, if it failed, we move on to the next choice of the parameters. By \Cref{claim:runningtimeofxi},
	the running time of the subroutine for the fixed choice of parameters is $2^{\Oh(k^2)\cdot\diam(T)}\cdot\polyn$ for both correct and incorrect choices of the parameters.
Thus the total running time of the algorithm is $2^{\Oh(k^3)}\cdot 2^{\Oh(k^2)\cdot\diam(T)}\cdot\polyn$. This completes the proof of the lemma. 
\end{proof}

As a side remark, we note that  \Cref{lemma:multi_leaf_algo} already provides an \textsf{XP}-algorithm (with parameter $k$) for tree containment when a tree is of constant diameter.

\subsection{Guessing a mapping randomly}

The second key ingredient of \Cref{thm:densest_part} allows to guess images (in $G$) of leaf-adjacent vertices (in $T$) efficiently.

\begin{lemma}\label{lemma:guess_a_mapping}
	Let $G$ be a graph,   $p\ge 1$, $k\ge 3$ be integers, and  $T$ be a tree on $\delta(G)+k$ vertices such that
	\begin{enumerate}
		\item  $\delta(G)\ge k^{3p+1}$,  
		\item there are no $k^{p}$-escape vertices in $G$,  
		\item $\ld(T)<k$, and
		\item $T$ is not \separable{$k^{p}$}.
	\end{enumerate}
	Then there is a set $\mathcal{W}$ of $2^{k^{\Oh(p)}}$ leaf-adjacent vertices in $T$, satisfying the following:
	
	If $G$ has a subgraph isomorphic to $T$, then there is a vertex $u\in V(G)$ 
and a subgraph isomorphism $\sigma$ from $T$ into $G$	 such that 
with  probability at least $\frac{1}{k^{p+2}}$  a randomly selected $(k-1)$-vertex set $U\subset N_G(u)$ has
 $\sigma^{-1}(U)\subset \mathcal{W}$.
 \end{lemma}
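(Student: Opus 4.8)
Throughout put $q:=k^{p}$. The plan is to read the set $\mathcal W$ off the structure of $T$, read the vertex $u$ off the structure of $G$, and bound the failure probability by a union bound.

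\medskip\noindent\emph{The set $\mathcal W$.} Since $T$ is not \separable{$q$} and $|V(T)|=\delta(G)+k\ge 2q$, a centroid argument gives a vertex $c\in V(T)$ with every component of $T-c$ on fewer than $q$ vertices (for the edge joining $c$ to a largest component the other side has $\ge|V(T)|/2\ge q$ vertices, so the component has $<q$). Root $T$ at $c$. Each component of $T-c$ is a rooted tree on $<q$ vertices, hence of one of at most $2^{\Oh(q)}$ rooted-isomorphism types, and since $\ld(T)<k$ all but $\le k$ of the components have size $\ge 2$, so contain a leaf-adjacent vertex. Let $\mathcal W$ consist, for each type, of all leaf-adjacent vertices of some fixed $2q^{2}$ components of that type (all of them if there are fewer), plus $c$ if $c$ is leaf-adjacent; then $|\mathcal W|\le 2^{\Oh(q)}\cdot q^{\Oh(1)}=2^{k^{\Oh(p)}}$. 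The point is a \emph{rerouting property}: for every $P\subseteq V(T)$ with $|P|\le 2q^{2}$ there is an automorphism $\rho$ of $T$ sending every leaf-adjacent vertex of $P$ into $\mathcal W$ — indeed $\operatorname{Aut}(T)$ fixes $c$ and permutes isomorphic components of $T-c$, $P$ meets $\le 2q^{2}$ components, so for each type we move the affected components onto reserved representatives, and a rooted isomorphism of two components of one type preserves leaf-adjacency.

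\medskip\noindent\emph{The vertex $u$.} Let $\sigma$ be any subgraph isomorphism of $T$ into $G$ and set $v:=\sigma(c)$. Since $v$ is not a $q$-escape vertex, $\deg_{G}(v)<\delta(G)+q$ and the maximum matching between $N_{G}[v]$ and $V(G)\setminus N_{G}[v]$ has size $<q$; fix by K\"onig's theorem a set $Z$, $|Z|<q$, covering every edge of this cut, so that each vertex of $N_{G}(v)\setminus Z$ has all its neighbours inside $N_{G}[v]\cup Z$. For a component $T_{i}$ of $T-c$ with $\sigma(T_{i})\not\subseteq N_{G}[v]$, $\sigma$ maps an internal edge of $T_{i}$ across this cut, and such edges for distinct components are vertex-disjoint; as $v$ is not a $q$-escape vertex, fewer than $q$ components spill out, carrying $<q^{2}$ vertices, so $|\Ima\sigma\setminus N_{G}[v]|<q^{2}$. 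If $|V(G)|<\delta(G)+2q+1$ we are in the dense regime (addressed below); otherwise $|V(G)|\ge\delta(G)+2q+1>|N_{G}[v]\cup Z|$, and we pick $u\in V(G)\setminus(N_{G}[v]\cup Z)$.

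\medskip\noindent\emph{Rerouting and the probability.} For this $u$, $N_{G}(u)\cap(N_{G}(v)\setminus Z)=\emptyset$ and $u\notin N_{G}[v]$, so $N_{G}(u)\cap N_{G}[v]\subseteq Z$; with $|\Ima\sigma\setminus N_{G}[v]|<q^{2}$ this yields $|N_{G}(u)\cap\Ima\sigma|<2q^{2}$. Let $P:=\sigma^{-1}(N_{G}(u)\cap\Ima\sigma)$, take $\rho$ from the rerouting property, and replace $\sigma$ by $\sigma\circ\rho^{-1}$: this is a subgraph isomorphism with the same image and still maps $c$ to $v$, so $u$ stays outside $N_{G}[v]\cup Z$, while every leaf-adjacent vertex of $(\sigma\circ\rho^{-1})^{-1}(N_{G}(u)\cap\Ima\sigma)=\rho(P)$ now lies in $\mathcal W$. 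Hence $\mathrm{Bad}:=(N_{G}(u)\cap\Ima\sigma)\setminus(\sigma\circ\rho^{-1})(\mathcal W)$ is contained in the image of the non-leaf-adjacent vertices of $\rho(P)$, so $|\mathrm{Bad}|\le|\rho(P)|=|P|<2q^{2}$. For a uniformly random $(k-1)$-subset $U$ of $N_{G}(u)$, the event $\sigma^{-1}(U)\not\subseteq\mathcal W$ is exactly $U\cap\mathrm{Bad}\neq\emptyset$, and a union bound gives
\[
\Pr\bigl[\,U\cap\mathrm{Bad}\neq\emptyset\,\bigr]\;\le\;(k-1)\cdot\frac{|\mathrm{Bad}|}{|N_{G}(u)|}\;\le\;\frac{2(k-1)\,q^{2}}{\delta(G)}\;\le\;\frac{2(k-1)}{k^{\,p+1}}\;\le\;\tfrac12 ,
\]
using $\delta(G)\ge k^{3p+1}=k\,q^{3}$. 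Therefore $\Pr[\sigma^{-1}(U)\subseteq\mathcal W]\ge\tfrac12\ge k^{-(p+2)}$, which is the claim.

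\medskip\noindent\emph{Main obstacle.} The substantial case is the dense regime $|V(G)|=\delta(G)+\Oh(q)$: there no vertex sits outside $N_{G}[\sigma(c)]\cup Z$, so $N_{G}(u)\cap\Ima\sigma$ is forced to be almost all of $\Ima\sigma$ for every $u$ and the choice of $u$ above fails. There one must instead exploit that the complement of $G$ restricted to $\Ima\sigma$ has maximum degree $<2q$ and control the embedding of $T$ directly — this is exactly the role of the dense-graph analysis accompanying \Cref{thm:densest_part}. A secondary technical point is the bookkeeping that confines $\Ima\sigma$ to $N_{G}[\sigma(c)]$ up to $q^{2}$ vertices, and the verification of the rerouting property of $\mathcal W$.
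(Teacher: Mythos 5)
Your construction of $\mathcal W$ from a centroid $c$, the components of $T-c$, and rooted-isomorphism types essentially matches the paper's, and the bookkeeping that $|\Ima\sigma\setminus N_G[\sigma(c)]|<q^2$ because $\sigma(c)$ is not a $q$-escape vertex is the same matching argument the paper uses. The gap is in the choice of $u$, and it is fatal. You pick $u$ \emph{outside} $N_G[\sigma(c)]\cup Z$, which gives $|N_G(u)\cap\Ima\sigma|<2q^2$. But then almost all of $N_G(u)$ avoids $\Ima\sigma$ (since $\deg_G(u)\ge\delta(G)\ge kq^3\gg 2q^2$), so a random $(k-1)$-subset $U$ of $N_G(u)$ is virtually never contained in $\Ima\sigma$. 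The conclusion ``$\sigma^{-1}(U)\subset\mathcal W$'' has to be read as $U\subseteq\sigma(\mathcal W)$: that is how the lemma is consumed in the proof of \Cref{thm:densest_part}, which enumerates bijections $\kappa$ from $k-1$ elements of $\mathcal W$ onto $U$ that are restrictions of $\sigma$; if some $u'\in U$ has no preimage under $\sigma$, no such $\kappa$ exists and the reduction to \Cref{lemma:multi_leaf_algo} fails. Under this (required) reading, your identification of the failure event with $U\cap\mathrm{Bad}\neq\emptyset$ is wrong---the failure event also contains $U\not\subseteq\Ima\sigma$, whose probability is overwhelming. Under the lax reading (simply ignoring elements of $U$ without preimages) the lemma becomes vacuous: any $u$ with $N_G(u)\cap\Ima\sigma=\emptyset$ then works with probability $1$, yet carries no information. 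You also do not actually handle the regime $|V(G)|\le\delta(G)+2q$: the ``dense-graph analysis accompanying \Cref{thm:densest_part}'' that you defer to is precisely the lemma under discussion.

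The correct move is the inverse geometry: set $u:=\sigma(c)=\sigma(r)$, the image of the centroid itself. Since $u$ is not a $q$-escape vertex, $\deg_G(u)<\delta(G)+q$; at most $q$ components of $T-r$ can spill out of $N_G(u)$ by your matching argument, so $|\Ima\sigma\setminus N_G[u]|<q^2$; and since $T$ is not \separable{$q$} and $\ld(T)<k$, at least $\delta(G)/q$ vertices of $T$ are leaf-adjacent. Putting these together, at least a $1/k^{p+1}$ fraction of $N_G(u)$ is covered by images of leaf-adjacent vertices (the paper's Claim~\ref{claim:probabilityargument}), and a random $(k-1)$-subset $U\subset N_G(u)$ lands entirely in such images with probability at least $(1/k^{p+1})^{k-1}$; your rerouting step (swapping isomorphic components of $T-r$ to move the preimages into $\mathcal W$) then finishes. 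No case split on $|V(G)|$ is needed: with $u=\sigma(r)$, the neighborhood of $u$ is densely covered by $\Ima\sigma$ regardless of the size of $G$.
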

By a randomly selected set $U$, we mean here a set formed by selecting $k-1$ times uniformly at random (without repetitions) a vertex from $N_G(u)$.

\begin{proof}
	Let $r$ be a vertex in $T$ whose deletion separates $T$ into components of size at most $|V(T)|/2$.
	Suppose that one of the components is of size at least $k^p$.
	Then the edge between $r$ and this component separates $T$ into a part of size at least $k^{p}$ and a part of size at least $|V(T)|/2> \delta(G)/2\ge k^{p}$. 
	This is a contradiction because, by the lemma's assumption, $T$ is not \separable{$k^{p}$}. Therefore, every component of $T-r$ is of size at most $ k^{p}$.
	
	In the rest of the proof, we assume that   $T$ is rooted in $r$.
	Let the children of $r$ in $T$ be $c_1,c_2,\ldots,c_d$ for $d:=\deg_T(r)$.
	For each $i\in[r]$, the subtree $T_{c_i}$ rooted in  $c_i$ has less than $k^p$ vertices.


	We first note that there are not too many non-isomorphic rooted trees among $T_{c_i}$.
	In his work \cite{Otter1948}, Otter proved that there are at most $\Oh(2.95^n)$ non-isomorphic (unrooted) trees on $n$ vertices.
	If we move from unrooted to rooted trees, the number of the classes of equivalence grows by at most the maximum size of the tree, and the bound becomes $\Oh(2.95^n\cdot n)$.
	Since each $T_{c_i}$ consists of at most $k^p$ vertices, we have to sum up this bound over each tree size from $1$ to $k^p$.
	We obtain that the upper bound on the number of non-isomorphic trees from $T-r$  is $\Oh(2.95^{k^p}\cdot k^{p}\cdot k^p)=2^{k^{\Oh(p)}}$.

	We construct the set of leaf-adjacent vertices $\mathcal{W}$ via the following process.
	For each class of isomorphic trees from $T_{c_1}, T_{c_2}, \ldots, T_{c_d}$, we take $k-1$ (or all of them, if they are less then $k-1$) representatives of this class.
	Then for each of the selected trees $T_{c_i}$, we add all leaf-adjacent vertices in $V(T_{c_i})$ (with respect to the whole $T$) to $\mathcal{W}$.
	The size of $\mathcal{W}$ is in  $2^{k^{\Oh(p)}}$.
	
	The construction of  $\mathcal{W}$ is complete and we proceed with the proof of the probability statement of the lemma.
	It is based on the following claim.
	
	\begin{claim}\label{claim:probabilityargument}
		There are at least $\frac{\delta(G)}{k^p}$ leaf-adjacent vertices (excluding $r$) in $T$.
	\end{claim}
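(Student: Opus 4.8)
The plan is to exploit the decomposition of $T-r$ into the subtrees $T_{c_1},\ldots,T_{c_d}$ already available from the preceding paragraph, where $d=\deg_T(r)$ and each $T_{c_i}$ has at most $k^p$ vertices. The rough idea: most of these components must be ``large'', that is, have at least two vertices; each large component forces a leaf of $T$ lying strictly inside it, and hence a leaf-adjacent vertex of $T$ lying strictly inside it (in particular distinct from $r$); and these witnesses are pairwise distinct because the vertex sets $V(T_{c_i})$ are disjoint. Counting the large components then yields the bound.

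First I would separate the single-vertex components from the rest. A single-vertex component $T_{c_i}=\{c_i\}$ corresponds exactly to a leaf of $T$ adjacent to $r$, so the number $s$ of such components equals the leaf-degree of $r$; since $\ld(T)<k$, we get $s\le k-1$. The remaining $d-s$ components each have at least two vertices, and together they contain $|V(T)|-1-s=\delta(G)+k-1-s$ vertices. As each of them has at most $k^p$ vertices, we obtain $d-s\ge \frac{\delta(G)+k-1-s}{k^p}\ge\frac{\delta(G)}{k^p}$, where the last inequality uses $s\le k-1$.

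Next I would argue that each of these $d-s$ multi-vertex components contributes a distinct leaf-adjacent vertex of $T$ other than $r$. Fix such a component $T_{c_i}$: as a tree on at least two vertices it has a leaf $\ell\ne c_i$. Among the vertices of $T_{c_i}$, only $c_i$ is adjacent to $r$ in $T$, so all neighbors of $\ell$ in $T$ lie in $V(T_{c_i})$, whence $\deg_T(\ell)=\deg_{T_{c_i}}(\ell)=1$ and $\ell$ is a leaf of $T$. Its unique $T$-neighbor $w$ lies in $V(T_{c_i})$, so $w\ne r$, and $w$ is leaf-adjacent in $T$. Since the sets $V(T_{c_i})$ are pairwise disjoint, the vertices $w$ obtained from distinct components are pairwise distinct; combined with the count of the previous paragraph, this gives at least $\frac{\delta(G)}{k^p}$ leaf-adjacent vertices of $T$ other than $r$, as claimed.

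I do not anticipate a real obstacle here; the two points that need a moment of care are (i) that single-vertex components are precisely the leaf neighbors of $r$, so their number is bounded by $\ld(T)<k$ rather than by something uncontrolled, and (ii) the verification that a leaf of a component $T_{c_i}$ distinct from $c_i$ is genuinely a leaf of the whole tree $T$. Both are immediate once one recalls that the $T_{c_i}$ are exactly the connected components of $T-r$, so that only $c_i$ reaches $r$.
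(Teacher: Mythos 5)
Your proof is correct and matches the paper's own argument step for step: single-vertex components of $T-r$ are exactly the leaf-neighbors of $r$ and so number at most $k-1$ by $\ld(T)<k$; the remaining components span at least $\delta(G)$ vertices, have size at most $k^p$ each, and each supplies a distinct leaf-adjacent vertex of $T$ other than $r$. You simply spell out the (easy) verification that a leaf of a component $T_{c_i}$ is a leaf of $T$, which the paper leaves implicit.
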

	\begin{claimproof}
		Since $\ld(T)<k$, at most $k-1$ subtrees among $T_{c_i}$ are single-vertex trees. Each of the other subtrees $T_{c_i}$ contains at least one leaf-adjacent vertex in $T$.
		The total number of vertices in these subtrees is at least $|V(T)|-1-(k-1)=\delta(G)$.
		Since the size of each of them is at most $k^p$, at least $\frac{\delta(G)}{k^p}$ vertices in total are leaf-adjacent vertices.
	\end{claimproof}
	
	Let $\sigma: V(T)\to V(G)$ be the subgraph isomorphism from $T$ into $G$.
	Put $u:=\sigma(r)$.
	Since there are no $k^p$-escape vertices in $G$, at most $k^p$ distinct $T_{c_i}$ have at least one vertex mapped outside $N_G(u)$ by $\sigma$, i.e.$$\sigma(V(T_{c_i}))\not\subset N_G(u)$$ holds for at most $k^p$ distinct values of $i$.
	Also $\deg_G(u)<\delta(G)+k^p$, again since $u$ is not a $k^p$-escape vertex.
	Since each $|V(T_{c_i})|$ is at most  $k^p$, we have that  $|\Ima\sigma\setminus N_G[u]|\le k^{2p}$. Pipelining with \Cref{claim:probabilityargument}, we deduce that at least 
	        \begin{equation}\label{equ:many_images}
		\begin{aligned}
		\frac{\delta(G)}{k^p}-k^{2p}\ge & \frac{\delta(G)}{k^p}-\frac{\delta(G)}{k^{p+1}}\ge\frac{\delta(G)\cdot (k-1)}{k^{p+1}}\\ \ge & \frac{\delta(G)+\delta(G)}{k^{p+1}}> \frac{\delta(G)+k^p+k^{p+2}}{k^{p+1}}>\frac{\deg_G(u)}{k^{p+1}}+k
		\end{aligned}
		\end{equation}
		vertices in $N_G(u)$ are images of leaf-adjacent vertices with respect to $\sigma$.
	
	The set $U$ is formed via iterative random selection of a (not previously selected) vertex in $N_G(u)$. There are $k-1$ iterations.
	By \Cref{equ:many_images},
	in each iteration with probability at least $\frac{1}{k^{p+1}}$ the selected vertex is an image of a leaf-adjacent vertex.
	Hence, with probability at least $(\frac{1}{k^{p+1}})^{k-1}>\frac{1}{k^{p+2}}$, a random subset $U\subset N_G(u)$ consists of $k-1$ distinct images of leaf-adjacent vertices of $T$ with respect to $\sigma$.
	We refer to such a choice of $U$ as to a \emph{good choice}.
	It remains to show that for each good choice of $U$, there is some subgraph isomorphism $\sigma'$ from $T$ to $G$ with property $\sigma'(\mathcal{W})\cap U=U$.
	
	Let $U$ be a fixed good choice, i.e.\ a subset of $N_G(u)$ of size $k-1$ such that $\sigma^{-1}(U)$ consists only of leaf-ajacent vertices in $T$.
	We now modify the isomorphism $\sigma$ and obtain another isomorphism $\sigma'$ with $(\sigma')^{-1}(U)\subset\mathcal{W}$.
	
	Recall that $\mathcal{W}$ consists of a union of leaf-adjacent vertices in some subtrees of $c_1, c_2,\ldots, c_d$, i.e.\ $T_{c_i}$ for $i\in I$ for some $I\subset[d]$.
	On the other hand, $\sigma^{-1}(U)$ is a subset of a union of leaf-adjacent vertices of at most $k-1$ subtrees of $c_1, c_2, \ldots, c_d$, i.e.\ $T_{c_j}$ for $j\in J$ for some $J\subset[d]$ of size at most $k-1$.
Therefore,  if $J\subset I$, then  $\sigma$ is the required isomorphism for $U$.
	
	Otherwise, we modify $\sigma$.
	Take a $j \in J\setminus I$.
	The leaf-adjacent vertices of $T_{c_j}$ are not in $\mathcal{W}$.
	However, by the construction of $\mathcal{W}$, there is $i\in I\setminus J$ such that $T_{c_i}$ is isomorphic to $T_{c_j}$, as $\mathcal{W}$ contains leaf-adjacent vertices of at least $k$ representatives of the isomorphism-equivalence class of $T_{c_j}$.
We modify $\sigma$ by interchanging the images of $T_{c_i}$ and $T_{c_j}$.
This modification reduces the size of $J\setminus I$ by exactly one.

	We repeat such modifications until we reach $J\subset I$, and hence an isomorphism $\sigma'$ for $U$ satisfying the required condition.
	This completes the proof of the lemma.
\end{proof}

\subsection{Proof of \Cref{thm:densest_part}}

\begin{proof}[Proof of \Cref{thm:densest_part}]
	The algorithm is a Monte-Carlo algorithm that uses \Cref{lemma:multi_leaf_algo} as its subroutine.
	Since every tree $T$ is \separable{$(\diam(T)/2)$}, it is guaranteed that $\diam(T)\le \frac{k^p}{2}$, so the running time of a single call of \Cref{lemma:multi_leaf_algo} is $2^{k^{\Oh(p)}}\cdot\polyn$.
	It only remains to find a suitable mapping $\kappa$ to call \Cref{lemma:multi_leaf_algo}.
	From now on, we assume that   $G$ contains  $T$ as a subgraph, since \Cref{lemma:multi_leaf_algo} never reports false-positives.
	
	We iterate several values of $\kappa$ using \Cref{lemma:guess_a_mapping}.
	Constraints on $G$ and $T$ imposed by \Cref{thm:densest_part} allow to apply \Cref{lemma:guess_a_mapping}.
	We obtain a set $\mathcal{W}$ of leaf-adjacent vertices of $T$ and pick a random set $U$ of $k-1$ leaf-adjacent vertices in $T$ (excluding $r$).
By \Cref{lemma:guess_a_mapping}, with probability at least $k^{-p-2}$, there is a subgraph isomorphism $\sigma$ from $T$ into $G$ with $\sigma^{-1}(U)\subseteq\mathcal{W}$.
	
	To produce the mapping $\kappa$ for a single run of the subroutine of \Cref{lemma:multi_leaf_algo}, we use $\mathcal{W}$ and $U$.
	That is, $\kappa$ is a mapping from a subset of $\mathcal{W}$ into $U$, and we iterate over all $|\mathcal{W}|^{|U|}=2^{k^{\Oh(p)}}$ possible such mappings.
	For each fixed mapping, we run the subroutine of \Cref{lemma:multi_leaf_algo}.
	The total running time is bounded by $2^{k^{\Oh(p)}}\cdot\polyn$.
	If the guess of $U$ was good, at least one of the runs correctly reports that $G$ contains $T$. In this case, the algorithm reports that  $G$ contains $T$ and stops.
	
	We have shown so far that a single guess of $U$ provides a correct answer with probability at least $k^{-p-2}$ in time $2^{k^{\Oh(p)}}\cdot\polyn$.
	To amplify the probability, we repeat this procedure (guess of $U$ and iterating $\kappa$) $2k^{p+2}$ times in total.
	By the standard arguments, 
	the probability that at least one of the $2k^{p+2}$ guesses of $U$ is correct is at least $\frac{1}{2}$.
	The total running time is  $2^{k^{\Oh(p)}}\cdot\polyn$.
	
	If at every run of the subroutine the algorithm reports that $G$ does not contain $T$, the algorithm reports that $G$ does not contain $T$.
	This happens with probability $1$ if $G$ does not contain $T$ as a subgraph, and with probability at most $\frac{1}{2}$ otherwise.
	The proof is complete.
\end{proof}


\section{Large diameter and preserving paths}\label{sec:largediameter}

In this section, we show that $G$ contains $T$ if the diameter $\diam(T)$ is sufficiently large and $n$ is at least slightly above $\delta(G)$.
The proof also yields a polynomial-time algorithm that constructs a subgraph isomorphism from $T$ into $G$.
More formally, the main result of the section is the following combinatorial result.

\begin{theorem}\label{thm:large_diameter_many_vertices}
	Let  $k \ge 3$ and let $G$ be a connected graph    with at least $n \ge (1+\frac{4}{k^4})\cdot \delta(G)$ vertices and of minimum vertex degree $\delta(G)>k^{16}$. Then $G$ contains as a subgraph every 
tree $T$ on at most $ \delta(G)+k$ vertices and of diameter  $\diam(T)\ge 8k^6 \cdot \log \delta(G)$.
\end{theorem}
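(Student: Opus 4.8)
The plan is to reduce the statement to producing one path $P$ in $G$ with good ``non-neighbourhood coverage'', and then to produce such a $P$ by separate arguments depending on whether $\diam(G)$ is large or small. First I would normalize $T$: adding pendant leaves at a vertex of degree $\ge 2$ (one exists on any diametral path) only increases the diameter, and since $n \ge (1+\tfrac{4}{k^4})\delta(G) > \delta(G)+k$ the host graph is large enough, so I may assume $|V(T)| = \delta(G)+k$ exactly, whence $\ndef(v) \le k-1$ for every $v$. Call $P$ a \emph{saving path} if $|V(P)| \le \diam(T)+1$ and $|V(P)\setminus N_G[v]| \ge \ndef(v)$ for all $v \in V(G)$. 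I claim a saving path already suffices: take a subpath $R$ with $|V(R)| = |V(P)|$ of some diametral path of $T$, fix a path isomorphism $\sigma$ of $R$ onto $P$, and extend $\sigma$ greedily, processing the rest of $T$ in BFS order from $V(R)$ and sending each new vertex $x$ to an unused neighbour of the image $v$ of its parent. This never gets stuck: exactly as in the counting behind \Cref{prop:chvatal-generalized} and \Cref{lemma:multi_saved_k_neighbours}, using $V(P)\subseteq\Ima\sigma$, $v\in\Ima\sigma$ and $|\Ima\sigma|\le|V(T)|-1$, the number of unused neighbours of $v$ is at least $\deg_G(v)+\ndef(v)+1-|\Ima\sigma|\ge(\delta(G)+k-1)+1-(\delta(G)+k-1)=1$. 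Note this needs no hypothesis on the number of leaves of $T$.

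\textbf{When $\diam(G)$ is large.} If $\diam(G)\ge k+1$, I would take a subpath $P$ on exactly $k+2$ vertices of a shortest path of $G$; this has $k+2\le\diam(T)+1$ since $\diam(T)\ge 8k^6\log\delta(G)$, and since every subpath of a shortest path is itself a shortest path, each vertex of $G$ meets $V(P)$ in at most three vertices of its closed neighbourhood (adjacency to $q_i$ and $q_j$ with $|i-j|\ge 3$ would shortcut), so it has at least $(k+2)-3=k-1\ge\ndef(v)$ non-neighbours on $P$. Thus $P$ is a saving path.

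\textbf{When $\diam(G)$ is small.} The substantive case is $\diam(G)\le k$, where a short saving path must be manufactured. The plan is (i) to find $S\subseteq V(G)$ of size $k^{\Oh(1)}\log\delta(G)$ with $|S\setminus N_G[v]|\ge\ndef(v)$ for all $v$, and (ii) to string $S$ into a path $P\supseteq S$ on $\Oh(k|S|)$ vertices, which is then a saving path provided $|V(P)|\le\diam(T)+1$. For (i): set $\epsilon=\tfrac{2}{k^4}$, so $n\ge(1+\epsilon)\delta(G)+\tfrac{2}{k^4}\delta(G)$ with the additive term dominating $|S|$ (this is where $\delta(G)>k^{16}$ enters, making $\log\delta(G)$ negligible against $\delta(G)/\mathrm{poly}(k)$); then every deficient $v$ keeps more than $\epsilon\delta(G)$ non-neighbours even after removing $|S|$ earlier picks, so a uniformly random vertex is a non-neighbour of an $\tfrac{\epsilon}{1+\epsilon}$-fraction (in expectation) of the still-unsatisfied deficient vertices, and greedily adding such a vertex for $\Oh(\epsilon^{-1}\log\delta(G))$ steps per round over $k-1$ rounds yields $S$ with $|S|=\Oh(k^5\log\delta(G))$. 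For (ii): order $S$ and concatenate shortest paths between consecutive elements, each taken in $G$ minus the already-routed prefix; as long as the residual graph keeps diameter $\le k$ each segment has at most $k+1$ vertices, so $|V(P)|\le 2k|S|+1\le 8k^6\log\delta(G)+1\le\diam(T)+1$, and since $S\subseteq V(P)$ and non-neighbours are preserved, $P$ is a saving path.

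\textbf{The main obstacle.} The delicate point is step (ii) in the small-diameter case: deleting the routed prefix can push the diameter of the residual graph above $k$, so its shortest-path segments stop being short. The remedy is to monitor the residual diameter and, once it exceeds $2k$, abandon the stitching and instead use a long shortest path of the residual graph as the saving path (Idea IV). The subtlety is that the already-deleted vertices are no longer governed by the shortest-path property and may be adjacent to many vertices of that path, so one must ensure — by taking the path longer by a margin of order $|S|$ and arguing carefully — that those vertices too keep at least $\ndef(v)$ non-neighbours on it. Making this go through, together with the quantitative bookkeeping of the previous paragraph (so that $|V(P)|$ really stays below $8k^6\log\delta(G)$ and $|S|$ below the additive slack), is exactly what the seemingly loose constants $8k^6$, $k^{16}$ and $\tfrac{4}{k^4}$ in the statement are calibrated for.
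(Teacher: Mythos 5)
Your architecture matches the paper's: reduce to producing a short path with good non-neighbourhood coverage, then produce it by dichotomizing on $\diam(G)$. The ``saving path'' is the paper's $k$-preserving path (\Cref{def:preserving}) strengthened to cover $v\in V(P)$ as well, which buys you a one-phase BFS extension (and a factor-$2$ looser bound on $|V(P)|$) at the cost of needing the coverage condition on $S$ even for $v\in S$; the latter is not what the round-based sampling gives you out of the box, but the fix (adjoin at most $k-1$ fresh non-neighbours of each $v\in S$, abundant since $|S|\ll n-\delta(G)$) is routine, so I will not dwell on it.

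The genuine gap is in your remedy for stitching failure. When some residual graph $G-U_i$ acquires diameter $>2k$, the routed-prefix vertices in $U_i$ are indeed no longer controlled by any shortest-path structure, but making the new shortest path in $G-U_i$ ``longer by a margin of order $|S|$'' does not restore control: a vertex $u\in U_i$ can be adjacent to \emph{every} vertex of a shortest path of $G-U_i$, of any length, because the shortest-path property constrains only vertices of $G-U_i$ and says nothing about edges from $U_i$. The repair that actually works (the paper's \Cref{lemma:diameter_modulator_to_preserving_path}) is to take a shortest path on $2k+1$ vertices in $G-U_i$ and then repeatedly \emph{insert} a vertex of $U_i$ into the path whenever that vertex has two consecutive neighbours on the current path; when this halts, any uninserted $u\in U_i$ has no two consecutive path-neighbours and therefore at least $\lfloor|V(P)|/2\rfloor\ge k$ non-neighbours on $P$, while the inserted ones are on $P$ and need no coverage. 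This insertion step is a distinct combinatorial idea, not quantitative slack, and no recalibration of $8k^6$, $k^{16}$, or $4/k^4$ can substitute for it.
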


We start the proof of \Cref{thm:large_diameter_many_vertices} by preparing auxiliary results in   \Cref{subsec:largediameteraux,subsec:largediameterauxdva}. The final step of the proof is given in \Cref{subsec:largediameterauxfinal}.
While we state  \Cref{thm:large_diameter_many_vertices}  and all lemmata in this section as combinatorial results---under certain conditions, the graph contains a certain object---all these proofs are constructive and imply polynomial time algorithms computing the corresponding objects in polynomial time. In particular,  
the subgraph isomorphism of tree $T$ in \Cref{thm:large_diameter_many_vertices} could be constructed in polynomial time. 
 

\subsection{Preserving paths and how to use them}\label{subsec:largediameteraux}

In this subsection we define preserving sets and preserving paths. Recall that by $\ndef(v)$ we denote the 
 {neighbor deficiency} of vertex $v$, that is, $\max\{\delta(G)+k-1-\deg_G(v),0\}$. Let us also remind that for a set $S\subseteq V(G)$ and vertex $v\not\in S$, the non-neighbors of $v$ in $S$ are the vertices of $S$ that are not adjacent to $v$. 

 \begin{definition}[Preserving path]\label{def:preserving}
	For a graph $G$ and integer $k\ge 0$, we say that $S\subseteq V(G)$ is \emph{$k$-neighbor-preserving}, or simply \emph{$k$-preserving}, if each vertex $v \in V(G)\setminus S$ has at least $\ndef(v)$ non-neighbors in $S$.
	If $P$ is a path in $G$ such that $V(P)$ is preserving, we say that $P$ is a \emph{preserving path} in $G$.
\end{definition}

The following lemma shows that a $k$-preserving path in $G$ guarantees a tree on $\delta(G)+k$ vertices of large enough diameter in $G$.

\begin{lemma}\label{lemma:preserving_path_gives_solution}
	Let $G$ be a graph, $k\ge 0$ be an integer, and let $P$ be a $k$-preserving path in $G$. If $\delta(G)\ge k$, then $G$ contains every tree $T$ on $\delta(G)+k$ vertices with $\diam(T)\ge 2|V(P)|-1$.
\end{lemma}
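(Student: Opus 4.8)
Write $p=|V(P)|$. Since $\diam(T)\ge 2p-1$, the tree $T$ contains a path $R$ on at least $2p$ vertices; fix a sub‑path $Q\subseteq R$ on exactly $p$ vertices and let $\sigma_0\colon V(Q)\to V(P)$ be the obvious subgraph isomorphism sending the $i$‑th vertex of the path $Q$ to the $i$‑th vertex of the path $P$ (this is a subgraph isomorphism because consecutive vertices of $P$ are adjacent in $G$). The plan is to extend $\sigma_0$ to a subgraph isomorphism of all of $T$ into $G$. Since $|V(T)|=\delta(G)+k$, Chv\'atal's argument in the form of \Cref{prop:chvatal-generalized} only lets us extend a connected subtree up to $\delta(G)+1$ vertices; the point of $P$ being $k$-preserving (\Cref{def:preserving}) is precisely to supply the extra slack for the remaining $k-1$ vertices, in the spirit of Idea~IV from the overview: after $Q$ is mapped onto $P$, every image of a vertex of $T$ that lands \emph{outside} $V(P)$ automatically has at least its neighbor‑deficiency many \emph{occupied} non‑neighbors, namely inside $V(P)$.

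\textbf{The clean extension.} Concretely, I would choose a set $L$ of $k-1$ leaves of $T$, put $W:=N_T(L)$, extend $\sigma_0$ by \Cref{prop:chvatal-generalized} to a subgraph isomorphism $\sigma_1$ of $T-L$ into $G$ (legitimate since $Q\subseteq T-L$ and $|V(T-L)|=\delta(G)+1$), and then invoke \Cref{lemma:multi_saved_k_neighbours} on $\sigma_1$. For the hypothesis of \Cref{lemma:multi_saved_k_neighbours} I need $|\Ima\sigma_1\setminus N_G[\sigma_1(w)]|\ge\ndef(\sigma_1(w))$ for every $w\in W$. Now $\sigma_1$ maps $V(Q)$ bijectively onto $V(P)$, hence it maps every vertex of $T-L$ that is not on $Q$ to a vertex of $G$ outside $V(P)$; so if $w\notin V(Q)$ then $\sigma_1(w)\notin V(P)$, and since $P$ is $k$-preserving, $\sigma_1(w)$ has at least $\ndef(\sigma_1(w))$ non‑neighbors inside $V(P)\subseteq\Ima\sigma_1$, which is exactly the required bound. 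Thus the whole argument goes through provided $L$ can be chosen so that \emph{no vertex of $W=N_T(L)$ lies on $Q$}. The path‑to‑path mapping followed by \Cref{prop:chvatal-generalized} and then \Cref{lemma:multi_saved_k_neighbours} is the backbone; this is where the hypothesis $\delta(G)\ge k$ enters (so that $|V(T-L)|=\delta(G)+1\le|V(T)|$ is meaningful and $L$ of the right size makes sense).

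\textbf{The obstacle and the fallback.} The real work is therefore the combinatorial claim that $T$ has $k-1$ leaves whose parents avoid the chosen $p$-vertex window $Q$ on the longest path $R$ — and a fallback for when this fails, e.g.\ when $T$ has fewer than $k-1$ leaves (a subdivision of a small tree) or when many leaves are concentrated at a single vertex of $R$. For the fallback I would not route through \Cref{lemma:multi_saved_k_neighbours} but extend $\sigma_0$ directly, leaf by leaf as in the proof of \Cref{prop:chvatal-generalized}, and schedule every ``bad'' step — one whose anchor vertex is mapped \emph{into} $V(P)$, where the preserving guarantee gives nothing and only $\deg_G(\cdot)\ge\delta(G)$ is available — to happen while fewer than $\delta(G)$ vertices of $G$ have been used, so that the anchor still has a free neighbor. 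The number of such bad steps equals the number of components of $T-V(Q)$ attached to $Q$; choosing $Q$ to minimize the total $T$-degree over all $p$-windows of $R$ (on average this sum is at most $\tfrac{p}{|V(R)|}\sum_{v\in V(R)}\deg_T(v)\le\tfrac12\bigl(|V(R)|+\delta(G)+k\bigr)$), together with the standard fact that a branch hanging off a longest path at its $\ell$‑th vertex has depth at most $\min\{\ell-1,|V(R)|-\ell\}$, keeps this count small enough; and when $T$ itself has only $O(k)$ leaves, any $Q$ meets only $O(k)$ branches and the same scheduling applies. I expect this case bookkeeping — proving that in every regime one of the two schemes applies, i.e.\ that each of the ``extra'' $k-1$ vertices can be parked either at an image outside $V(P)$ (using preservation) or early enough in the process (using $\deg_G\ge\delta(G)$) — to be the main difficulty; the arithmetic in each individual step is routine and identical to the Chv\'atal/\,\Cref{lemma:multi_saved_k_neighbours} counting.
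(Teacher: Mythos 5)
Your intuition for the ingredients (map a path of $T$ onto $P$, use Chv\'atal-style greedy extension, invoke the preservation property only when the anchor vertex's image falls outside $V(P)$) is correct, but there is a genuine gap: you never pin down a case-free way to guarantee this, and the ``fallback'' you sketch is not a proof. Concretely, your clean scheme via \Cref{lemma:multi_saved_k_neighbours} needs $T$ to have $k-1$ leaves whose neighbors avoid $V(Q)$, and your fallback relies on (i) an averaging claim over $p$-windows that bounds $\sum_{v\in V(Q)}\deg_T(v)$ but not the number of extension steps anchored inside $V(P)$, (ii) a depth bound on branches off a longest path that does not control their sizes, and (iii) an unquantified ``scheduling'' so that bad steps happen while fewer than $\delta(G)$ vertices are used. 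None of these is made precise enough to close the proof, and there is no guarantee these regimes are exhaustive.

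The paper's proof avoids all of this by exploiting the factor of $2$ in the hypothesis $\diam(T)\ge 2|V(P)|-1$, which you discard (you take only a $|V(P)|$-vertex subpath). The paper takes a path $Q$ in $T$ with exactly $2|V(P)|-1$ edges, deletes its middle edge to split $T$ into two parts, and lets $R$ be the $|V(P)|$-vertex half of $Q$ lying in the part of size $\le |V(T)|/2$. Then $|N_T[V(R)]|\le |V(T)|/2+1=(\delta(G)+k)/2+1\le\delta(G)+1$, which is exactly where the hypothesis $\delta(G)\ge k$ is used. One maps $R$ onto $P$ and then greedily maps the rest of $N_T[V(R)]$; the bound $\le\delta(G)+1$ makes this possible. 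After that, every further extension step picks an edge $xy$ with $x$ already mapped and $y$ not; since all of $N_T(V(R))$ is already mapped, $x\notin V(R)$ and hence $\sigma(x)\notin V(P)$, so the $k$-preserving property supplies $\ndef(\sigma(x))$ occupied non-neighbors and a free neighbor for $y$ always exists. There is no leaf-shaving, no invocation of \Cref{lemma:multi_saved_k_neighbours}, and no case analysis. This single structural choice of $R$ (the half of a double-length path inside the smaller side of $T$) is the key idea your proposal is missing.
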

\begin{proof}
Let $T$ be a tree satisfying the conditions of the lemma. 
	We construct an embedding of  $T$ into  $G$ by making using the preserving path $P$.
	
\begin{figure}[ht]
\centering
\scalebox{0.7}{
\input{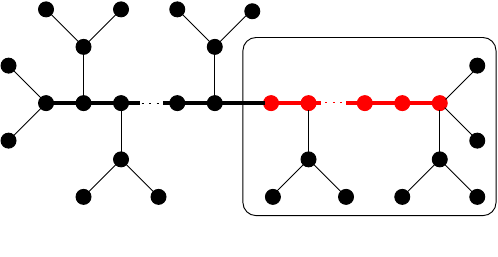_t}}
\caption{Construction of $Q$ and $R$; $Q$ is shown by a thick line and $R$ is shown in red.}
\label{fig:R}
\end{figure}	
	
	Let $Q$ be  a path in $T$ with exactly $2|V(P)|-1$ edges. 
	Such a path exists because the diameter of $T$ is at least $2|V(P)|-1$.
	Removal of the middle edge of the path $Q$ splits $T$ (and $Q$) into two parts.
	One of these parts of $T$ consists of at most $|V(T)|/2$ vertices. We denote by $R$ the subpath of $Q$ belonging to this part of $T$. Let us remark that  $R$ is also a path in $T$ and that 
	$|N_T[V(R)]|\le |V(T)|/2+1=(\delta(G)+k)/2+1\le \delta(G)+1$.
	 (See \Cref{fig:R}.)
	
	We start constructing isomorphism $\sigma$ by mapping path $R$ into path $P$.
	Then we map all remaining neighbors of $R$ in $T$, i.e.\ $N_T(V(R))$, into the neighbors of their images in $G$.
	Since $|N_T[V(R)]|\leq \delta(G)+1$, such a mapping is always possible. 
	
	Thus, so far $\sigma$ maps   $N_T[V(R)]$ into $N_G[V(P)]$.  
For the remaining vertices of $V(T)\setminus N_T[V(R)]$ we extend $\sigma$ by repeating the following procedure. 
We pick up an edge $xy\in E(T)$ such that $x$ is already mapped and $y $ is not mapped. Let  $v:=\sigma(x)$. We claim that at least one neighbor of $v$ in $G$ is not in  $ \Ima \sigma$ yet; thus we can extend $\sigma$ by mapping $y$ into this free neighbor.   
Since all neighbors of $R$ in $T$ are already mapped, we have that $x$, which is adjacent to unmapped vertex $y$, is not in  $  V(R)$. Hence $v\notin V(P)$.  Because $P$ is a  {preserving path}, vertex $v$ has at least  $\ndef(v)$ non-neighbors in $V(P)\subseteq \Ima \sigma$. This means that $v$ is adjacent to at most  $|\Ima \sigma|-1 - \ndef(v)$ vertices occupied by $\sigma$. 
At least one vertex of $T$, namely $y$, is not mapped yet, and thus $|\Ima  \sigma| \leq |V(T)|-1=\delta(G)+k-1$.
Concluding, we have that the number of $v$ occupied by $\sigma$ is at most
 $$|\Ima \sigma|-1 - \ndef(v)\le \delta(G)+k-2 -\ndef(v)=((\delta(G)+k-1)-\ndef(v))-1\le \deg_G(v)-1,$$
and thus $v$ has  a  free neighbor. 
	
	The resulting isomorphism $\sigma$ is the subgraph isomorphism from $T$ into $G$.
\end{proof}

The rest of this subsection shows a way to construct a preserving path from a preserving set.
We start with a lemma showing how to convert a ``diameter modulator''  into a preserving set.

\begin{lemma}\label{lemma:diameter_modulator_to_preserving_path}
Let $G$ be a connected graph and 
	$k\ge 1$. If there is a vertex set  $S\subseteq V(G)$ such that $\diam(G-S)\ge 2k$ and $\delta(G)\ge |S|+k-1$, then $G$ contains a $k$-preserving path of length at most $4k-2+|S|$ in $G$.
\end{lemma}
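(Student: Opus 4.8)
The engine behind preserving paths is \emph{Idea IV}: a shortest path in a graph $H$ has no chords, and every vertex of $H$ is adjacent to at most three \emph{consecutive} vertices of it. Hence if $R$ is a shortest path in $H$, then every vertex of $H$ has at least $|V(R)|-3$ non-neighbors inside $V(R)$. Since $\deg_G(v)\ge\delta(G)$ gives $\ndef(v)\le k-1$ for every $v\in V(G)$, any shortest path with at least $k+1$ edges is automatically $k$-preserving. So the plan is a case split on $\diam(G)$.

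\textbf{Easy case: $\diam(G)\ge k+1$.} Take a shortest path $P^\star$ in $G$ realizing $\min\{\diam(G),2k\}$; it has between $k+1$ and $2k$ edges, so by the observation above every vertex of $G$ --- whether it lies in $S$ or not --- has at least $k-1\ge\ndef(\cdot)$ non-neighbors in $V(P^\star)$, i.e.\ $V(P^\star)$ is $k$-preserving, and the length of $P^\star$ is at most $2k\le 4k-2+|S|$ since $k\ge 1$. Here $S$ is not used at all.

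\textbf{Hard case: $\diam(G)\le k$.} Now we invoke $\diam(G-S)\ge 2k$. Fix $u,v\in V(G)\setminus S$ with $d_{G-S}(u,v)=2k$ and let $Q=q_0q_1\cdots q_{2k}$ be a shortest $(u,v)$-path of $G-S$. Any chord of $Q$ in $G$ is an edge of $G-S$ joining non-consecutive $q_i$'s, contradicting minimality; hence every vertex of $V(G)\setminus S$ has at least $(2k+1)-3=2k-2\ge k-1\ge\ndef(\cdot)$ non-neighbors in $V(Q)$, so all non-$S$ vertices are already taken care of by $Q$. It remains to handle the vertices of $S$ within the length budget. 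Since $d_G(q_0,q_{2k})\le\diam(G)\le k<2k=d_{G-S}(q_0,q_{2k})$, every shortest $(q_0,q_{2k})$-path of $G$ has at most $k$ edges and must pass through $S$. I would use such a $G$-shortcut, together with the density hypothesis $\delta(G)\ge|S|+k-1$, to enlarge $Q$ to a path $P$ that also accounts for all of $S$: incorporate into $P$ the ``problematic'' vertices of $S$ (the low-degree ones, which are the only ones that can fail to have enough non-neighbors on $Q$) and pay for the necessary connections with the at most $2k-2$ extra vertices the budget allows, so that $|V(P)|\le (2k+1)+|S|+(2k-2)$ and $V(P)$ is $k$-preserving.

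\textbf{Main obstacle.} All of the above is routine except the last step. The technical heart is organising the vertices of $S$ --- together with the chordless path $Q$ and the $G$-shortcut between its endpoints --- into a single path while spending only $2k-2$ additional non-$S$ vertices, so that the total length stays at most $4k-2+|S|$. This is exactly the point where the assumptions $\diam(G)\le k$ and $\delta(G)\ge|S|+k-1$ do the work: the first bounds the length of every connecting segment, and the second guarantees enough high-degree vertices to route through and enough unused vertices to route around, making the bookkeeping go through.
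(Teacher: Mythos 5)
There is a genuine gap, and you essentially flag it yourself in the ``Main obstacle'' paragraph: the handling of $S$ is not actually carried out, and the machinery you gesture at (a $G$-shortcut between the endpoints of $Q$, spending ``$2k-2$ extra non-$S$ vertices'' to route through the dense graph) is not the mechanism that makes the bound work. The paper's trick is much simpler and needs \emph{no} extra connector vertices at all. Having built a path $P$ on $2k+1$ vertices that already covers $V(G)\setminus S$, you try to insert each $u\in S$ between two \emph{consecutive} vertices of $P$ that are both neighbors of $u$; each insertion extends $P$ by exactly one vertex of $S$, costing nothing else. If $u$ cannot be inserted, then among every consecutive pair of $P$ at least one vertex is a non-neighbor of $u$, so $u$ already has at least $\lfloor|V(P)|/2\rfloor\ge k\ge\ndef(u)$ non-neighbors on $P$ and is covered anyway. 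The resulting length is at most $2k+|S|$, within budget; no shortcut in $G$, no bound on $\diam(G)$, and not even the hypothesis $\delta(G)\ge|S|+k-1$ is used in this branch. Your proposed bookkeeping via ``route through high-degree vertices'' is not shown to go through and there is no reason to believe it produces a single simple path within the $4k-2+|S|$ budget.

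There is a second, more structural gap: when $\diam(G)\le k$, the hypothesis $\diam(G-S)\ge 2k$ does \emph{not} let you ``fix $u,v$ with $d_{G-S}(u,v)=2k$'', because $G-S$ may be disconnected (indeed $\diam(G-S)=\infty\ge 2k$ is exactly the disconnected case), in which case the largest finite distance inside a component may be far less than $2k$. The paper treats this separately: it takes a shortest path $Q$ between two components $C_1,C_2$ (of length at most $2k-1$ since one may assume $\diam(G)<2k$), then grows paths $R_1,R_2$ of length $k-1$ inside $C_1,C_2$ starting from $Q$'s endpoints --- this is where $\delta(G)\ge|S|+k-1$ is actually needed, to ensure $\delta(C_i)\ge k-1$ --- and concatenates $R_1,Q,R_2$ into a path containing $k$ vertices of each of $C_1$ and $C_2$. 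That disconnected case is precisely where the ``$4k$'' in the bound comes from, and your diameter-of-$G$ case split does not recover it. Your ``easy case'' $\diam(G)\ge k+1$ is fine on its own, but it does not cover the disconnected-$G-S$ scenario once $\diam(G)\le k$.
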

\begin{proof}
	We start the proof with the case when $G-S$ is connected. In this case, we 
	select any two vertices in $G-S$ such that the distance between them in $G-S$ is exactly $2k$. Let $P$ be a shortest path connecting these two vertices. Because $P$ is a shortest path, each vertex  in $V(G)\setminus S$  has at most three neighbors in $P$.
	Since $|V(P)|=2k+1$, each vertex $v \in V(G)\setminus S$ has at least $(2k+1)-3=2(k-1)\ge\ndef(v)$ non-neighbors in $V(P)$.
	
	Let us remark that $P$ is not yet $k$-preserving   because some vertices of $S$ could have less than $\ndef$ non-neighbors in $V(P)$.
	%
We make $P$ $k$-preserving by inserting some vertices of $S$. More formally, if a vertex $u \in S$ has two consecutive neighbors $v,w$ in $P$, we add $u$ to $P$ by inserting it between $v$ and $w$. 
	We repeat this iteratively until either $S$ is exhausted or none of the vertices from $S$ can be inserted into $P$.
	Note that every vertex that remained in $S$ has at least $\lfloor|V(P)|/2\rfloor\ge k$ non-neighbors in $V(P)$.
	Hence, $P$ is a preserving path in $G$ of length at most $2k+|S|$.
	
\begin{figure}[ht]
\centering
\scalebox{0.7}{
\input{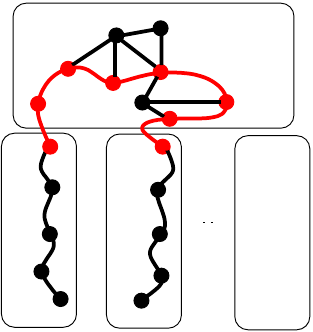_t}}
\caption{The case when $G-S$ is not connected; $Q$ is shown in red.}
\label{fig:GS}
\end{figure}		
	
	Now consider the case when $G-S$ is not connected (see \Cref{fig:GS}). We assume that $\diam(G)<2k$, otherwise we can take $S:=\emptyset$ and proceed as in the connected case of the proof. Let $C_1, C_2, \ldots, C_p$ be  connected components of 
$G-S$, $p\ge 2$. Then  $\delta(C_i)\ge \delta(G)-|S|\ge k-1$ for each $i \in [p]$.
	
	In graph $G$, we take a shortest path $Q$ between $V(C_1)$ and $V(C_2)$. Since the diameter of $G$ is less than $2k$, we have that the length of $Q$ is at most $2k-1$.
	Let $v_1$ and $v_2$ be the endpoints of $Q$ such that  $v_i \in V(C_i)$, $i \in \{1,2\}$.
	For each $i \in \{1,2\}$, we construct a path $R_i$ of length exactly $k-1$ inside $C_i$ starting from $v_i$.
	Such path always exists in $C_i$ since $k-1\le \delta(C_i)$\footnote{A path of length $k-1$ is a tree on $k$ vertices. By \Cref{prop:chvatal-generalized}, $C_i$ contains it as a subgraph even if we fix an arbitrary starting vertex.}.
	Paths $R_1$ and $R_2$ are disjoint and for each $i \in \{1,2\}$, $R_i$ and $Q$ have only one common vertex, namely $v_i$.
	We obtain path $P$ by concatenating paths $R_1, Q, R_2$. 
	The length of $P$ is at least $2k-1$ (the lengths of $R_1$ and $R_2$ are exactly $k-1$ plus the length of $Q$ is at least one) and at most $4k-3$.
	Since $P$ contains $k$ vertices of $C_1$, it has at least $k>\ndef(v)$ non-neighbors for each $v \in V(G)\setminus ( S \cup V(C_1))=\bigcup_{i=2}^p V(C_i)$.
	Symmetrically, it contains $k$ vertices of $C_2$, so it also covers the deficiency of vertices of $C_1$.
	Hence, only vertices in $S$ can have less than $k$ non-neighbors in $P$. For such vertices, we use exactly the same trick as in the connected case:  
	Since $|V(P)|/2\ge k$, we can repeatedly insert such vertices of $S$ into $P$. 
	The resulting path $P$  is $k$-preserving and its length is at most $4k-3+|S|$.
\end{proof}

We finally show how to construct a preserving path from a preserving set in $G$ using the lemma above.

\begin{lemma}\label{lemma:preserving_set_to_preserving_path}
Let $G$ be a connected graph and 
	$k\ge 1$. If $G$ contains a  $k$-preserving set $S$ such that $\delta(G)\ge (2k-1) \cdot |S|$, then $G$ also contains a $k$-preserving path of length at most $(2k - 1) \cdot |S|$.
\end{lemma}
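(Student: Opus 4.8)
The plan is to thread a path through the vertices of $S$ greedily, segment by segment along shortest paths, and to fall back on \Cref{lemma:diameter_modulator_to_preserving_path} the instant this greedy process cannot make a short step. The observation that turns any such path into a preserving one is monotonicity: if $S\subseteq V(P)$ and $S$ is $k$-preserving, then $V(P)$ is $k$-preserving too, since every $v\in V(G)\setminus V(P)$ is in particular not in $S$ and thus has at least $\ndef(v)$ non-neighbors in $S\subseteq V(P)$. First I would dispose of the degenerate cases: if $S=\emptyset$ the hypothesis forces $\ndef(v)=0$ for all $v$, so any single vertex is a $k$-preserving path of length $0$; if $S=\{s\}$ the one-vertex path $(s)$ already satisfies $V(P)=S$, so it is $k$-preserving of length $0\le 2k-1$. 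Hence assume $t:=|S|\ge 2$ and fix an ordering $S=\{s_1,\dots,s_t\}$.

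Next I would run the following construction, maintaining the invariant that after processing index $i$ we hold a simple path $P$ in $G$ with $\{s_1,\dots,s_i\}\subseteq V(P)$ and $|E(P)|\le (2k-1)(i-1)$. Start with $P=(s_1)$, which satisfies the invariant for $i=1$. To process $s_{i+1}$: if $s_{i+1}\in V(P)$ there is nothing to do. Otherwise pick an endpoint $v$ of $P$, set $X:=V(P)\setminus\{v\}$ and $G':=G-X$, and branch on $\mathrm{dist}_{G'}(v,s_{i+1})$, with the convention that this distance is $\infty$ if $s_{i+1}$ is unreachable from $v$ in $G'$. If $\mathrm{dist}_{G'}(v,s_{i+1})\le 2k-1$, take a shortest $(v,s_{i+1})$-path $Q$ in $G'$; since the vertices of $Q$ lie in $V(G)\setminus X$ we have $V(Q)\cap V(P)=\{v\}$, so gluing $Q$ to $P$ at their common endpoint $v$ yields a simple path containing $\{s_1,\dots,s_{i+1}\}$ of length at most $(2k-1)(i-1)+(2k-1)=(2k-1)i$, restoring the invariant, and we continue. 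If instead $\mathrm{dist}_{G'}(v,s_{i+1})\ge 2k$ (this captures both the ``too far'' and the ``disconnected'' situations at once), then $\diam(G-X)\ge 2k$, and I claim \Cref{lemma:diameter_modulator_to_preserving_path} applied to the set $S':=X$ finishes the proof: since $i+1\le t$ the invariant gives $|X|=|E(P)|\le (2k-1)(i-1)\le (2k-1)(t-2)$, whence $\delta(G)-|X|\ge (2k-1)t-(2k-1)(t-2)=2(2k-1)\ge k-1$, so the hypothesis $\delta(G)\ge |X|+k-1$ of that lemma holds and it produces a $k$-preserving path of length at most $4k-2+|X|\le 4k-2+(2k-1)(t-2)=(2k-1)t=(2k-1)|S|$.

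If the second branch never fires, then after processing $s_t$ we have a simple path $P$ with $S\subseteq V(P)$ and $|E(P)|\le (2k-1)(t-1)\le (2k-1)|S|$, and by the monotonicity observation $P$ is $k$-preserving; in either outcome we obtain the desired path. The only delicate point is the length bookkeeping: the additive $4k-2$ overhead of \Cref{lemma:diameter_modulator_to_preserving_path} must fit inside the budget $(2k-1)|S|$, which works exactly because the fallback can only be invoked while processing some $s_{i+1}$ with $i+1\le t$, so at least two ``per-vertex budgets'' of size $2k-1$ remain unspent (this also uses $t\ge 2$). I would also verify the routine points that the gluing step genuinely yields a simple path, and that \Cref{lemma:diameter_modulator_to_preserving_path} does allow $G-S'$ to be disconnected---its proof handles exactly that case---which is why the ``$s_{i+1}$ unreachable'' situation needs no separate argument.
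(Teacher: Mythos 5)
Your proof is correct and follows essentially the same route as the paper's: grow the path by stitching shortest segments through the chosen ordering of $S$ in the graph with the already-used prefix deleted, and invoke \Cref{lemma:diameter_modulator_to_preserving_path} the instant a segment would be too long or unreachable, observing that failure can only occur with at least two per-vertex budgets of size $2k-1$ unspent. The small additions you make — explicitly handling $|S|\le 1$, spelling out the monotonicity observation that a superset of a $k$-preserving set is again $k$-preserving, and double-checking that the fallback lemma covers the disconnected case — are careful polish but do not change the argument or the length accounting.
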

\begin{proof}
	The preserving path is found by joining all vertices of $S$ in some order via shortest paths.
	
	Let $s_1, s_2, \ldots, s_t$ be the vertices of $S$ in an arbitrary order, $t=|S|$.
	We construct a sequence of paths $P_1, P_2, \ldots, P_t$, such that for each $i \in [t]$, $P_i$ is a path between $s_1$ and $s_i$ with $V(P_i)\cap S=\{s_1, s_2, \ldots, s_i\}$.
	
	We start with the path $P_1$ that consists only of a single vertex $s_1$.
 	To obtain $P_{i+1}$ from $P_i$, we do the following. If $\diam(G-V(P_i-s_i)) < 2k$, we 
	concatenate $P_{i}$ with the shortest path between $s_i$ and $s_{i+1}$ in $G-V(P_i-s_i)$.
	We repeat this process $(t-1)$ times unless the condition $\diam(G-V(P_i-s_i)) < 2k$ fails.
 	If we succeed, then we have a path  $P=P_t$ of length at most $(t-1)\cdot(2k-1)$.
 	Since $S\subseteq V(P)$, $P$ is a preserving path in $G$.
 	
 	Suppose that for some $1<i<t$, we succeed to construct $P_i$ but we cannot proceed further because  $\diam(G-V(P_i-s_i))\ge 2k$.
	The length of $P_i$ is at most $(i-1)\cdot (2k-1)\le(t-2)\cdot(2k-1)$.
	Then $|V(P_i-s_i)|\le (t-2)\cdot (2k-1)$. By Lemma's assumption, we have $\delta(G)\ge (2k-1) \cdot |S|$, hence  $\delta(G)\ge |V(P_i-s_i)|+k$.
	By applying   \Cref{lemma:diameter_modulator_to_preserving_path} to $G$ and $V(P_i-s_i)$, we  obtain a preserving path of length at most $|V(P_i-s_i)|+4k-2\le (2k-1)\cdot (|S|-2)+4k-2=(2k-1)\cdot|S|$ in $G$.
\end{proof}

\subsection{Finding preserving sets of order $\log\delta(G)$}\label{subsec:largediameterauxdva} 

We first need the technical lemma about the properties of vertices with degrees below some threshold in sufficiently large graphs. Informally, the lemma says that for the set $A$ of vertices of   degrees at most $(1+\epsilon)\cdot\delta(G)$, it is possible to select sufficiently small vertex set $S$ such that no vertex of $A$ dominates all vertices of $S$. 

\begin{lemma}\label{lemma:anti_dom_set_existence}
	Let $G$ be an $n$-vertex graph with $\delta(G)\ge 2$.
	Let $\epsilon \in (0, 1)$ be a given number such that $n \ge (1+\epsilon)^2 \cdot \delta(G)$.
	Then there exists a set $S\subseteq V(G)$ such that $S$ has at least one non-neighbor for each vertex in $A$ and $|S|< 4\log \delta(G) / \log (1+\epsilon) + 1$, where $A$ is the set of vertices in $G$ of degree less than $(1+\epsilon)\cdot\delta(G)$.
\end{lemma}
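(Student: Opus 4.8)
The plan is to build $S$ greedily, adding one vertex at a time while tracking which vertices of $A$ are still ``uncovered''. Having chosen $S_i=\{s_1,\dots,s_i\}$ (with $S_0=\emptyset$), call $v\in A$ \emph{uncovered by $S_i$} if $S_i$ contains no non-neighbor of $v$, equivalently if $S_i\subseteq N_G[v]$; let $A_i\subseteq A$ be the set of such vertices, so $A_0=A$. As soon as $A_i=\emptyset$ we stop and output $S:=S_i$, which by construction has a non-neighbor of every vertex of $A$. (If $A=\emptyset$ we are done with $S=\emptyset$; note $A$ always contains every minimum-degree vertex, so in fact $A\neq\emptyset$.) Thus the statement reduces to showing that $|A_i|$ can be made to shrink geometrically, by a factor essentially $\tfrac{1}{1+\epsilon}$ per step, already starting from $|A_1|\le(1+\epsilon)\delta(G)+1$; this costs only $\Oh(\log\delta(G)/\log(1+\epsilon))$ steps.

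The per-step choice is driven by a first-moment argument. For any nonempty $A_i$, using $v\in N_G[u]\iff u\in N_G[v]$,
\[
\sum_{u\in V(G)}|A_i\cap N_G[u]|=\sum_{v\in A_i}|N_G[v]|\le |A_i|\cdot\left((1+\epsilon)\delta(G)+1\right),
\]
since every $v\in A_i\subseteq A$ has $\deg_G(v)<(1+\epsilon)\delta(G)$. Hence some $u\in V(G)$ satisfies $|A_i\cap N_G[u]|\le|A_i|\cdot\frac{(1+\epsilon)\delta(G)+1}{n}$; setting $s_{i+1}:=u$ gives $A_{i+1}=A_i\cap N_G[s_{i+1}]$ with $|A_{i+1}|\le|A_i|\cdot\frac{(1+\epsilon)\delta(G)+1}{n}$. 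Applying this already at $i=0$, together with $|A_0|=|A|\le n$, yields $|A_1|\le(1+\epsilon)\delta(G)+1$, after which $n$ drops out of the analysis.

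To count the steps, use the hypothesis $n\ge(1+\epsilon)^2\delta(G)$: the shrinking factor is at most $\frac{(1+\epsilon)\delta(G)+1}{(1+\epsilon)^2\delta(G)}$, which is at most $(1+\epsilon)^{-1/2}$ by a short computation using $\delta(G)\ge 2$. Iterating from $|A_1|$ gives $|A_t|\le\left((1+\epsilon)\delta(G)+1\right)(1+\epsilon)^{-(t-1)/2}$, a nonnegative integer that vanishes once $t-1>\frac{2\log((1+\epsilon)\delta(G)+1)}{\log(1+\epsilon)}$; and since $(1+\epsilon)\delta(G)+1\le\delta(G)^2$ for $\delta(G)\ge 2$, $\epsilon<1$, this bound is below $\frac{4\log\delta(G)}{\log(1+\epsilon)}$. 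A routine bookkeeping argument then gives $|S|<\frac{4\log\delta(G)}{\log(1+\epsilon)}+1$, as claimed, and the construction is clearly polynomial time.

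The only genuinely fiddly part is that last paragraph: the stray additive $1$ coming from $|N_G[v]|=\deg_G(v)+1$, and the passage from $\log(\delta(G)^2)=2\log\delta(G)$ to $4\log\delta(G)$, must be reconciled with $n\ge(1+\epsilon)^2\delta(G)$ and $\delta(G)\ge 2$ — which is exactly why the statement carries the deliberately loose constant $4$. Conceptually, though, everything is routine: greedy covering, one vertex per step, with the per-step progress certified by averaging, and a clean geometric decay once $|A_i|$ is brought below about $(1+\epsilon)\delta(G)$ by the very first pick.
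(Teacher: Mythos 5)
Your structure is sound---greedy covering, one vertex per step, progress certified by averaging---and this matches the paper's approach in spirit. However, your arithmetic breaks at exactly the ``fiddly'' step you flag. The claim that
\[
\frac{(1+\epsilon)\delta(G)+1}{(1+\epsilon)^2\delta(G)} \le (1+\epsilon)^{-1/2}
\]
``by a short computation using $\delta(G)\ge 2$'' is false. Rearranging, this is equivalent to $\delta(G)\ge \frac{1}{(1+\epsilon)\left(\sqrt{1+\epsilon}-1\right)}$, and the right-hand side behaves like $2/\epsilon$ as $\epsilon\to 0$, so it is incompatible with the standing hypothesis $\delta(G)\ge 2$ once $\epsilon$ is small. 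Worse, for small $\epsilon$ the ratio $\frac{(1+\epsilon)\delta(G)+1}{(1+\epsilon)^2\delta(G)}$ can exceed $1$ outright (e.g. $\delta(G)=2$, $\epsilon=0.1$ gives $3.2/2.42>1$), so the process you describe does not decay at all. The root cause is your use of closed neighborhoods: the extra ``$+1$'' from $|N_G[v]|=\deg_G(v)+1$ is not negligible in the regime where $\epsilon\delta(G)=O(1)$. (A secondary slip: the inequality $(1+\epsilon)\delta(G)+1\le\delta(G)^2$ also fails at $\delta(G)=2$, $\epsilon>1/2$.)

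The paper's proof sidesteps this by using \emph{open} neighborhoods: the averaging step counts, for each $v\in A_i$, only its $\deg_G(v)<(1+\epsilon)\delta(G)$ neighbors, yielding a per-step factor $\frac{(1+\epsilon)\delta(G)}{n}\le\frac{1}{1+\epsilon}<1$ unconditionally. This is the cleaner route, though the paper in turn glosses over the edge case that a vertex $v$ selected as some $s_i$ is dropped from $A_i$ without gaining a non-neighbor (it is simply not adjacent to itself), which is harmless in the paper's applications but is exactly the distinction your closed-neighborhood bookkeeping was designed to track. If you want to keep your formulation, you would need to either move to open neighborhoods and handle selected vertices separately, or add a hypothesis of the form $\epsilon\delta(G)\ge c$ for a suitable constant (which indeed holds wherever the lemma is invoked); absent that, the decay bound and hence the final size bound on $S$ do not follow.
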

\begin{proof}

	We construct the set $S$ starting from $S=\emptyset$. Fix arbitrary $v \in A$.
	By choosing a vertex $u\in V(G)$ uniformly at random, the probability that  $u$ is adjacent to  $v$ is $$\frac{\deg_G(v)}{n}<\frac{(1+\epsilon)\delta(G)}{n}.$$
	
	Hence, the expected  number of vertices in $A$ adjacent to a random $u \in V(G)$ is at most 
	\begin{equation*}\label{equ:exp_val_not_covered}
		(1+\epsilon)\cdot \frac{\delta(G)}{n} \cdot |A|.
	\end{equation*}
Therefore,  there exists $u_1\in V(G)$ such that all vertices of $A$ except at most $(1+\epsilon)\cdot \frac{\delta(G)}{n} \cdot |A|$ 
vertices are non-neighbors of $s_1$.
	We add $s_1$ to $S$.
	Let $A_1\subseteq A$ be the set of vertices that are adjacent to  $s_1$.
	Now apply the arguments above to $A_1$ instead of $A$.
	There is a vertex $s_2 \in V(G)$ that is non-adjacent to all but at most $$(1+\epsilon)\cdot \frac{\delta(G)}{n} \cdot |A_1|$$ vertices of $A_1$.
	Then $A_2$ is defined as a subset of vertices of $A_1$ that are adjacent to  $s_2$. We add $s_2$ to $S$. 
	
	By repeating this process until we arrive at $A_t=\emptyset$,  we obtain the family $\{A=A_0, A_1, \dots, A_t\}$ of subsets of $A$.
	For every $v\in A$, there is $i\in \{0,\dots, t-1\}$, such that $v\in A_{i}\setminus A_{i+1}$, and thus $v$ is non-adjacent to $s_i$.
	

	It remains to show the upper bound on the size of $S$.
	Observe that
	\begin{equation*}
		|A_i|\le (1+\epsilon)^i \cdot \left(\frac{\delta(G)}{n}\right)^i \cdot |A|.
	\end{equation*}

	To show that $A_t=\emptyset$ for $t:=\lceil4\log\delta(G)/\log(1+\epsilon)\rceil$, it is enough to show that
	\begin{equation*}
		 (1+\epsilon)^t \cdot \left(\frac{\delta(G)}{n}\right)^t < \frac{1}{n},
	\end{equation*}
	or, equivalently,
	\begin{equation}\label{eqn:preserving_set_main}
		(1+\epsilon)^{-t} \cdot \left(\frac{n}{\delta(G)}\right)^{t}> n.
	\end{equation}

	Since $t < 4\log \delta(G)/\log (1+\epsilon) + 1$, it follows that $(1+\epsilon)^t<\delta(G)^4\cdot (1+\epsilon)<\delta(G)^5$.
	It also holds that $t\ge 5$ since $\delta(G)>1+\epsilon$.
	
	We split the proof of \Cref{eqn:preserving_set_main} in two cases.
	The first case is when $n\ge \delta(G)^4$. Then 
	\begin{equation*}
		(1+\epsilon)^{-t} \cdot \left(\frac{n}{\delta(G)}\right)^{t}>
		\frac{1}{\delta(G)^5} \cdot \left(n^{3/4}\right)^{5}>\frac{1}{n^{5/4}}\cdot n^{15/4}>n,
	\end{equation*}
	and \Cref{eqn:preserving_set_main} follows.
	For  $n<\delta(G)^4$, we use the condition of the lemma that $n > (1+\epsilon)^2\cdot\delta(G)$. Then
	\begin{equation*}
		(1+\epsilon)^{-t} \cdot \left(\frac{n}{\delta(G)}\right)^{t}>(1+\epsilon)^{-t} \cdot \left(1+\epsilon\right)^{2t}=(1+\epsilon)^t>(1+\epsilon)^{4\log\delta(G)/\log(1+\epsilon)}=\delta(G)^4>n.
	\end{equation*}

	This completes the proof of \Cref{eqn:preserving_set_main}. Hence, $|A_t|<\frac{1}{n}\cdot|A|<1$ and therefore, $|S|\leq 
	\lceil4\log\delta(G)/\log(1+\epsilon)\rceil$. 
	
\end{proof}


We now use  \Cref{lemma:anti_dom_set_existence} for extracting a $k$-preserving set from  $G$.

\begin{lemma}\label{lemma:there_exists_preserving_set}
	Let $G$ be a $n$-vertex graph and $k,p>1$ be two integers.
	Let $q=4k^{p}\cdot\log\delta(G)$.
	If $n\ge (1+\frac{3}{k^p})\cdot \delta(G)+qk$ and $\delta(G)\ge qk\cdot (k^p+1)$,
	then $G$ contains a $k$-preserving set $S$ of size at most $qk$.
\end{lemma}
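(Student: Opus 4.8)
The plan is to build the $k$-preserving set $S$ as a disjoint union $S = S_1 \cup \dots \cup S_{k-1}$ in which every $S_j$ has the property that each vertex $v$ with $\ndef(v) \ge 1$ that ends up outside $S$ has at least one non-neighbor inside $S_j$. Since $\ndef(v) = \max\{\delta(G)+k-1-\deg_G(v),0\} \le k-1$ for every vertex, a vertex $v \notin S$ with $\ndef(v) = \ell$ then has one non-neighbor in each of the pairwise disjoint sets $S_1, \dots, S_\ell$, hence at least $\ell = \ndef(v)$ non-neighbors in $S$; so $S$ is $k$-preserving. To meet the size bound it then suffices to keep each $|S_j| \le q$, as this gives $|S| \le (k-1)q \le qk$.

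Each $S_j$ is constructed by the greedy probabilistic deletion scheme from the proof of \Cref{lemma:anti_dom_set_existence}, with the only modification that the candidates for the greedy step are restricted to the vertices not yet used in $S_1 \cup \dots \cup S_{j-1}$ or in the part of $S_j$ built so far. Maintain a set $A$ of ``uncovered'' vertices, initially $A = \{v : \ndef(v) \ge 1\}$. At each step there are more than $n - qk \ge (1+\tfrac{3}{k^p})\delta(G)$ available vertices (by an induction over the $k-1$ rounds fewer than $qk$ vertices are ever used in total, so this lower bound holds throughout); choosing one uniformly at random, a fixed $v \in A$ is a non-neighbor of it with probability at least $1 - \tfrac{\deg_G(v)+1}{n-qk} \ge 1 - \tfrac{\delta(G)+k-1}{(1+3/k^p)\delta(G)}$. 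Because the hypothesis $\delta(G) \ge qk(k^p+1)$ forces $k-1 < \tfrac{\delta(G)}{4k^p}$, this is at least $\tfrac{\epsilon'}{1+\epsilon'}$ for some $\epsilon' = \Theta(1/k^p)$. Hence there is an available vertex $u$ that is a non-neighbor of at least an $\tfrac{\epsilon'}{1+\epsilon'}$-fraction of $A$; we add $u$ to $S_j$ and delete from $A$ both $u$ itself and all of its non-neighbors, shrinking $|A|$ by a factor at most $\tfrac{1}{1+\epsilon'}$.

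It remains to bound the number of steps in one round, and this is the part requiring the most care. After $t$ steps $|A| \le n(1+\epsilon')^{-t}$ since $|A_0| \le n$, so we want $(1+\epsilon')^t > n$; as $n$ is not bounded in terms of $\delta(G)$, I would repeat the dichotomy from the proof of \Cref{lemma:anti_dom_set_existence}. If $n \ge \delta(G)^4$, then the per-step shrink factor is in fact at most $\tfrac{\delta(G)+k-1}{n-qk} = \Oh(\delta(G)/n) = \Oh(n^{-3/4})$, so a constant number of steps already empties $A$; if $n < \delta(G)^4$, then $\log n \le 4\log\delta(G)$, so $t_0 := \lceil 4\log\delta(G)/\log(1+\epsilon')\rceil$ steps suffice. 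Since $\epsilon' = \Theta(1/k^p)$ we have $\log(1+\epsilon') = \Omega(1/k^p)$, hence $t_0 = \Oh(k^p\log\delta(G))$, and the constant $4$ in the definition $q = 4k^p\log\delta(G)$ is exactly what makes $t_0 \le q$. Thus $|S_j| \le q$ for every $j$, and $|S| \le (k-1)q \le qk$.

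The main obstacle is precisely this quantitative bookkeeping: each round must stay within $q$ vertices so that the $k-1$ rounds together do not overrun the $qk$ budget, and this is what dictates the form of the hypotheses. The ``$+\tfrac{3}{k^p}\delta(G)$'' term in the bound on $n$ makes the per-step shrink factor as small as $\tfrac{1}{1+\Theta(1/k^p)}$; the additive ``$+qk$'' term keeps the pool of available candidates above $(1+\tfrac{3}{k^p})\delta(G)$ even after all of $S$ has been removed from it; and $\delta(G) \ge qk(k^p+1)$ makes the ``$+(k-1)$'' slack in the degrees of the uncovered vertices negligible compared to $\delta(G)/k^p$. Apart from tracking these constants the argument is conceptually routine — it is the deletion scheme of \Cref{lemma:anti_dom_set_existence} executed $k-1$ times with disjoint outputs.
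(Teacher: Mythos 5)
Your proof is correct and takes essentially the same approach as the paper: both construct $S$ as a disjoint union of $k-1$ sets, each built by the greedy probabilistic-deletion argument underlying \Cref{lemma:anti_dom_set_existence}, with the same bookkeeping of candidate-pool size, per-step shrink factor $\tfrac{1}{1+\Theta(1/k^p)}$, and the $n$ versus $\delta(G)^4$ dichotomy. The only stylistic difference is that the paper invokes \Cref{lemma:anti_dom_set_existence} as a black box on $G$ minus the previously built rounds (after checking its hypotheses via \Cref{claim:random_choice_preserving}), whereas you re-run its internal argument inline with a restricted candidate pool --- these are equivalent.
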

\begin{proof}
	Throughout the proof, we employ $(q+1)\cdot(k-1)=qk+((k-1)-q)< qk-k$.
	
	Put $\epsilon := \frac{1}{k^p}$.
	Note that $\log(1+\frac{1}{k^p})< \frac{1}{k^p}$ for $k>1$.
	Denote by $B$ the set of all vertices with non-zero deficiency in $G$, that is,  $B:=\{v \in V(G): \ndef(v) > 0\}$.
	Recall that each vertex from $B$ that is not the preserving set $S$ should have at least  $\ndef(v)=\delta(G)+k-1-\deg_G(v)\le k-1$ non-neighbor vertices in  $S$.
	
	\begin{claim}\label{claim:random_choice_preserving}
		Let $X\subseteq V(G)$ be such that $|X|\le qk-k$.
		Then for every $v \in B$, $\deg_{G-X}(v)<(1+\epsilon)\cdot \delta(G-X)$. Also $n-|X|\ge (1+\epsilon)^2\cdot \delta(G-X)$.
	\end{claim}
	\begin{claimproof}
		Note that $$\delta(G-X)\ge \delta(G)-|X|\ge qk\cdot(k^p+1)-|X|\ge (|X|+k)\cdot(k^p+1)-|X|> (|X|+k)\cdot k^p.$$
		
		Let $v\in B$. Since $\ndef(v)>0$, $\deg_G(v)<\delta(G)+k$. Observe
		\begin{equation*}
		\begin{aligned}
		\deg_{G-X}(v)\le &\deg_G(v)< \delta(G)+k\le \delta(G-X)+|X|+k\\ \le& \delta(G-X)+\delta(G-X)/k^p= (1+\epsilon)\cdot\delta(G-X).
		\end{aligned}
		\end{equation*}
		
		For the last inequality of the claim, note that $(1+\epsilon)^2=1+2\epsilon+\epsilon^2< (1+3\epsilon)$ for $\epsilon < 1$.
		Then
		\begin{equation*}
		\begin{aligned}
		n\ge&\left(1+\frac{3}{k^p}\right)\cdot\delta(G)+qk>\left(1+3\epsilon\right)\cdot\delta(G)+|X|\\ >& (1+\epsilon)^2\cdot\delta(G)+|X|\ge (1+\epsilon)^2\cdot\delta(G-X)+|X|.
		\end{aligned}
		\end{equation*}
	\end{claimproof}

 	\Cref{claim:random_choice_preserving} (with $X:=\emptyset$) allows to apply \Cref{lemma:anti_dom_set_existence} to $G$ and $\epsilon$.
	By \Cref{claim:random_choice_preserving}, the application of  \Cref{lemma:anti_dom_set_existence} gives a set $S_1$ that contains at least one  non-neighbor for each vertex in $B$.
	Also $|S_1|\le 4k^p\cdot\log\delta(G)+1=q+1$.
	
	Repeat the application of \Cref{lemma:anti_dom_set_existence}, but now apply it to $G-S_1$ and $\epsilon$.
	The obtained set $S_2$ is of size at most $q+1$ and has at least one non-neighbor for each vertex in $B\setminus S_2$.
	This application is legitimate by \Cref{claim:random_choice_preserving} (with $X:=S_1$).
	
	Repeat this $k-3$ more times to obtain sets $S_3, S_4, \ldots, S_{k-1}$.
	Each time, the set $S_{i+1}$ is obtained from \Cref{lemma:anti_dom_set_existence} applied to $G-(S_1\cup\ldots\cup S_i)$ and $\epsilon$.
	This is legitimate by \Cref{claim:random_choice_preserving} (with $X$ of size at most $i\cdot (q+1)<qk-k$).
	So $S_{i+1}$ has at least one non-neighbor for each vertex in $B\setminus(S_1\cup\ldots\cup S_i)$.
	
	Put $S:=\bigcup_{i=1}^{k-1} S_i$.
	Clearly, $S$ has at least $k-1$ non-neighbors for each vertex in $B\setminus S$.
	It follows that $S$ is the required preserving set in $G$.
\end{proof}

\subsection{Proof of \Cref{thm:large_diameter_many_vertices}}\label{subsec:largediameterauxfinal} 

Now everything is ready to proceed with the 
proof of \Cref{thm:large_diameter_many_vertices}.
Generally speaking, we show that there exists a preserving set in $G$ by \Cref{lemma:there_exists_preserving_set} and then transform it into a preserving path in $G$ using \Cref{lemma:preserving_set_to_preserving_path}.
Then we show that the diameter of $T$ is large enough to apply \Cref{lemma:preserving_path_gives_solution}.

\begin{proof}[Proof of \Cref{thm:large_diameter_many_vertices}]
Let  $k \ge 3$ and let $G$ be a connected graph    with at least $n \ge (1+\frac{4}{k^4})\cdot \delta(G)$ vertices and of minimum vertex degree $\delta(G)>k^{16}$.  We want to show that $G$ contains as a subgraph every 
tree $T$ on at most $ \delta(G)+k$ vertices and of diameter  $\diam(T)\ge 8k^6 \cdot \log \delta(G)$.

In order to apply  \Cref{lemma:there_exists_preserving_set}, we have to show that $G$ satisfies its conditions for $p:=4$.
	\begin{claim}\label{claim:proofoftheoremdiamter}
		\begin{enumerate}
			\item $\delta(G)\ge 4k^5\cdot\log \delta(G)\cdot (k^4+1)$;
			\item $n\ge (1+\frac{3}{k^4})\cdot \delta(G)+4k^5\log \delta(G)$.
		\end{enumerate}
	\end{claim}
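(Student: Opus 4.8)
The plan is to derive both items directly from the two standing hypotheses $\delta(G)>k^{16}$ and $k\ge 3$; connectivity of $G$ and any bound on $n$ sharper than $n\ge(1+\frac{4}{k^4})\,\delta(G)$ are irrelevant here. Write $\delta:=\delta(G)$. Recall that the invocation of \Cref{lemma:there_exists_preserving_set} we are setting up uses $p=4$, so there $q=4k^4\log\delta$, whence $qk=4k^5\log\delta$ and $qk\cdot(k^4+1)=4k^5\log\delta\cdot(k^4+1)$; these are precisely the right-hand sides appearing in items~1 and~2. Both items will follow from the single auxiliary estimate
\begin{equation}\label{eqn:claim-diam-key}
	\delta\ \ge\ 8k^9\log\delta .
\end{equation}

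To establish \eqref{eqn:claim-diam-key}, I would use that $x\mapsto x/\log x$ is increasing for $x>e$, which applies since $\delta>k^{16}>e$. Hence $\delta/\log\delta\ge k^{16}/\log(k^{16})=k^{16}/(16\log k)$, so it suffices to check $k^{16}/(16\log k)\ge 8k^9$, i.e.\ $k^7\ge 128\log k$. For $k=3$ the left side is $2187$ while the right side is $128\log 3<225$; since $k^7-128\log k$ has positive derivative for $k\ge 3$, the inequality holds for all $k\ge 3$, proving \eqref{eqn:claim-diam-key}. The base of the logarithm is immaterial here: the factor-$8k^9$ slack absorbs any constant change of base.

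Item~1 then follows from $k^4+1\le 2k^4$ (valid for $k\ge 1$), giving $4k^5\log\delta\cdot(k^4+1)\le 8k^9\log\delta\le\delta$ by \eqref{eqn:claim-diam-key}. For item~2, rewrite the hypothesis as $n\ge(1+\frac{3}{k^4})\delta+\delta/k^4$; it therefore suffices that $\delta/k^4\ge 4k^5\log\delta$, i.e.\ $\delta\ge 4k^9\log\delta$, which is weaker than \eqref{eqn:claim-diam-key}.

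There is essentially no obstacle: the one point deserving care is verifying the monotonicity of $x/\log x$ on $[k^{16},\infty)$ before plugging in the endpoint, after which each item is a one-line numeric comparison with a comfortable margin. Once \Cref{claim:proofoftheoremdiamter} is in hand, \Cref{lemma:there_exists_preserving_set} supplies a $k$-preserving set of size at most $qk$, \Cref{lemma:preserving_set_to_preserving_path} turns it into a $k$-preserving path, and \Cref{lemma:preserving_path_gives_solution} completes the proof of \Cref{thm:large_diameter_many_vertices} via the diameter hypothesis $\diam(T)\ge 8k^6\log\delta$.
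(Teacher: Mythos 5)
Your proposal is correct, and it takes a genuinely different route from the paper's. The paper proves item~1 first by observing that $\delta(G)>3^{16}>2^{25}$ forces $\log\delta(G)<\delta(G)^{3/16}$, then bounds each factor by a fractional power of $\delta(G)$ (using $k<\delta(G)^{1/16}$) so that the product $4k^5\cdot\log\delta(G)\cdot(k^4+1)$ is absorbed by $\delta(G)^{15/16}<\delta(G)$; item~2 is then deduced from item~1 via $4k^5\log\delta(G)<\delta(G)/k^4$ and the hypothesis $n\ge(1+\tfrac{4}{k^4})\delta(G)$. You instead isolate the single intermediate inequality $\delta\ge 8k^9\log\delta$, obtained by the standard monotonicity of $x\mapsto x/\log x$ on $(e,\infty)$ and a numerical check at $k=3$, and then derive both items from it by one-line algebra. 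Your approach produces a slightly cleaner factorisation of the argument (one reusable estimate rather than a cascade of fractional-power bounds), and your observation that the base of the logarithm is immaterial is a genuine plus, since the paper tacitly works in base $2$. The paper's approach is arguably more self-contained in that it avoids invoking calculus (monotonicity of $x/\log x$), at the cost of the somewhat opaque power-juggling $\delta^{2/16}\cdot\delta^{5/16}\cdot\delta^{3/16}\cdot(\delta^{4/16}+1)<\delta^{10/16}\cdot\delta^{5/16}$. Both proofs are valid; the numeric margins you report ($2187$ vs.\ $225$ at $k=3$) are correct and comfortably establish $k^7\ge 128\log k$ for all $k\ge 3$.
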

	\begin{claimproof}
		Since $k\ge 3$, we have that $\delta(G)> 3^{16}> 2^{25}$.
		Then $\log\delta(G)<\delta(G)^{3/16}$.		
		Obtain 
		$$4k^5 \cdot \log\delta(G)\cdot (k^4+1)< \delta^{2/16}\cdot\delta(G)^{5/16}\cdot\delta(G)^{3/16}\cdot(\delta(G)^{4/16}+1)<\delta(G)^{10/16}\cdot \delta(G)^{5/16}<\delta(G).$$
		The first part of the claim is proved.
		
	 To see that the second part of the claim holds, from the first part obtain
		$4k^5 \cdot \log\delta(G)<\delta(G)/k^4.$
		Then, by the constraint imposed by the statement of \Cref{thm:large_diameter_many_vertices},
		$$n\ge \left(1+\frac{4}{k^4}\right)\cdot \delta(G)= \left(1+\frac{3}{k^4}\right)\cdot \delta(G) + \frac{\delta(G)}{k^4}>\left(1+\frac{3}{k^4}\right)\cdot \delta(G)+4k^5\log\delta(G).$$
		The claim is proved.
	\end{claimproof}

	By \Cref{claim:proofoftheoremdiamter}, we can apply \Cref{lemma:there_exists_preserving_set} to $G$ and $k$ with $p:=4$.
	The size of the obtained preserving set $S$ is   $|S|\le 4k^{5}\cdot\log \delta(G)$.
	By \Cref{claim:proofoftheoremdiamter}, $\delta(G)> 2k \cdot |S|$.  \Cref{lemma:preserving_set_to_preserving_path} implies that for  $G$ 
	has  a $k$-preserving path $P$ of length at most
	 $(2k-1)\cdot |S|< 8k^6 \cdot \log\delta(G)$.
	
	By \Cref{lemma:preserving_path_gives_solution}, $G$ contains every tree $T$ with $|V(T)|\le \delta(G)+k$ and $\diam(T)\ge 8k^6 \cdot \log\delta(G)$.
	This completes the proof of the main result of the section.
\end{proof}

\section{When $G$ has at most $(1+\varepsilon)\delta(G)$ vertices.
}\label{sec:diracs}

In this section, we consider graphs with a sufficiently small number of vertices. We show that such graphs contain all trees of size $\delta(G)+k$ whose 
  maximum leaf-degree $\ld(T)$  is less than $k$. Let us remind that by the maximum leaf-degree of $T$, we mean the maximum number of leaf-neighbors a vertex of $T$ could have.  More formally, the main result of this section is the following theorem.

\begin{theorem}\label{lem:dense}
	Let $G$ be a graph and let $k\geq 1$ be an integer such that $\delta(G)+k\leq |V(G)|\leq (1+\varepsilon)\delta(G)$ for $\varepsilon\leq \frac{1}{4k}$ and $\delta(G)\geq 12k^2$. Let also $T$ be a tree with at most $\delta(G)+k$ vertices such that $\ld(T)<k$. Then  $G$ contains $T$ as a subgraph.
\end{theorem}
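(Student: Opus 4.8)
The plan is to leverage the hypotheses $|V(G)| \le (1+\varepsilon)\delta(G)$ and $\varepsilon \le \frac{1}{4k}$ to argue that $G$ is so dense that an essentially greedy embedding of $T$ works. Concretely, every vertex of $G$ has fewer than $\varepsilon\delta(G) \le \frac{\delta(G)}{4k}$ non-neighbours (so $\overline{G}$ has maximum degree below $\frac{\delta(G)}{4k}$), $\delta(G) > \frac12|V(G)|$, and once $\delta(G)+1$ vertices of $T$ have been placed, at most $\varepsilon\delta(G)-1 < \frac{\delta(G)}{4k}$ vertices of $G$ remain unused. I would embed $T$ incrementally: peel $k-1$ vertices off $T$ as leaves one at a time to obtain a subtree $T_0$ on exactly $\delta(G)+1$ vertices together with an enumeration $\ell_1,\dots,\ell_{k-1}$ of the peeled vertices; embed $T_0$ into $G$ by \Cref{prop:chvatal-generalized}; then reinsert $\ell_{k-1},\dots,\ell_1$ in the reverse of the peeling order (so that the parent of each $\ell_j$ is already embedded, being either a vertex of $T_0$ or a later-indexed, already reinserted $\ell_{j'}$), mapping each $\ell_j$ to a currently free neighbour of the image of its parent. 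Equivalently, the goal is to produce a Chv\'atal embedding of $T_0=T-L$ for which \Cref{lemma:multi_saved_k_neighbours} applies, i.e.\ one in which every leaf-parent $w$ retains at least $\ndef(\sigma(w))$ non-neighbours inside the image; since density alone does not guarantee this when $\deg_G(\sigma(w))$ is close to $\delta(G)$, a repair step is needed.

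The repair (equivalently, the step that unsticks the reinsertion) is a local re-routing. When we try to place $\ell_j$ and every neighbour of $v:=\sigma(w_j)$ is already occupied, pick a free vertex $f$ (necessarily a non-neighbour of $v$), locate an already-placed vertex $z$ which is a leaf of the current embedded tree $\widehat T$, which is a neighbour of $v$, and whose $\widehat T$-parent's image is adjacent to $f$; re-map the preimage of $z$ to $f$ (legal because $f$ is adjacent to that parent's image), freeing $z$, and then set $\sigma(\ell_j):=z$ (legal because $z\sim v$). If $\widehat T$ is too path-like to offer such a leaf, the analogous move is carried out along a long maximal path of degree-two vertices of $\widehat T$, re-routing one of its internal vertices onto $f$. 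Finding a usable $z$ is exactly where all three hypotheses are consumed: $f$ has fewer than $\frac{\delta(G)}{4k}$ non-neighbours, so it ``spoils'' few candidate images; $\ld(T)<k$ forces every vertex to parent at most $k-1$ leaves, so a single spoiled parent-image kills at most $k-1$ candidates; and since at most $\delta(G)+k-1$ vertices are ever occupied while $\deg_G(v)\geq\delta(G)$, at most $k-2$ occupied vertices fail to be neighbours of $v$. The constants $\varepsilon\le\frac1{4k}$ and $\delta(G)\ge 12k^2$ are calibrated so that these losses, added together, stay strictly below the number of available candidates; and since each re-routing strictly increases the number of embedded vertices, the process terminates with an embedding of $T$ into $G$.

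The main obstacle is making that final count close in the tight case $|V(G)|=\delta(G)+k$, where only $k-1$ vertices of $G$ are ever free: one must split on the shape of $T$ — trees with many leaves (where re-routing through a leaf of $\widehat T$ succeeds) versus trees with few leaves, hence with long bare paths (where re-routing along such a path succeeds) — and verify that these regimes are exhaustive for every tree with $\ld(T)<k$, which is what forces the particular numerical form of the constants. A secondary, more bureaucratic obstacle is choosing the peeling/reinsertion order (and, when there is freedom, which $k-1$ leaves to designate as $L$) so that whenever we get stuck the embedded tree really does possess the structural slack the re-routing needs, and checking that no re-routing damages an already established part of the embedding. If an intermediate ``medium-leaf'' shape resists a single-swap re-routing, the fallback is to chain several leaf-swaps into an alternating path of re-mappings, and this is where I would expect the bulk of the technical work to concentrate.
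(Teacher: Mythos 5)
Your plan mirrors the paper's overall architecture: peel off leaves, embed the remainder by Chv\'atal, and, when a leaf $\ell_j$ cannot be placed, free up a neighbor of $v=\sigma(w_j)$ by a local \emph{swap} — moving some already-placed vertex $z\in N_G(v)$ out to a free vertex $f$, which is legal exactly when $f$ is $G$-adjacent to the images of all of $z$'s tree-neighbors. The paper also proceeds by this swap (by induction, one leaf at a time), with the same case split on whether $T$ has many or few leaves, and your ``leaf of $\widehat T$'' and ``degree-two vertex on a bare path'' variants are precisely the $\deg_{T'}(z)=1$ and $\deg_{T'}(z)=2$ instances of the paper's single re-routing step.

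Where the two diverge is exactly where you flag the difficulty. You write that a ``medium-leaf'' shape might resist a single swap and that you would fall back on chaining several swaps along an alternating path, and that you expect the bulk of the technical work there. The paper avoids this entirely with one clean combinatorial lemma (Lemma~\ref{lem:deg-two}): in any tree with $\ell$ leaves, a hitting set for the family $\{N_{T}(x):x\in V(T)\}$ has size at least $\tfrac12(|V(T)|-3\ell+6)$. Applied to the embedded tree $T'$ with the set $X$ of non-neighbors of the free vertex $w$ (of size $<\varepsilon\delta(G)$), this shows $X$ cannot hit all neighborhoods $N_{T'}(x)$, so some $x$ exists with \emph{every} $T'$-neighbor adjacent to $w$ — and, after subtracting the $\le k-1$ vertices of $T'$ not $G$-adjacent to $v$, still some such $x$ with $xv\in E(G)$. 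A single swap then suffices, regardless of $\deg_{T'}(x)$. This is the step your proposal does not have: without the hitting-set lower bound you have no guarantee that the spoiled candidates stay below the available ones across all tree shapes with $\ld(T)<k$, and the chained-swap fallback is left as a vague hope rather than an argument. To close the proof along your route you would need either to prove a quantitative statement like Lemma~\ref{lem:deg-two}, or to actually carry out the alternating-path chaining and bound its length — neither of which your sketch does, so as it stands the few-leaves regime is a genuine gap.

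A second, minor discrepancy: the paper's inductive step must take care that the leaf removed keeps $\ld(T-u)\le\ld(T)$ (a short argument in the paper), and your ``peel $k-1$ leaves in some order'' plan needs the analogous care to avoid accidentally increasing the leaf-degree at an intermediate step; this is a bookkeeping point you gesture at (``choosing the peeling/reinsertion order'') but do not resolve.
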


The following combinatorial property of trees is useful for us. We remind that a \emph{hitting} set for a family of sets $\mathcal{S}$ is a set that contains at least one representative from each set of $\mathcal{S}$.

\begin{lemma}\label{lem:deg-two}
Let $T$ be a tree with  $\ell$ leaves. Then any hitting set for the family of the neighborhoods of its vertices, that is, for the  family $\mathcal{S}=\{N_T(v)\mid v\in V(T)\}$,  is of size at least 
$\frac{1}{2}(|V(T)|-3\ell+6)$. 
\end{lemma}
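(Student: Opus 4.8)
The plan is to reformulate the hitting-set condition in terms of degree-two vertices, and then count. First I would observe that $N_T(v)$ is a singleton exactly when $v$ is a leaf, and such singleton sets force the hitting set $H$ to contain the unique neighbor of every leaf; call $W$ the set of these leaf-parents, so $|W| \le \ell$. For every other vertex $v$, the set $N_T(v)$ has size $\deg_T(v) \ge 2$, so it is "easy" to hit. The key reduction is: a set $H \supseteq W$ is a hitting set for $\mathcal{S}$ iff every degree-two vertex $v$ of $T$ with both neighbors outside $W$ has a neighbor in $H$ — because if $\deg_T(v) \ge 3$, or $\deg_T(v) = 2$ with a neighbor already in $W$, the set $N_T(v)$ is already hit. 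Actually one must be slightly more careful: a degree-two vertex is automatically hit if at least one of its two neighbors lies in $H$; the binding constraints come precisely from maximal "trivial paths" (runs of consecutive degree-two vertices), reminiscent of \Cref{lemma:leaves_diameter}.

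**The counting step.**

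So I would proceed as follows. Let $D$ be the set of degree-two vertices of $T$ whose both neighbors have degree $\ge 2$ (equivalently, that are not leaf-parents and not adjacent to leaf-parents through a degree-two neighbor — here it is cleanest to just take $D$ = degree-two vertices none of whose neighbors is a leaf). These vertices, together with the higher-degree vertices and leaf-parents, partition $V(T)$; write $|V(T)| = \ell + |W'| + |D'| + r$, where I sort vertices by degree and leaf-adjacency. The clean bound uses the folklore fact that in any tree, (number of leaves) $\ge$ 2 + (number of vertices of degree $\ge 3$), so the count of "branching" vertices is at most $\ell - 2$. Thus the number of vertices that are \emph{not} degree-two is at most (number of leaves) + (number of degree-$\ge 3$ vertices) $\le \ell + (\ell - 2) = 2\ell - 2$, and the number of degree-two vertices is at least $|V(T)| - 2\ell + 2$. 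Among these, at most $2\ell$ are adjacent to a leaf-parent or play a boundary role; subtracting a correction of roughly $\ell$ more leaves us with at least $|V(T)| - 3\ell + O(1)$ "interior" degree-two vertices that must be covered.

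**Finishing.**

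Finally, each interior degree-two vertex lies on a unique maximal trivial path, and on such a path of $t$ consecutive degree-two vertices, a hitting set needs at least $\lceil t/2 \rceil \ge t/2$ of its vertices (this is where the factor $\frac12$ comes from: we may choose every other vertex, but no fewer, since each chosen vertex hits at most the sets $N_T(\cdot)$ of its two neighbors on the path). Since these trivial paths are vertex-disjoint and collectively contain at least $|V(T)| - 3\ell + 6$ vertices after the corrections above are tallied exactly, summing $t/2$ over all trivial paths gives a hitting set of size at least $\frac12(|V(T)| - 3\ell + 6)$. The main obstacle I anticipate is bookkeeping the constant $6$ precisely: one has to account exactly for (i) the $\ge 2$ leaves that are endpoints, (ii) the degree-$\ge 3$ vertices via the $\ell - 2$ bound, and (iii) the boundary degree-two vertices adjacent to leaf-parents or branching vertices, and verify that the slack in the inequalities collapses to exactly $+6$ rather than a looser additive term. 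The cleanest route is probably to set up the identity $\sum_v (\deg_T(v) - 2) = -2$ (handshake for trees) and use it to pin down the number of degree-two vertices exactly, then argue the trivial-path covering bound directly.
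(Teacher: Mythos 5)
Your approach is recognizable but not closed, and you say so yourself. You propose to segment $T$ into maximal trivial paths, argue a per-path lower bound of roughly $t/2$ on the number of hitting-set vertices needed, and sum over paths --- while flagging that pinning down the exact constant $+6$ via ``boundary degree-two vertices'' is unfinished. This is a real gap. The per-path bound is delicate to aggregate because the endpoints of maximal trivial paths are shared: a single vertex of degree $d \geq 3$ can be the element of $H$ that hits the sets $N_T(\cdot)$ of degree-two neighbors on up to $d$ different trivial paths, so the per-path lower bounds do not simply add without careful accounting of that sharing. Nor do your intermediate estimates close the gap: from $|V(T)| - 2\ell + 2$ degree-two vertices you subtract ``a correction of roughly $\ell$'' to reach $|V(T)| - 3\ell + O(1)$, but you never nail the $O(1)$, and proposal-level hedging is not a proof of a statement with an explicit additive constant.

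The paper avoids all of this with a single global double count, and the shift in perspective is what you are missing: rather than asking \emph{which degree-two vertices must be covered}, count \emph{how many sets} $N_T(v)$ can possibly be hit by vertices of degree at least three. There are $n = |V(T)|$ sets to hit, and a vertex of degree $d$ lies in exactly $d$ of them. The handshake identity $\sum_v \deg_T(v) = 2(n-1)$ rearranges to $\sum_{u \in V_{\geq 3}} \deg_T(u) = \ell + 2n_{\geq 3} - 2$; combining with $\sum_{u \in V_{\geq 3}} \deg_T(u) \geq 3 n_{\geq 3}$ gives $n_{\geq 3} \leq \ell - 2$ and hence $\sum_{u \in V_{\geq 3}} \deg_T(u) \leq 3\ell - 6$. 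So even a hitting set that contains \emph{every} vertex of degree $\geq 3$ hits at most $3\ell - 6$ of the $n$ sets; the remaining $\geq n - 3\ell + 6$ sets must be hit by vertices of degree $\leq 2$, each of which lies in at most two sets $N_T(\cdot)$, forcing $|H| \geq \frac{1}{2}(n - 3\ell + 6)$. No trivial-path decomposition, no interior-versus-boundary bookkeeping, and no need for the (true but irrelevant) observation that leaf-parents are forced into $H$. You did suggest the handshake identity at the end, but to count degree-two vertices rather than to bound the hitting capacity of $V_{\geq 3}$; the latter is the move that makes the constant fall out for free.
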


\begin{proof}
Let $V_1$, $V_2$, and $V_{\geq 3}$ be the sets of vertices of degree one, two, and at least three, respectively. 
Denote $n=|V(T)|=|\mathcal{S}|$, $n_1=|V_1|=\ell$, $n_2=|V_2|$, and $n_{\geq 3}=|V_{\geq 3}|$. 
We have that 
\begin{equation}\label{eq:degs}
\begin{aligned}
2n_1+2n_2+2n_{\geq 3}-2=&2(n-1)=\sum_{v\in V(T)}\deg_T(v)\\=&\sum_{v\in V_1}\deg_T(v)+\sum_{v\in V_2}\deg_T(v)+\sum_{v\in V_{\geq 3}}\deg_T(v)\\=&n_1+2n_2+\sum_{v\in V_{\geq 3}}\deg_T(v).
\end{aligned}
\end{equation}
In particular, \Cref{eq:degs} implies that $n_1+2n_{\geq 3}-2=\sum_{v\in V_{\geq 3}}\deg_T(v)\geq 3n_{\geq 3}$ and, therefore, $n_{\geq 3}\leq n_1-2$. 
Then $\sum_{v\in V_{\geq 3}}\deg_T(v)\leq  3n_1-6$. Note that a vertex $v\in V_{\geq 3}$ is included in $\deg_T(v)$ sets of $\mathcal{S}$. Hence, the vertices of $V_{\geq 3}$ hit at most  $\sum_{v\in V_{\geq 3}}\deg_T(v)$ sets of $\mathcal{S}$. Any vertex $v\in V_1\cup V_2$ is included in at most two sets of $\mathcal{S}$. Therefore, any hitting set of $\mathcal{S}$ contains at least 
$\frac{1}{2}(n-\sum_{v\in V_{\geq 3}}\deg_T(v))\geq \frac{1}{2}(n-3n_1+6)=\frac{1}{2}(|V(T)|-3\ell+6)$ vertices of $V_1\cup V_2$. This concludes the proof.
\end{proof}

\subsection{Proof of \Cref{lem:dense}}
We proceed with the proof of the main result of the section.

\begin{proof}[Proof of \Cref{lem:dense}]
The proof is by induction on the number of vertices of $T$. If $|V(T)|\leq\delta(G)+1$, then $T$ is a subgraph of $G$ by \Cref{chvatal-theorem}. Assume that $T$ has $\delta(G)+p$ vertices for $2\leq p\leq k$ and $G$ contains any tree with $\delta(G)+p-1$ vertices and the maximum leaf-degree at most $k-1$ as a subgraph. Observe that every vertex $x$ of $G$ has at most $|V(G)|-1-\delta(G)\leq \varepsilon \delta(G)-1$ non-neighbors in $G$. 

We argue that $T$ has a leaf $u$ such that $\ld(T-u)\leq \ld(T)$. 
Let us note that deleting a leaf $u$ could increase the value $\ld$ only if the deletion of $u$ turns its neighbor into a leaf. Hence, 
if there is a leaf $u$ such that its neighbor's degree is at least three, then $\ld(T-u)\leq \ld(T)$. 

Now we are in the situation where every leaf of $T$ is adjacent to a vertex of degree two. If there is a leaf $u$ with a neighbor $v$,   such that the neighbor $w$ of $v$, $w\neq u$,  is not adjacent to any leaf of $T$, then  $\ld(T-u)\leq \ld(T)$. If $w$ is adjacent to a leaf, then $\deg_T(w)=2$ by the assumption that each leaf is adjacent to a vertex of degree two. This means that $T$ is the path on four vertices. But this cannot happen because  $|V(T)|\geq \delta+2>4$. Hence there is a leaf $u$ such that $\ld(T-u)\leq \ld(T)$.

Consider an arbitrary leaf $u$ of $T$ such that $\ld(T-u)\leq \ld(T)\leq k-1$  
and let  $v$ be its unique neighbor.  Let $T'=T-u$. By the inductive assumption, $G$ contains $T'$ as a subgraph and we can assume that $V(T')\subseteq V(G)$ and $E(T')\subseteq E(G)$. If $v$ is adjacent to a vertex of $V(G)\setminus V(T')$ in $G$, then we obtain that $T$ is a subgraph of $G$. Assume that this is not the case. Then $N_G(v)\subseteq V(T')$ and $v$ has  at most $|V(T)|-1-\delta(G)=p-1\leq k-1$ non-neighbors among the vertices of $T'$. 

Suppose that $T$ has at least $(\delta(G)\varepsilon+k)(k-1)$ leaves. 
Let $U$ be the set of vertices of $T$ that are adjacent to at least one leaf and let $W=U\setminus\{v\}$. 
Because $\ld(T)\leq k-1$, $|U|\geq \delta(G)\varepsilon+k$. For every vertex of $x\in W$, we choose a leaf $\ell_x$ of $T$ adjacent to $x$ in $T'$ and define $L=\{\ell_x\mid x\in W\}$. 
Because $|L|=|U|-1\geq \delta(G)\varepsilon+k-1$ and $v$ has at most $k-1$ non-neighbors among the vertices of $V(T')$, $v$ is a adjacent to at least $|L|-k+1\geq \varepsilon\delta(G)$ vertices of $L$. Let $L'\subseteq L$ be the 
set of vertices adjacent to $v$ and consider $W'=\{x\in W\mid \ell_x\in L'\}$.   
Because $|V(G)|\geq \delta(G)+k>|V(T')|$, there is $w\in V(G)\setminus V(T')$. Since $|W'|=|L'|\geq \varepsilon \delta(G)$ and $w$ has at most $\varepsilon \delta(G)-1$ non-neighbors in $G$, we have that $w$ has a neighbor $x$ in $W'$. Thus $v$ and $x$ are adjacent to $\ell_x$ and $x$ is adjacent to $w$. Let $T''$ be the subgraph of $G$ with $V(T'')=V(T')\cup \{w\}$ and $E(T'')=(E(T')\setminus \{\ell_x\})\cup \{v\ell_x,xw\}$.  Note that trees  $T''$ and  $T$ are isomorphic: the leaf $\ell_x$ of $x$ is remapped to $w$ and the leaf $u$ of $T$ is mapped to $\ell_x$.  Hence  if  $T$ has at least $(\delta(G)\varepsilon+k)(k-1)$ leaves, then $G$ contains $T$. 

From now on, we assume that the number of leaves of $T$ is less than $(\delta(G)\varepsilon+k)(k-1)$. Consider the family $\mathcal{S}=\{N_{T'}(x)\mid x\in V(T')\}$ of the neighborhoods of the vertices of $T'$
and let $\mathcal{S}'=\{N_{T'}(x)\mid x\in V(T')\setminus\{v\}\text{ s.t. }vx\in E(G)\}$. Because $v$ has at most $p-1$ non-neighbors in $V(T')$, we have that
$|\mathcal{S}'|\geq |V(T')|-p=\delta(G)-1$.
By Lemma~\ref{lem:deg-two}, any hitting set for $\mathcal{S}$ is of size at least 
\begin{equation}\label{eq:hs}
\begin{aligned}
\frac{1}{2}(|V(T')|-3(\delta(G)\varepsilon+k)(k-1)+6)=&\frac{1}{2}(\delta(G)+p-1-3(\delta(G)\varepsilon+k)(k-1)+6)\\
\geq& \frac{1}{2}((1-3\varepsilon(k-1))\delta(G)-3k(k-1)).
\end{aligned}
\end{equation}
For a vertex  $w\in V(G)\setminus V(T')$, we consider the set 
$X=\{x\in V(T)\mid xw\notin E(G)\}$ of its non-neighbors in $V(T')$. 
Because $w$ has at most  $\varepsilon \delta(G)-1$ non-neighbors in $G$, by \Cref{eq:hs},  $X$ does not hit at least 
\begin{equation*}
\frac{1}{2}((1-3\varepsilon(k-1))\delta(G)-3k(k-1))-\varepsilon \delta(G)+1\geq \frac{1}{2}((1-3\varepsilon k)\delta(G)-3k(k-1))
\end{equation*}
sets of $\mathcal{S}$. Because $|\mathcal{S}|=|V(T')|\leq \delta(G)+k-1$ and $|\mathcal{S}'|\geq \delta(G)-1$, 
$X$ does not hit at least
\begin{equation*}\label{qe:hsprime}
\frac{1}{2}((1-3\varepsilon k)\delta(G)-3k(k-1))-k=\frac{1}{2}((1-3\varepsilon k)\delta(G)-k(3k-1))
\end{equation*}
sets of $\mathcal{S}'$. 
Since $\varepsilon\leq \frac{1}{4k}$ and $\delta(G)\geq 12k^2$, we have that 
$(1-3\varepsilon k)\delta(G)-k(3k-1)>0$. Therefore
$X$ does not hit at least one set of $\mathcal{S}'$. Thus, there is $x\in V(T')\setminus \{v\}$ such that $xv\in E(G)$ and for every $y\in N_{T'}(x)$, $yw\in E(G)$. 
We construct the subgraph $T''$ of $G$ from $T'$ by defining $V(T'')=V(T')\cup\{w\}$ and $E(T'')=(E(T')\setminus \{xy\mid y\in N_{T'}(x)\} )\cup\{wy\mid y\in N_{T'}(x)\} \cup\{vx\}$. Observe that $T''$ is a tree isomorphic to $T$ where $x\in V(T')$ is remapped to $w$ and $u$ is mapped to $x$. Thus, $G$ contains $T$ as a subgraph. This concludes the proof.
\end{proof}


\section{Medium diameter and escape vertices}\label{sec:medium}
In this section we prove a combinatorial result (\Cref{thm:medium_diameter_escape_vertices}) about containment in $G$ trees on  $\delta(G)+k$ vertices and of ``medium'' diameter $k^{\Oh(1)}\cdot\log\delta(G)$. 
Informally, \Cref{thm:medium_diameter_escape_vertices} ensures that  $G$  contains any such medium-diameter tree $T$ if its diameter is at least   $k^{\Omega(1)}$. 
  Moreover, \Cref{thm:medium_diameter_escape_vertices} collects two other cases when $G$ contains a tree of diameter $k^{\Oh(1)}\cdot\log\delta(G)$. One case is when $G$ 
   has a $k^{\Oh(1)}$-escape vertex. (Recall that a vertex $v\in V(G)$ is {$q$-escape}, if either its degree is at least $\delta(G)+q$ or there is a matching of size $q$  between $N[v]$ and $V(G)\setminus N[v]$.) The other case is when $T$ is \separable{$k^{\Oh(1)}$}, that is,  there is an edge in $T$ whose removal separates $T$ into two parts consisting of at least $k^{\Oh(1)}$ vertices.

 More formally, the main result of this section is the following theorem.

\begin{theorem}\label{thm:medium_diameter_escape_vertices}
	Let $G$ be a connected graph and $T$ be a tree on $\delta(G)+k$ vertices for $k\ge 3$.
	If $|V(G)| \ge \delta(G)+2k^{14}$, $\ld(T)<k$, $\delta(G)\ge k^{17}$, $\diam(T)\le 8k^6 \cdot \log \delta(G)$ and either
	\begin{itemize}
		\item $\diam(T)\ge 2k^{11}$, or
		\item there is a $4k^{13}$-escape vertex in $G$, or
		\item $T$ is \separable{$2k^{14}$},
	\end{itemize}
	then $G$ contains $T$ as a subgraph.
\end{theorem}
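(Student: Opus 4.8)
The plan is to treat the three bullets separately, in each case reducing to \Cref{lemma:multi_saved_k_neighbours} by producing a set $L$ of $k-1$ leaves of $T$ — write $W:=N_T(L)$ — together with a subgraph isomorphism $\sigma$ of a connected subtree $T^\circ$ with $W\subseteq V(T^\circ)\subseteq V(T-L)$ such that $|\Ima\sigma\setminus N_G[\sigma(w)]|\ge\ndef(\sigma(w))$ for every $w\in W$. First, $T$ has at least $k$ leaves: this follows from \Cref{lemma:leaves_diameter} with $q=k$, since $|V(T)|=\delta(G)+k$ and $\delta(G)\ge k^{17}$ is large enough that $\delta(G)+k\ge k\cdot\diam(T)$ given $\diam(T)\le 8k^6\log\delta(G)$. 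Because $\ld(T)<k$ we may therefore pick $W$ with $|W|\le k-1$ and $|L|=k-1$, and we keep the freedom to decide which leaf-adjacent vertices enter $W$. Let $T_W$ be the minimal subtree of $T$ spanning $W$: by minimality its leaves lie in $W$, so $T_W$ has at most $k-1$ leaves, hence fewer than $k$ vertices of degree at least three, and $W\subseteq V(T_W)\subseteq V(T-L)$ (a leaf of $T$ is never a Steiner vertex). Taking $T^\circ:=T_W$, the task becomes: embed $T_W$ in $G$ so that each image of a vertex of $W$ keeps at least $\ndef$ of its neighbors unoccupied. We apply the bullets in the order separable, large-diameter, escape, so that in the large-diameter case $T$ is not \separable{$2k^{14}$} and in the escape case additionally $\diam(T)<2k^{11}$.

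\emph{Large-diameter case ($\diam(T)\ge 2k^{11}$).} Put the neighbors $w_1,w_2$ in $T$ of the two ends of a longest path of $T$ into $W$, so $\diam(T_W)\ge\diam(T)-2\ge 2k^{11}-2$. Since $T_W$ has fewer than $2k$ ``special'' vertices (its $<k$ branch vertices together with its $\le k-1$ vertices of $W$), its diametral path contains a trivial subpath $Q$ — internal vertices of degree $2$ in $T_W$, none lying in $W$ — of length $\Omega(k^{10})$. Now run the contract-embed-uncontract scheme of \Cref{subsection:contracting-trivial-paths}: contract every long trivial path of $T_W$ down to length $2k$ (keeping the $\le k-1$ vertices of $W$ intact) to obtain a tree $T'_W$ with $\Oh(k^2)\le\delta(G)+1$ vertices; fix an arbitrary embedding $\sigma'_W$ of $T'_W$ into $G$ by \Cref{chvatal-theorem}, which freezes the images of $W$; then un-contract, replacing the short image of $Q$ by a long path of $G$ chosen to contain, in its interior, a set $S$ of $\Oh(k^2)$ vertices that for each $w\in W$ holds at least $\ndef(\sigma'_W(w))$ non-neighbors of $\sigma'_W(w)$ — such $S$ exists because every deficient $\sigma'_W(w)$ has $\ge|V(G)|-\delta(G)-k\ge k^{14}$ non-neighbors while only $\Oh(k^2)$ vertices are used so far. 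The long path filling the image of $Q$ is assembled by joining $S$ through shortest paths in the residual graph, exactly as in \Cref{lemma:preserving_set_to_preserving_path}, switching to \Cref{lemma:diameter_modulator_to_preserving_path} (Idea IV) the instant the residual diameter reaches $2k$; since $Q$ has length $\Omega(k^{10})$ while the constructed path has length $\Oh(k^3)$, it fits after padding the remaining length with fresh vertices (available because $\delta(G)\ge k^{17}$ and $|V(G)|\ge\delta(G)+2k^{14}$), and the other trivial paths are un-contracted to their true lengths in the same fashion. As $V(Q)\cap W=\emptyset$ and injectivity of $\sigma$ keeps the images of $W$ off that path, $S$ delivers the saved neighbors demanded by \Cref{lemma:multi_saved_k_neighbours}.

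\emph{Escape and separable cases.} If $T$ is \separable{$2k^{14}$}, let the separating edge split $T$ into $T_1,T_2$ of order $\ge 2k^{14}$ each; either $G$ has no separator of size $k^{\Omega(1)}$, in which case $G$ is dense enough that a partial isomorphism of $T$ extends greedily as in the proof of \Cref{lem:dense} (every unoccupied vertex has many neighbors inside any $(\delta(G)+k-1)$-vertex partial image), or $G$ has a small separator $C$ and \Cref{lemma:separator_and_separable_tree} embeds $T$ across $C$ by placing most of $T_1$ on one side and most of $T_2$ on the other and threading the few edges of $T$ that cross $C$ — feasible because $|V(G)|\ge\delta(G)+2k^{14}$ keeps both sides large. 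If instead $G$ has a $4k^{13}$-escape vertex $x$ (and $T$ is non-separable with $\diam(T)<2k^{11}$): when $T$ has a vertex $t$ of degree $k^{\Omega(1)}$, route $T_W$ so $\sigma(t)=x$ and exploit the escape structure at $x$ — a degree surplus $\deg_G(x)\ge\delta(G)+4k^{13}$, or a size-$4k^{13}$ matching between $N_G[x]$ and its complement — to drive at least $\ndef$ images of each $w\in W$ outside $N_G[\sigma(w)]$ simultaneously by a counting argument; when $T$ has no such high-degree vertex, \Cref{lemma:escape_or_separator} supplies an alternative mechanism that either yields the neighbor-saving embedding of $T_W$ outright or exhibits a $k^{\Omega(1)}$-separator of $G$, which combined with the strong structure forced on $T$ by $\ld(T)<k$ and $\diam(T)<2k^{11}$ completes the argument along the lines of the small-separator sub-case above.

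\emph{Main obstacle.} I expect the genuine difficulty to be the large-diameter case, and within it the un-contraction step: one cannot cover the deficiencies of the vertices of $W$ one at a time, since vertices spent saving neighbors of one $w\in W$ may be exactly those a different $w$ needs, so the whole deficiency of $W$ must be packed into a single path that, moreover, has to fit inside one long trivial path of $T_W$ and be realizable as an exact-length path in a graph most of whose vertices are already occupied. Pushing the shortest-path concatenation of \Cref{sec:largediameter} through here — controlling exact lengths, staying in the residual graph, padding with fresh vertices, and dropping to Idea IV precisely when the residual diameter exceeds $2k$ — is where the slack in $|V(G)|\ge\delta(G)+2k^{14}$ and $\delta(G)\ge k^{17}$ is spent, and is the main technical content. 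In the escape and separable cases the crux is instead the structural dichotomy ``escape structure versus small separator'' of \Cref{lemma:escape_or_separator} and, downstream, bounding how few edges of a tree need cross a small vertex cut once $T$ is separable; both of these I expect to be comparatively routine given the dedicated lemmas of \Cref{sec:medium}.
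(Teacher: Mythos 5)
Your overall plan — reduce everything to \Cref{lemma:multi_saved_k_neighbours} via a Steiner tree $T_W$ of the leaf-neighborhood $W$, and invoke \Cref{lemma:large_diameter_trivial_paths}, \Cref{lemma:escape_vertex_and_high_degree_t}/\Cref{lemma:escape_or_separator}, and \Cref{lemma:separator_and_separable_tree} in the three cases — does follow the paper's decomposition, and the large-diameter and escape cases are essentially the paper's argument (in the large-diameter case you re-derive \Cref{lemma:large_diameter_trivial_paths} rather than citing it, and the remaining work is just verifying $\delta(G)>2k\cdot\diam(T)$, which the hypotheses $\delta(G)\ge k^{17}$, $\diam(T)\le 8k^6\log\delta(G)$, $\diam(T)\ge 2k^{11}$ do yield).

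The gap is in your separable case. You propose the dichotomy ``either $G$ has no separator of size $k^{\Omega(1)}$, in which case $G$ is dense enough that a partial isomorphism extends greedily as in \Cref{lem:dense}, or $G$ has a small separator $C$.'' But \emph{absence of a small separator does not imply density}. \Cref{lem:dense} needs $|V(G)|\le(1+\varepsilon)\delta(G)$, i.e., every vertex has at most $\varepsilon\delta(G)$ non-neighbors; this is a global size bound that has nothing to do with vertex connectivity. A $\delta$-regular expander on $10\delta$ vertices has no small separator, but every vertex has $\approx 9\delta$ non-neighbors — the ``every unoccupied vertex has many neighbors in any $(\delta(G)+k-1)$-vertex partial image'' claim you put in parentheses simply fails, and the greedy extension of \Cref{lem:dense} cannot be run.

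What the paper uses in this case is a different dichotomy, made available by processing the escape bullet \emph{before} the separable bullet: if $G$ has no $4k^{13}$-escape vertex, then for a minimum-degree vertex $v$ the maximum matching between $N_G[v]$ and $V(G)\setminus N_G[v]$ has size $<4k^{13}$; by K\H{o}nig's theorem it has a vertex cover $S$ of size $<4k^{13}$, and the hypothesis $|V(G)|\ge\delta(G)+2k^{14}>\delta(G)+k+4k^{13}$ ensures both $N_G[v]\setminus S$ and $V(G)\setminus N_G[v]\setminus S$ are nonempty, so $S$ is a separator. Your chosen case order (``separable, large-diameter, escape'') specifically deprives you of the ``no escape vertex'' information you need here; reordering (escape first), or noting directly that ``no small separator $\Rightarrow$ escape vertex exists at $v$ $\Rightarrow$ handled by the escape toolbox'', closes the gap. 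A smaller point: in the escape sub-case with $\Delta(T)$ small, the fact that $T$ is \separable{} enough to feed into \Cref{lemma:separator_and_separable_tree} comes from $\Delta(T)<k^2$ and \Cref{claim:separable}, not from $\ld(T)<k$ and $\diam(T)<2k^{11}$ as you write — that attribution is off, though the conclusion is available.
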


We organize the proof of   \Cref{thm:medium_diameter_escape_vertices}  in several stages. 
\subsection{Contracting trivial paths}\label{subsection:contracting-trivial-paths}

We start with trees of diameter   $k^{\Omega(1)}$.
The key property in this case is the following: Whenever we consider a minimal subtree connecting some set of (sufficiently distant) $k$ leaves, the subtree always contains long \emph{trivial paths}, defined below.
\begin{definition}[Trivial path]
	An $(s,t)$-path $P$ in a tree $T$ is a \emph{trivial path}, if  each inner vertex $v \in V(P)\setminus\{s,t\}$ is of degree two, i.e., $\deg_T(v)=2$.
	Additionally, if $\deg_T(s)\neq 2$ and $\deg_T(t)\neq 2$, we say that $P$ is a \emph{maximal trivial path} in $T$.
\end{definition}

We will use long trivial paths to embed non-neighbors of a fixed set of vertices and then extend such embeddings by making use of  \Cref{lemma:multi_saved_k_neighbours}.
We prove the following.

\begin{lemma}\label{lemma:large_diameter_trivial_paths}
	Let $G$ be a connected graph and $T$ be a tree on $\delta(G)+k$ vertices for $k\ge 3$.
	If $\ld(T)<k$, $\delta(G)\ge 2k\cdot \diam(T)$, and $\diam(T)\ge 2k^4$, then $G$ contains $T$ as a subgraph.
\end{lemma}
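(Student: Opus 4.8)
\textbf{Proof plan for \Cref{lemma:large_diameter_trivial_paths}.}
The plan is to reduce this lemma to the preserving-path machinery of \Cref{sec:largediameter} (in particular to \Cref{lemma:preserving_path_gives_solution} and \Cref{lemma:diameter_modulator_to_preserving_path}), using the assumption of large diameter of $T$ to guarantee that the subtree $T$ is, in fact, ``path-like enough'' along some long trivial path, so that we only need to embed it into a long preserving path of $G$. First I would fix the set $W$ of leaf-adjacent vertices of $T$: since $\ld(T)<k$, by \Cref{lemma:leaves_diameter} (applied to the condition $\diam(T)\ge 2k^4$, so that $|V(T)|\ge\delta(G)\ge 2k\cdot\diam(T)$) the tree $T$ has at least $2k$ leaves; pick $k-1$ of them, with parents forming $W$. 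Also choose two leaves $x,y$ of $T$ at distance $\diam(T)$ and let $Q$ be the $(x,y)$-path realizing the diameter. The key structural observation is that $Q$ contains a maximal trivial path $P_0$ of length at least $\diam(T)/(2k) \ge k^3$: indeed $Q$ passes through at most $\ld(T)+ (\text{number of branch vertices}) = O(k)$ vertices of degree $\ne 2$, so one of the $O(k)$ trivial segments of $Q$ has length $\ge \diam(T)/O(k)$.

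The second step is the embedding strategy. I would split $T$ along the long trivial path $P_0$: removing the (unique) middle edge $e$ of $P_0$ partitions $T$ into two subtrees; the smaller one, call it $T_{\mathrm{small}}$, has at most $|V(T)|/2 \le (\delta(G)+k)/2$ vertices. By an application of \Cref{prop:chvatal-generalized} I would first embed $T-T_{\mathrm{small}}$ (which has at most $\delta(G)+1$ vertices after suitable trimming) into $G$, but more carefully: the real goal is to apply \Cref{lemma:multi_saved_k_neighbours}, so I need the partial isomorphism to leave $\ndef(\sigma(w))$ free neighbors for each $w\in W$. To arrange this, I would embed the long trivial subpath of $T$ lying inside the bigger part onto a $k$-preserving path of $G$ of appropriate length, then extend greedily. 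The preserving path in $G$ is produced exactly as in \Cref{lemma:diameter_modulator_to_preserving_path} / \Cref{lemma:preserving_set_to_preserving_path}: if $G$ has large diameter we get a shortest path directly, and if $\diam(G) < 2k$ we instead build one from two components after deleting a small modulator; the hypothesis $\delta(G)\ge 2k\cdot\diam(T)$ is what guarantees enough room (the path needed has length $\Theta(\diam(T))$ and $\delta(G)$ dominates it).

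More precisely, the cleanest route is: first check whether $\diam(G)\ge \diam(T)$; if so, take a shortest $(u,v)$-path $P$ in $G$ of length $\diam(T)$, observe each vertex of $G$ has $\le 3$ neighbors on $P$ and each vertex of $P$ has $\le 2$, so $P$ is already $k$-preserving provided $|V(P)| = \diam(T)+1 \ge k+3$ (true since $\diam(T)\ge 2k^4$), and then invoke \Cref{lemma:preserving_path_gives_solution}, whose hypothesis $\diam(T)\ge 2|V(P)|-1$ we may not literally have — so instead of the whole $P$ we use only a sub-path of length about $\diam(T)/2$, which is still preserving since it has $\ge k+3$ vertices, and now $\diam(T)\ge 2|V(\text{subpath})|-1$ holds. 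If instead $\diam(G) < \diam(T)$, I would greedily concatenate shortest paths through the $k-1$ vertices of $W$ together with the long trivial path of $T$: iteratively connect consecutive targets by shortest paths in $G$ after deleting the already-used prefix, exactly mirroring the argument sketched in the overview and in \Cref{lemma:preserving_set_to_preserving_path}; as long as the running diameter of the remaining graph stays below $2k$ this produces a path of length $O(k\cdot\diam(T)) = O(\delta(G))$ covering all of $W$'s non-neighbor requirements, and if at some point the remaining diameter exceeds $2k$ we fall back to \Cref{lemma:diameter_modulator_to_preserving_path} using the current prefix as the modulator $S$ (legitimate because $\delta(G)\ge |S|+k$ by the degree bound). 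Either way we obtain a partial isomorphism of a connected subtree $T'\supseteq W$ of $T-L$ into $G$ whose image is preserving, hence leaves $\ge\ndef(\sigma(w))$ free neighbors of every $\sigma(w)$; \Cref{lemma:multi_saved_k_neighbours} then completes the embedding of $T$.

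\textbf{Main obstacle.} The delicate point is the bookkeeping that simultaneously (a) keeps the constructed path in $G$ short enough to fit inside $G$ (length $\le$ roughly $\delta(G)$, which is why we need $\delta(G)\ge 2k\cdot\diam(T)$ rather than just $\delta(G)\ge\diam(T)$), (b) guarantees the image is $k$-preserving for \emph{all} vertices of $G$ not just those we explicitly route through, and (c) ensures the chosen subtree $T'$ of $T$ actually fits on such a path — this is where the existence of a long \emph{trivial} path in $T$ (rather than merely a long path) is essential, since we must be able to embed a degree-$2$ segment of $T$ onto a genuine path segment of $G$ without the tree branching off prematurely. Verifying that a long trivial path of length $\ge k^3$ always exists when $\ld(T)<k$ and $\diam(T)\ge 2k^4$ — via counting the at most $O(k)$ non-degree-$2$ vertices on the diametral path using \Cref{lemma:leaves_diameter} and the leaf bound — is the combinatorial heart of the argument, and I expect it to require the most care.
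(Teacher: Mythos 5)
Your high-level instinct — route the required non-neighbors of the leaf-parents along a long path, then finish with \Cref{lemma:multi_saved_k_neighbours} or \Cref{lemma:preserving_path_gives_solution} — is the paper's strategy too, but there are two substantive problems with the proposal as written.

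First, the key structural claim is false as stated. You assert that the diametral path $Q$ of $T$ ``passes through at most $\ld(T)+ (\text{number of branch vertices}) = O(k)$ vertices of degree $\ne 2$,'' and hence contains a trivial subpath of length $\Omega(\diam(T)/k)$. But $\ld(T)<k$ bounds only how many \emph{leaves} any single vertex touches, not how many vertices on $Q$ have degree $\ge 3$. A caterpillar with one pendant leaf at each spine vertex satisfies $\ld(T)=1$ yet has $\Omega(\diam(T))$ branch vertices on $Q$, so no long trivial segment survives in $T$ itself. The fix, which is essential and is exactly what the paper does, is to pass to $T_W$, the minimal subtree spanning $W=N_T(L)$ for the chosen set $L$ of $k-1$ leaves: $T_W$ has at most $k-1$ leaves (hence at most $k-3$ branch vertices), and since $L$ contains a diametral pair of leaves of $T$, $\diam(T_W)=\diam(T)-2$, so a trivial path of length $\ge(\diam(T)-2)/(k-2)>2k^3+4k^2$ really does exist \emph{in $T_W$} (the paper's Claim~8). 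This step is the combinatorial heart of the argument and it is where the proposal breaks.

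Second, even granting a long trivial segment, the proposal glosses over what you yourself flag as obstacle~(c): how the isomorphism of the \emph{whole} subtree $T_W$ around $W$ is produced while simultaneously routing the $\le(k-1)^2$ non-neighbor targets along that segment. ``Concatenate shortest paths through the $k-1$ vertices of $W$'' does not address the branching topology of $T_W$ at all — $W$ is scattered over a tree, not a path. The paper handles this with a contraction–expansion scheme you never reconstruct: contract every long trivial path of $T_W$ down to length exactly $2k$, embed the resulting bounded-size tree $T'_W$ greedily (this fixes the images of $W$, hence the non-neighbor targets), stitch together shortest $(s_i,s_{i+1})$-paths in $G$ avoiding the partial image to get a long replacement for one contracted edge, and then re-insert one vertex at a time into the remaining contracted segments. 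Each failure mode in that scheme (some shortest path too long, or no vertex with two consecutive neighbors on the current path image) is explicitly converted either into a modulator for \Cref{lemma:diameter_modulator_to_preserving_path} or into a $2k$-vertex subpath that is already $k$-preserving for the reduced graph; you gesture at the first fallback but not the second, and you never verify the length budget $(2k+1)(q-1)<2k^3\le$ (number of contracted edges) that makes the re-insertion legal. Absent both the $T_W$ reduction and the contraction–expansion bookkeeping, the proposal does not yield a proof.
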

\begin{proof}
	Note that $T$ has at least $k-1$ leaves since $|V(T)|>\delta(G)> (k - 1) \cdot \diam(T)$, by \Cref{lemma:leaves_diameter}.
	Construct a set $L$ of $k-1$ leaves of $T$ as following. We first put in $L$ a diametral pair of leaves of $T$, that is, two leaves such that the distance between these leaves is exactly $\diam(T)$. Then we extend 
 $L$ by adding arbitrary $k-3\ge 0$ leaves of $T$.
	
	Let $W:=N_T(L)$ be the set of neighbors of these leaves in $T$, $|W|\le k-1$.
	Let $T_W$ be the minimal subtree of $T$ containing all vertices of $W$.
	We have that $\diam(T_W)=\diam(T)-2$ and $T_W$ has at most $k-1$ leaves.
	
\begin{figure}[ht]
\centering
\scalebox{0.7}{
\input{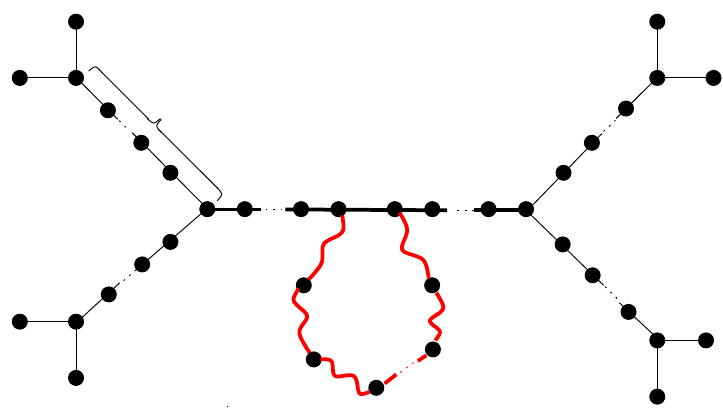_t}}
\caption{The tree $T_W'$ for $W=\{w_1,\ldots,w_p\}$ and the extension of $P$ via $s_1,\ldots,s_q$; $P$ is shown by a thick line and the paths that are used for extending $P$ are shown in red. For the sake of illustration, we identify the respective vertices of $T$ and $G$ under the isomporphism.}
\label{fig:W}
\end{figure}	
	
Now obtain a tree $T'_W$ from $T_W$ by the following procedure (see also~\Cref{fig:W}).
For each maximal trivial path $P$ in $T_W$ of length more than $2k$, we contract some of its inner edges such that the resulting path is of length exactly $2k$.
	It is important that the endpoints of $P$ are not changed by the contractions.
	Note that $T'_W$ has the same number of leaves as $T_W$ (their set is exactly $W$).
	The endpoints of the maximal trivial paths are preserved in $T'_W$ as well.
	If $\ell$ is the number of leaves in $T'_W$, then $T'_W$ contains less than $2(\ell-1)$ maximal trivial paths.
	Since each edge of $T'_W$ belongs to exactly one maximal trivial path, $|E(T'_W)|< 2k \cdot 2(k-2)=4k^2-8k$.
	Hence, $T'_W$ has at most $4k^2-8k$ vertices.
	
	Since $\delta(G)\ge 4k^2> |V(T'_W)|$, by Chv{\'{a}}tal's Lemma (\Cref{chvatal-theorem}), there is an isomorphism from $T'_W$ into a subgraph of $G$.
	Denote this isomorphism by $\sigma'$.
	Our goal is to extend $\sigma'$ to an isomorphism $\sigma$ of $T_W$ into a subgraph of $G$, ensuring that $\Ima \sigma$ has enough non-neighbors of the images of vertices in $W$ to fit  \Cref{lemma:multi_saved_k_neighbours}.
	
	Construct the set $S$ of desired non-neighbors by picking arbitrary $\ndef(\sigma'(w))\le k-1$ non-neighbors in $G$ for each $w\in W$.
	After that, remove all vertices of $\Ima \sigma'$ from $S$, since they are already used by the isomorphism.
	The set $S$ consists of at most $(k-1)\cdot |W|\le (k-1)^2$ vertices of $G$.
	To embed $S$ in  the image of $\sigma'$, we need to show that there is a sufficiently long trivial path in $T_W$.
	All vertices of $S$ will be images of the vertices of this path.
	
	\begin{claim}
	    \label{claim:long_path}
		There is a maximal trivial path in $T_W$ of length at least $2k^3+4k^2$.
	\end{claim}
	\begin{claimproof}
		Let $D$ be the path between a diametral pair of leaves in $T_W$.
		Since $T_W$ has at most $k-1$ leaves, $D$ has at most $(k-1)-2=k-3$ inner vertices of degree at least $3$. Then $D$ 
		consists of at most $k-2$ maximal trivial paths and at least one of these paths is of length at least
		$$\frac{\diam(T_W)}{k-2}\ge \frac{\diam(T)-2}{k-2}\ge \frac{2k^4-2}{k-2}=\frac{2k^4-4k^3+4k^3-2}{k-2}= 2k^3+\frac{4k^3-2}{k-2}>2k^3+4k^2.$$
	\end{claimproof}

	This maximal trivial path is contracted in $T'_W$ and has length $2k$, denote it by $P$.
	Therefore at least $(2k^3+4k^2)-2k>2k^3$ edges are contracted.
	Let $uv$ be an edge of $P$.
	We will replace this edge with a longer $(u,v)$-path (see~\Cref{fig:W}), which will fall under the ``budget'' of $2k^3$ edges.
	Its image in $G$ will be a $(\sigma'(u), \sigma'(v))$-path containing all vertices in $S$.
	Thus, for convenience we consider $\sigma'(u)$ and $\sigma'(v)$ to be a part of $S$ too.

	Let vertices in $S$ be $s_1, s_2, \ldots, s_q$, such that $s_1=\sigma'(u)$ and $s_q=\sigma'(v)$, and $q \le (k-1)^2+2 \le k^2$.
	We first find a path in $G$ that contains all vertices of $S$.
	Additionally, this path should avoid all vertices of $I=\Ima\sigma' \setminus \{\sigma'(u), \sigma'(v)\}$. Let us note that   $|I|=|V(T'_W)| - 2 \le 4k^2-8k$.
	The construction is  similar to the one in the proof of \Cref{lemma:preserving_set_to_preserving_path}: we connect vertices $s_1, \ldots, s_q$ in order via shortest paths avoiding vertices that are already used (including vertices in $I$ and $S$), see \Cref{fig:W}.

We proceed with constructing such a path inductively. Let $U_1=I\cup \{s_3 \ldots, s_q\}$. If the distance between $s_1$ and $ s_2$ in  $G-U_1$ is at most $2k+1$, we find
 a shortest $(s_1, s_2)$-path in $G-U_1$ and denote it by $Q_1$.   
 For each $i \in [2, q - 1]$, let $U_i=U_{i-1}\cup V(Q_{i-1})\setminus \{s_{i+1}\}$. If the distance between $s_i$ and $ s_{i+1}$ in  $G-U_i$ is at most $2k+1$, we define  $Q_i$ as the $(s_1, s_{i+1})$-path obtained by appending to $Q_{i-1}$  a shortest  $(s_i, s_{i+1})$-path in  $G-U_i$.
%
%
%
	This construction is not always possible, since at some moment an $(s_i,s_{i+1})$-path might either not exist or it might be too long.
	
	Let $i$ be the smallest index such that there is no $(s_i, s_{i+1})$-path of length at most $2k+1$ in $G-U_i$ for $i\le q-1$.
	We have that 
	
	\begin{equation*}
	\begin{aligned}
		|U_i|\le & |I|+|V(Q_i)\cup S|\le (4k^2-8k)+(2k \cdot i+q)\\\le & (4k^2-8k)+(2k\cdot (q-1)+q)=(4k^2-8k)+q\cdot(2k+1)\\\le &(4k^2-8k)+k^2\cdot(2k+1)=2k^3+5k^2-8k\\< &2k^3+2k^3-8k=4k^3-8k.
	\end{aligned}
	\end{equation*}
	Then we can apply  \Cref{lemma:diameter_modulator_to_preserving_path}, since $\delta(G)>4k^5>|U_i|+(k-1)$ and $\diam(G-U_i)>2k$. Thus 
    there is a $k$-preserving path $P'$ of length at most $4k-2+|U_i|$ in $G$. Recall that a path $P'$ in $G$ is $k$-preserving if any vertex $v$ not on $P'$ has at least $\ndef(v)$ non-neighbors on $P'$ (\Cref{def:preserving}).
	Observe that the number of vertices in $P'$ is at most $$4k-2+|U_i|+1\le 4k-1+4k^3-8k<4k^3-4k< \frac{\diam(T)}{2},$$
	and hence  by \Cref{lemma:preserving_path_gives_solution}, $G$ contains $T$.
	
	Therefore, in the rest of the proof we can assume that the $(s_1, s_q)$-path $Q_{q-1}$ is constructed successfully and that its length is at most 
	\begin{equation*}
	\begin{aligned}
	(2k+1)\cdot (q-1)< & (2k+1)\cdot ((k-1)^2+1)=(2k+1)\cdot(k^2-2k+2)\\=&(2k^3-4k^2+4k)+(k^2-2k+2)\\=&2k^3-(3k^2-2k-2)<2k^3-(3k^2-3k)<2k^3.
	\end{aligned}
	\end{equation*}
	Recall that at least $2k^3$ edges were contracted to obtain the path $P$ from the respective maximal trivial path, thus the bound above shows that enough ``space'' was left during the contraction for the extended isomorphism.
	
	We now transform the isomorphism $\sigma'$ using the obtained path $Q_{q-1}$.
	First, we transform $T'_W$ into a new tree $T''_W$ by replacing $uv \in E(T'_W)$ with a path isomorphic to $Q_{q-1}$.
	This corresponds to reversing some contractions made to obtain $T'_W$ from $T_W$; that is, $T''_W$ still remains a minor of $T_W$.
	To transform the isomorphism $\sigma'$ into the corresponding isomorphism $\sigma''$ from $T''_W$, we simply extend $\sigma'$ with the mapping of the inserted $uv$-path of $T''_W$ into the $\sigma'(u)\sigma'(v)$-path $Q_{q-1}$ in $G$.
	At this point, by the choice of the path $Q_{q - 1}$, the obtained isomorphism $\sigma''$ would satisfy the condition of \Cref{lemma:multi_saved_k_neighbours}. However, $T''_W$ is a minor of $T$, while we need a subgraph of $T$ in order to apply the lemma.

	We further extend the isomorphism $\sigma''$ into the isomorphism $\sigma$ from $T_W$ into $G$.
	To achieve that, we reverse all edge contractions that transform $T_W$ into $T''_W$ while simultaneously extending trivial paths in the image of the isomorphism.
	The only procedure for extension we use here is an insertion of a single vertex in $G$ between two adjacent vertices in the image of the tree.
	
	Formally, we proceed as follows.
	Put $T_0=T''_W$ and $\sigma_0=\sigma''$.
	Then, for each $i$ between $0$ and $t - 1$, where $t = |V(T_W)|-|V(T''_W)|$ is the number of vertex insertions we need to make, obtain a larger tree $T_{i + 1}$ together with an isomorphism $\sigma_{i + 1}$ from $T_{i + 1}$ to $G$ by extending the tree $T_i$ and the isomorphism $\sigma_i$. We always have that $|V(T_i)| = |V(T''_W)| + i$, and $T_i$ is a minor of $T_W$.
	
	In order to construct $T_{i+1}$ from $T_i$, note that since $i < t$, $T_i$ has at least one maximal trivial path that is shorter than its original counterpart in $T_W$.
	Denote this maximal trivial path by $P_i$.
	We want to find a vertex in $G-\Ima \sigma_i$ that has two consecutive neighbors on the path $\sigma_i(P_i)$.
	If such a vertex exists in $G$, denote it by $v_i$ and ``insert'' $v_i$ between its consecutive neighbors in $\sigma_i(P_i)$, with its preimage in $T_i$ being a new vertex inserted in the corresponding place in $P_i$. In formal terms, let $T_{i + 1}$ be obtained from $T_i$ by inserting a new vertex $x$ between $\sigma_i^{-1}(u)$ and $\sigma_i^{-1}(w)$, where $u$ and $w$ are the consecutive neighbors of $v_i$ on $\sigma_i(P_i)$, and 
	let $\sigma_{i+1}$ be obtained by extending $\sigma_i$ with mapping the new vertex $x$ to $v_i$.

    Assume now that on every iteration $i \in [0, t - 1]$ the suitable vertex $v_i$ exists, therefore we obtain the tree $T_t$ and the isomorphism $\sigma_t$ from $T_t$ to $G$. Clearly, $T_t = T_W$ as $T_t$ is a minor of $T_W$, and $|V(T_t)| = |V(T_W)|$; denote $\sigma = \sigma_t$.
	We apply \Cref{lemma:multi_saved_k_neighbours} to the tree $T_W$ with the isomorphism $\sigma$; by construction, $T_W$ contains the set of the leaf neighbors $W$, and sufficiently many of their non-neighbors are in $\Ima \sigma$. Thus by \Cref{lemma:multi_saved_k_neighbours}, there is also an isomorphism from $T$ to $G$.
	
	It is left to consider the case where no suitable vertex exists on the step $i \in [0, t - 1]$.
	Then $V(G)\setminus\Ima\sigma_i$ has no vertices with two consecutive neighbors on the path $\sigma_i(P_i)$.
	The length of $P_i$ is at least $2k$  by the construction of $T'_W$. We take any subpath of $\sigma_i(P_i)$ consisting of exactly $2k$ vertices and denote it by $R$.
	Each vertex in $V(G)\setminus\Ima\sigma_i$ has at least $k-1$ non-neighbors on $R$.
	Hence, $R$ is a $k$-preserving path for the graph $G-(\Ima \sigma_i\setminus V(R))$.
	
	We have 
	\begin{equation*}
	\begin{aligned}
		\delta(G-(\Ima \sigma_i\setminus V(R)))> &\delta(G)-|V(T_W)|= \delta(G)-|E(T_W)|+1\\> &\delta(G)-\diam(T_W)\cdot|W|+1\\\ge & 2k\cdot\diam(T)-\diam(T_W)\cdot|W|+1\\> &2|W|\cdot\diam(T_W)-\diam(T_W)\cdot|W|+1\\=&\diam(T_W)\cdot|W|+1\ge |V(T_W)|.
	\end{aligned}	
	\end{equation*}
	We then argue that $T_W$ admits an isomorphism into a subgraph of $G-(\Ima\sigma_i\setminus V(R))$.
	Consider any trivial path $D$ of $T_W$ on at least $2k$ vertices that does not intersect $W$, which exists, e.g., inside the long trivial path given by \Cref{claim:long_path}.
	Clearly, there exists an isomorphism from $D$ into $R$.
	We extend this isomorphism to an isomorphism $\sigma$ from $T_W$ into a subgraph of $G-(\Ima\sigma_i\setminus V(R))$ greedily by \Cref{prop:chvatal-generalized}; this is possible since by the inequality above, $\delta(G-(\Ima \sigma_i\setminus V(R))) > |V(T_W)|$.
	Since every vertex in $V(G) \setminus \Ima \sigma_i$ has at least $k - 1$ non-neighbors on $R$, so do the images of $W$ in $G$ under the newly constructed isomorphism. Therefore by \Cref{lemma:multi_saved_k_neighbours} applied to $T_W$ and $\sigma$, $G$ contains $T$ as a subgraph.
	This completes the proof of the lemma.
\end{proof}

\subsection{Escaping neighborhoods and separating $G$}\label{subsec:escape}

The second tool we use for preserving neighbors is the notion of an escape vertex in $G$.
In the following lemma, we require $T$ to have a vertex of a large enough degree.
We will map this vertex of $T$ to an escape vertex in $G$ and then use it for embedding non-neighbors into partial isomorphism in order to further pipeline with \Cref{lemma:multi_saved_k_neighbours}.
Recall that a vertex $u$ in a graph $G$ is a $q$-escape vertex if either $\deg_G(u) \ge \delta(G) + q$, or the maximum matching size between $N[u]$ and $V(G) \setminus N[u]$ is at least $q$ (\Cref{def:escape}).

\begin{lemma}\label{lemma:escape_vertex_and_high_degree_t}
	Let $G$ be a graph and $T$ be a tree on $\delta(G)+k$ vertices for $k\ge 2$.
	Let $q=2k^2\cdot\diam(T)$. 
	If $\delta(G)\ge q$, $\Delta(T)\ge k^2$,  $\ld(T)<k$, and $G$ has a $q$-escape vertex, then $G$ contains $T$ as a subgraph.
\end{lemma}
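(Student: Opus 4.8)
The plan is to combine the escape vertex of $G$ with the high-degree vertex of $T$ to build a partial isomorphism whose image contains enough non-neighbors of every leaf-adjacent vertex, and then invoke \Cref{lemma:multi_saved_k_neighbours}. Let $u$ be a $q$-escape vertex of $G$. First I would dispose of the easy subcase: if $\deg_G(u) \ge \delta(G) + q \ge \delta(G) + k - 1$, then $\ndef(u') = 0$ for the relevant image and even more is true; more to the point, in this case $G$ has a vertex of degree $\ge \delta(G)+k-1$, so we can map a high-degree vertex of $T$ there and apply Chv\'atal's Lemma (\Cref{prop:chvatal-generalized}) and then \Cref{lemma:multi_saved_k_neighbours} directly, since the huge-degree image automatically leaves $k-1$ free neighbors. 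So assume instead that there is a matching $M$ of size $q$ between $N_G[u]$ and $V(G)\setminus N_G[u]$.

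**Embedding around the escape vertex.** Let $z$ be a vertex of $T$ with $\deg_T(z)\ge k^2$. Because $\ld(T) < k$, at most $k-1$ of the neighbors of $z$ are leaves, so $z$ has at least $k^2 - (k-1) \ge k^2 - k + 1$ non-leaf neighbors; each such neighbor is the root of a subtree of $T$ of size at least $2$ hanging off $z$. The idea is to map $z$ to $u$, and to route many of $z$'s hanging subtrees through the matching $M$: for a non-leaf neighbor $y$ of $z$, I would map $y$ to a vertex of $N_G(u)$ that is the $N_G[u]$-endpoint of some matching edge, and then map one further vertex of the subtree rooted at $y$ to the other endpoint of that matching edge, which lies outside $N_G[u]$. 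Doing this for $\Theta(k)$ distinct subtrees puts $\Theta(k)$ vertices of the image outside $N_G[u]$. The diameter bound $\diam(T) \le$ the given value controls how many vertices each hanging subtree needs (at most $\diam(T)+1$), and the lower bound $\delta(G)\ge q = 2k^2\diam(T)$ guarantees that the matching $M$ is long enough, and that $N_G(u)$ is large enough, to carry out these embeddings greedily without running out of room: at each step the number of vertices already used is $O(k^2 \diam(T)) = O(q) \le \delta(G)$, so \Cref{prop:chvatal-generalized} keeps being applicable to extend each hanging subtree once its root is placed.

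**Putting it together.** After this partial embedding I would have a subgraph isomorphism $\sigma$ of some subtree $T'$ of $T$ (containing $z$ and enough of its hanging subtrees, but not the $k-1$ leaves adjacent to $z$ that we are reserving, nor the leaves we reserve for the set $W$ of leaf-adjacent vertices) such that $|\Ima\sigma \setminus N_G[u]| \ge k-1$. The point of using an escape vertex rather than a path is that \emph{every} vertex $v$ of $G$ — not just $u$ — needs $\ndef(v)$ of its non-neighbors occupied; here I would note that $\ndef(v) \le k-1$ always, and the vertices of $\Ima\sigma$ lying outside $N_G[u]$, together with vertices of $N_G(u)$ inside the image, are non-neighbors of any vertex $v$ that is non-adjacent to them — but to be safe the cleanest route is to ensure the set $W = N_T(L)$ of leaf-parents is included in $T'$ and that for each $w\in W$, $|\Ima\sigma \setminus N_G[\sigma(w)]| \ge \ndef(\sigma(w))$. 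This is where the extra counting goes: one needs that placing $k-1$ vertices outside $N_G[u]$ forces, for every image $\sigma(w)$ with $w\in W$, at least $\ndef(\sigma(w))\le k-1$ of those to be non-neighbors of $\sigma(w)$ as well, using that $\sigma(w)$ has degree at most $\delta(G)+k-1$ when $\ndef(\sigma(w))>0$ and that the image has size $\le \delta(G)+k-1$; an argument of the same flavor as the claims inside \Cref{lemma:multi_leaf_algo}. Once the hypothesis of \Cref{lemma:multi_saved_k_neighbours} is verified, that lemma yields an isomorphism of all of $T$ into $G$.

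**Main obstacle.** The delicate point is not the greedy embedding itself — that is a routine application of \Cref{prop:chvatal-generalized} once the counting is set up — but rather guaranteeing simultaneously that (i) $z$'s large degree survives into the partial embedding (we must not accidentally exhaust $N_G(u)$ before placing all of $z$'s children that we need), and (ii) the resulting image supplies $\ndef(\sigma(w))$ non-neighbors for \emph{each} $w$ in the chosen leaf-parent set $W$, not merely for $u$. Balancing these forces the precise choice of $q = 2k^2\diam(T)$: the factor $k^2$ pays for the $|W| \le k-1$ leaf-parents each needing up to $k-1$ saved neighbors, and the factor $\diam(T)$ pays for the size of each hanging subtree. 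I expect the bulk of the work to be the bookkeeping that turns "$k-1$ vertices outside $N_G[u]$'' into "$\ndef(\sigma(w))$ non-neighbors of $\sigma(w)$ for all $w\in W$ simultaneously,'' exactly as in the claims proved within \Cref{lemma:multi_leaf_algo}.
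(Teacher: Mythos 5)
Your high-level plan (map the high-degree vertex of $T$ to the escape vertex $u$, use the matching to push part of the image outside $N_G[u]$, then invoke \Cref{lemma:multi_saved_k_neighbours}) is indeed the same skeleton as the paper's proof, and the ``main obstacle'' you flag at the end — turning ``$k-1$ vertices outside $N_G[u]$'' into ``$\ndef(\sigma(w))$ non-neighbors of $\sigma(w)$ for each $w\in W$'' — is precisely the step the paper has to work hardest on. But the argument you sketch for bridging that gap is not just a missing detail; as written, it fails.

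The specific problem is the claim that ``placing $k-1$ vertices outside $N_G[u]$ forces, for every image $\sigma(w)$ with $w\in W$, at least $\ndef(\sigma(w))$ of those to be non-neighbors of $\sigma(w)$.'' Lying outside $N_G[u]$ gives you nothing about adjacency to $\sigma(w)$; a vertex $\sigma(w)$ can perfectly well be adjacent to all of the matching endpoints you routed through, even when $\deg_G(\sigma(w))<\delta(G)+k-1$. The generic counting you gesture at — image size minus degree of $\sigma(w)$ — only yields enough non-neighbors if $|\Ima\sigma|\geq\delta(G)+k$, i.e.\ only after the \emph{whole} tree is already embedded, which is exactly what you are trying to establish. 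For a genuinely partial image of size $O(k^2\diam(T))$, that count gives nothing. The same issue appears in your ``easy subcase'' where $\deg_G(u)\ge\delta(G)+q$: that $u$ has many free neighbors does not by itself help with $\ndef(\sigma(w))$ for the other $w\in W$, which can sit anywhere in $T$ and be mapped anywhere in $G$.

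What the paper does differently is to expand the image \emph{one targeted non-neighbor at a time}: whenever some $w\in W$ is still deficient, it shows (Claim~\ref{claim:escape_w_neighborhood}) that there is a free vertex $v$, within distance two of $u$ in $G-(\Ima\sigma\setminus\{u\})$, with $v\notin N_G(\sigma(w))$. The argument is an either/or: either $u$ already has a free neighbor not adjacent to $\sigma(w)$ (this subsumes your high-degree case), or all free neighbors of $u$ are adjacent to $\sigma(w)$, which forces $u$ and $\sigma(w)$ to have an enormous common neighborhood (since $|N_G(u)|\ge\delta(G)$ and only $|\Ima\sigma|\le q/2-k$ of $u$'s neighbors are occupied), hence $\sigma(w)$ has fewer than $q/2$ neighbors outside $N_G(u)$, and so among the $>q/2$ untouched matching edges some endpoint outside $N_G[u]$ is a non-neighbor of $\sigma(w)$. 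This case split — and the common-neighborhood count that makes the second branch go through — is the missing idea. Without it, routing a batch of subtrees through the matching gives you vertices outside $N_G[u]$ but not the $\sigma(w)$-specific non-neighbors that \Cref{lemma:multi_saved_k_neighbours} actually requires.
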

\begin{proof}
	Let $u$ be a $q$-escape vertex in $G$ and let $t$ be a maximum-degree vertex in $T$.
	To prove the lemma, we take $W$, the 
	set of neighbors of arbitrary $k-1$ leaves of $T$, and construct a partial isomorphism of $T$ that fits into \Cref{lemma:multi_saved_k_neighbours} with  $W$.
	
	We start with $T'$, the minimal  subtree of $T$ containing all vertices of $W\cup \{t\}$ in $T$. Thus  $|V(T')|\le \diam(T)\cdot(|W|+1)\le \diam(T)\cdot k< \delta(G)$.
	We initialize $\sigma$ to be an arbitrary isomorphism from $T'$ into a subgraph of $G$ that maps $t$ to $u$, the $q$-escape vertex.  That is,  $\sigma(t)=u$; such an isomorphism can be found greedily by \Cref{prop:chvatal-generalized}.
	
	Now we expand $\sigma$ (and $T'$ correspondingly) so it occupies at least $\ndef(\sigma(w))$ non-neighbors of $\sigma(w)$ for each $w\in W$.
	We expand $T'$ and $\sigma$ iteratively using the following two claims.
	
	\begin{claim}\label{claim:escape_w_neighborhood}
		If $|V(T')| \le \frac{q}{2}-k$ and $\ndef(\sigma(w))>0$, then there is a vertex $v \in V(G)$ such that $v \notin N_G(\sigma(w))\cup \Ima \sigma$ and the distance between $u$ and $v$ in $G-(\Ima\sigma\setminus\{u\})$ is at most $2$.
	\end{claim}
	\begin{claimproof}
		If there is $v \in N_G(u)\setminus(N_G(\sigma(w))\cup \Ima \sigma)$, then we are done.
	    Clearly this is the case if $\deg_G(u) \ge \delta(G) + q$, as $|N_G(\sigma(w))\cup \Ima \sigma| \le \delta(G) + k - 1 + \frac{q}{2} - k < \delta(G) + q$.
		
		Thus we may assume $N_G(u)\subseteq N_G(\sigma(w))\cup \Ima\sigma$, and there is a matching $M$ of size $q$ between $N_G[u]$ and its complement in $G$ by the definition of a $q$-escape vertex.
		Since $|N_G(u)|\ge \delta(G)$ and $|N_G(\sigma(w))|<\delta(G)+k$, we have that $u$ and $\sigma(w)$ should have at least $$\delta(G)-|\Ima\sigma|\ge \delta(G)-\left(\frac{q}{2}-k\right)>\deg_G(\sigma(w))-\frac{q}{2}$$ common neighbors.
		Hence, at most $\frac{q}{2}-1$ neighbors of $\sigma(w)$ lie outside of $N_G(u)$.
		
		Now, at least $q-|\Ima\sigma|=q-|V(T')|>\frac{q}{2}$ of the edges of $M$ do not have endpoints in $\Ima\sigma$.
		At least one of these edges has its endpoint outside of $N_G(\sigma(w))$, since at most $\frac{q}{2} - 1$ neighbors of $\sigma(w)$ lie outside of $N_G(u)$, and every edge in $M$ has one endpoint outside of $N_G(u)$.
		The distance between this endpoint and $u$ is two, completing the proof of the claim.
	\end{claimproof}
	
	Initially, $T'$ has at most $k$ leaves, and each expansion will add one leaf to $T'$.
	The following claim provides a vertex in $T$ to expand $T'$ with.
	
	\begin{claim}\label{claim:escape_free_vertex_in_t}
		If $T'$ has at most $k^2-k$ leaves, then there is a non-leaf neighbor $x\in N_T(t)$ of $t$ such that $x \notin V(T')$. 
	\end{claim}
	\begin{claimproof}
		Since $\ld(T)<k$, the vertex $t$ has more than $\deg_T(t)-k\ge k^2-k$ non-leaf neighbors in $T$. 
		Also $\deg_{T'}(t)\le k^2-k$, since $T'$ has at most $k^2-k$ leaves.
		Then at least one non-leaf neighbor of $t$ in $T$ is not in $V(T')$.
	\end{claimproof}
	
	As we noted above each expansion of $T'$ adds exactly one leaf to it, and at most $\sum{w \in W}\ndef(\sigma(w))\le (k-1)^2$ expansions are made in total.
	Hence, $T'$ has  less than $(k-1)^2+k=k^2-k+1$ leaves until the last possible expansion.
	It follows that $|V(T')|\le (k^2-k)\cdot \diam(T)\le \frac{q}{2}-k$.
	We get that two claims above can be applied to $T'$ at each iteration of the expansion.

\begin{figure}[ht]
\centering
\scalebox{0.7}{
\input{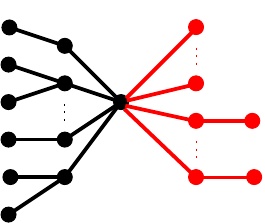_t}}
\caption{The expansion of $T'$: the original vertices and edges of $T'$ are shown in black and the added vertices and edges are red.}
\label{fig:expT}
\end{figure}

	One iteration of the expansion process is as following (see also \Cref{fig:expT}). 
	Until $\sigma$ and $T'$ satisfy the conditions of \Cref{lemma:multi_saved_k_neighbours}, take a vertex $w \in W$ such that $\sigma(w)$ has less than $\ndef(\sigma(w))$ non-neighbors in $\Ima\sigma$.
	By \Cref{claim:escape_w_neighborhood}, there is a vertex $v \notin N_G(\sigma(w))\cup\Ima\sigma$  within distance at most $2$ from $u$, and the shortest path to this vertex does not go through $\Ima\sigma$.
	Take $x$ given by \Cref{claim:escape_free_vertex_in_t}, and take its neighbor $y\in N_T(x)$ such that $t\neq y$; it exists since $x$ is not a leaf in $T$.
	If $v$ is a neighbor of $u$, expand $T'$ with $x$, making it adjacent with $t$, and put $\sigma(x):=v$ (see case (i) in \Cref{fig:expT}).
	Otherwise, expand $T'$ with the path $t-x-y$ and map it to the shortest $(u,v)$-path in $G-\Ima\sigma$ (see case (ii) in \Cref{fig:expT}).
	
	Since expansion is always possible, we reach the situation when \Cref{lemma:multi_saved_k_neighbours} can be applied to $T'$ and $\sigma$. The proof of the lemma is thus complete. 
\end{proof}

We also have to deal with trees without vertices of large degree.
The basic idea of the following lemma is quite similar to the last one, but the mechanism of mapping extension is different.
Sometimes the extension is not possible; in this case, we obtain a small vertex separator of $G$. By small we mean that its size is significantly smaller than $\delta(G)$.

\begin{lemma}\label{lemma:escape_or_separator}
	Let $G$ be a graph and $T$ be a tree on $\delta(G)+k$ vertices for $k\ge 2$.
	If $\ld(T)<k$, $\Delta(T)< k^2$, and $\delta(G)\ge k^5\cdot\diam(T)$, then either
	\begin{itemize}
		\item $G$ contains $T$ as a subgraph, or
		\item there is a vertex separator of $G$ of size at most $2k\cdot(k-1)\cdot(\diam(T)+2)$.
	\end{itemize}
\end{lemma}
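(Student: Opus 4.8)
Following the same template as in Lemmas~\ref{lemma:large_diameter_trivial_paths} and~\ref{lemma:escape_vertex_and_high_degree_t}, the idea is to reduce to \Cref{lemma:multi_saved_k_neighbours}: pick a set $W=N_T(L)$ for an arbitrary set $L$ of $k-1$ leaves of $T$, build a connected subtree $T'\supseteq W$ of $T-L$ together with a subgraph isomorphism $\sigma\colon V(T')\to V(G)$, and make sure that for every $w\in W$ the image $\Ima\sigma$ contains at least $\ndef(\sigma(w))$ non-neighbours of $\sigma(w)$. If we can always do this, then $G$ contains $T$; if at some point the construction gets stuck, we will extract a small separator of $G$ instead, giving the second alternative in the statement.

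\textbf{Constructing the isomorphism.} First I would take $T'$ to be the minimal subtree of $T$ containing $W$ (so $|V(T')|\le\diam(T)\cdot|W|\le k\cdot\diam(T)<\delta(G)$), and fix any subgraph isomorphism $\sigma$ of $T'$ into $G$ via \Cref{prop:chvatal-generalized}. Then I would repeatedly try to enlarge $T'$ and $\sigma$ so as to cover the missing non-neighbours of the $\sigma(w)$'s, exactly as in the escape-vertex lemma, but now the enlargement step is more delicate because $T$ has no high-degree vertex and $G$ has no escape vertex. The natural move: while some $w\in W$ has fewer than $\ndef(\sigma(w))$ non-neighbours occupied, look for a vertex $v\in V(G)\setminus(\Ima\sigma\cup N_G[\sigma(w)])$ reachable from $\Ima\sigma$ by a short path avoiding $\Ima\sigma$; because $\Delta(T)<k^2$ and $\ld(T)<k$, every vertex of $T'$ that we have mapped so far still has a non-leaf neighbour in $T$ outside $T'$ (there have been at most $(k-1)^2$ enlargements, each adding $O(\diam(T))$ vertices, so $|V(T')|$ stays below, say, $k^3\cdot\diam(T)\le\delta(G)/k$), and we can route the attaching path of length at most $\diam(T)$ into the short path in $G$. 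The key counting point is that since $\sigma(w)$ is not an escape vertex of the appropriate order, its degree is close to $\delta(G)$ and the matching from $N_G[\sigma(w)]$ out of its neighbourhood is small, which forces $\sigma(w)$ to share almost all of its neighbourhood with the ``bulk'' of $G$ and hence guarantees such a $v$ exists unless a bottleneck occurs.

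\textbf{Extracting the separator when extension fails.} The main obstacle is the failure case: when no such $v$ can be reached by a short path from the current image, the set of vertices used on the ``blocked'' short paths together with the relevant neighbourhood must separate the target vertices $v$ from the image $\Ima\sigma$ in $G$. Concretely, if for some $w\in W$ the deficiency cannot be covered, then the vertices within distance $\diam(T)+2$ of $\sigma(w)$ in $G-\Ima\sigma$ that are \emph{not} neighbours of $\sigma(w)$ are all unreachable, which (using the degree/matching bounds coming from the absence of an escape vertex) pins down a vertex cut. I would bound the size of the relevant ``frontier'' by about $2k\cdot(k-1)$ choices (one deficiency unit per $w\in W$, times roughly $k-1$ candidate witnesses each) times $(\diam(T)+2)$ vertices along each blocked path, yielding a separator of size at most $2k(k-1)(\diam(T)+2)$ as claimed. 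Making this frontier genuinely a separator of the \emph{whole} graph $G$ (not just of $G-\Ima\sigma$) requires noting that $|V(T')|$ is far below $\delta(G)$, so the image cannot itself disconnect anything of size comparable to $\delta(G)$, and the two sides of the cut — the small ball around $\sigma(w)$ and everything else — are both non-empty because otherwise $G$ would be too small. The bookkeeping of exactly which vertices go into the separator, and verifying both sides are nonempty, is where the real work lies; the rest is an application of \Cref{lemma:multi_saved_k_neighbours} and routine degree arithmetic using $\delta(G)\ge k^5\cdot\diam(T)$.
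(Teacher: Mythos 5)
Your plan correctly identifies the template (build a partial embedding, extend it to cover deficiencies, fall back to a separator on failure) but there are genuine gaps in the two places where the real work lies, and one hypothesis you invoke is not available.

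\textbf{Missing the construction of designated expansion roots.} The paper's proof hinges on a set $U$ of $(k-1)^2$ vertices in $T$, disjoint from $W$, each rooting a depth-$2$ subtree with all these rooted subtrees pairwise disjoint (Claim~\ref{claim:setsSandW}). Each expansion step uses (and burns) one $u\in U$, extends $T'$ by one or two vertices of $T_u$, and maps them along a $\le 2$-step path from $\sigma(u)$ in $G$. Your proposal replaces this with the assertion that ``every vertex of $T'$ that we have mapped so far still has a non-leaf neighbour in $T$ outside $T'$'', which is not justified (and is not what you want anyway: you need a controlled place in $T$ to attach a short path, not just some neighbour). Without $U$ there is no argument for why the enlargement is always available on the $T$-side when $\Delta(T)<k^2$. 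The derivation of $U$ from the bounds $\ld(T)<k$, $\Delta(T)<k^2$, and $\delta(G)\ge k^5\diam(T)$ is the main combinatorial content of the proof; it is entirely absent from your sketch.

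\textbf{Reliance on a hypothesis not in the statement.} You invoke ``since $\sigma(w)$ is not an escape vertex of the appropriate order, its degree is close to $\delta(G)$ and the matching from $N_G[\sigma(w)]$ out of its neighbourhood is small.'' This lemma makes no assumption about escape vertices — that hypothesis belongs to its neighbour, Lemma~\ref{lemma:escape_vertex_and_high_degree_t}, and to the downstream Theorem~\ref{thm:medium_diameter_escape_vertices}. The failure case here must be handled without it.

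\textbf{The separator extraction.} The paper's failure argument is clean: if no $u\in U$ admits a usable $v$, set $A:=\bigcup_{u\in U}N_G(\sigma(u))\setminus\Ima\sigma$ and $B:=\bigcap_{w\in W'}N_G(\sigma(w))$; failure means $A\subseteq B$ \emph{and} $N_G(A)\setminus\Ima\sigma\subseteq B$, so $S:=\Ima\sigma\cup(B\setminus A)$ separates $A$ from the rest of $G$, and $|S|<2|\Ima\sigma|+k$ because $|A|\ge\delta(G)-|\Ima\sigma|$ while $|B|<\delta(G)+k$. Your ``blocked short paths times $\diam(T)+2$'' accounting is not how the $2k(k-1)(\diam(T)+2)$ bound arises; it comes from bounding $|\Ima\sigma|=|V(T')|\le(k^2-k)\diam(T)+1+2(k-1)^2$ and then $2|\Ima\sigma|+k$. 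Arriving at the same numeric factors for a different reason does not make the estimate correct, and you also do not show the two sides of your cut are nonempty, nor why including $\Ima\sigma$ in the separator is both necessary and within budget.
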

\begin{proof}
	The first part and general idea of the proof is similar to the previous lemma: we pick a set $W$ of neighbors of some $k - 1$ leaves, a spanning tree $T'$ of $W$ and a subgraph isomorphism $\sigma: V(T')\to V(G)$.
	Then we try to expand $\sigma$ and $T'$ to satisfy the non-neighbor condition on each $w \in W$.	

	However, the statement of the current lemma does not guarantee us a vertex of a large enough degree in $T$. Therefore, 
	instead of starting from a mapping of one single high-degree vertex and then extending this mapping to $T'$, we use a different strategy. We select distinct vertices for each single extension iteration.
	We call this set $U$, and we require its size to be $(k-1)^2\ge\sum\ndef(w)$.
	We first claim that the choice of   sets $U$ and $W$ consisting of distinct vertices always exists.
	
	To simplify further arguments, we consider $T$ a rooted tree.
	We pick as its root an arbitrary vertex $r\in V(T)$.
	For each $v \in V(T)$, by $T_v$ we denote the subtree of $T$ rooted in $v$.
	
	\begin{claim}\label{claim:setsSandW}
		There is a set $U\subset V(T)$ of size $(k-1)^2$ and a set $W\subset V(T)$ of size $k-1$ such that
		\begin{itemize}
			\item $U\cap W=\emptyset$;
			\item For each $s\in U$, the depth of tree $T_{s}$ is two;
			\item  For each $w \in W$, the depth of tree $T_{w}$ is one;
			\item Rooted subtrees of $T$ corresponding to vertices in $U\cup W$ are pairwise disjoint.
		\end{itemize}
	\end{claim}
	\begin{claimproof}
		In this proof, by a \emph{subtree} we mean a rooted subtree  $T_x$ for some $x\in V(T)$.
		Note that the depth of $T$ is at least $2$, since $|V(T)|>\delta(G)>\Delta(T)+1.$
		
		Assume first that  $T$ has at least $k^2-k$ subtrees of depth exactly $2$.
		All these trees are pairwise disjoint; we  pick $U$ as the roots of any $(k-1)^2$ of these subtrees.
		There are $k^2-k-(k-1)^2=k-1$ depth-$2$ subtrees left, pick an arbitrary depth-$1$ subtree from each one of them.
		The roots of these depth-$1$ trees form the set $W$, and, up to our assumption,  the claim follows.
		
		It is left to show that $T$ cannot have less than $k^2-k$ subtrees of depth exactly $2$. Targeting a contradiction, we 
		suppose that $T$ has less than $k^2-k$ such subtrees.
		
		There are three types of vertices in $T$: (a) roots of subtrees of depth two or greater; (b) roots of subtrees of depth one; (c) leaves.
		Consider the subtree $T_a$ of $T$ where all vertices of types (b) and (c) are removed. The tree $T_a$ thus consists of all type (a) vertices of $T$, and it has less than $k^2-k$ leaves, since the leaves are exactly the roots of depth-$2$ subtrees. 
		 By \Cref{lemma:leaves_diameter}, the number of vertices in $T_a$, which is equal to the number of  vertices of type (a) in $T$, is less than  $(k^2-k)\cdot \diam(T_a) \le (k^2-k)\cdot \diam(T)$.
		
		For vertices of type (b) in $T$, note that each vertex of type (b) is a neighbor to some vertex of type (a).
		So the number of these vertices is at most $(k^2-k)\cdot \diam(T)\cdot \Delta(T)$.
		Each leaf in $T$ (vertex of type (c)) has a vertex of type (a) or type (b) as its only neighbor, and each vertex can have at most $\ld(k)$ adjacent leaves.
		
		We obtain that   $$\delta(G)+k = |V(T)|\leq (k^2-k)\cdot\diam(T)\cdot (1+\Delta(T))\cdot (1+\ld(k))\le (k^2-k)\cdot k^2\cdot k \cdot \diam(T)<\delta(G),$$
		which is a contradiction.
	\end{claimproof}
	
\begin{figure}[ht]
\centering
\scalebox{0.7}{
\input{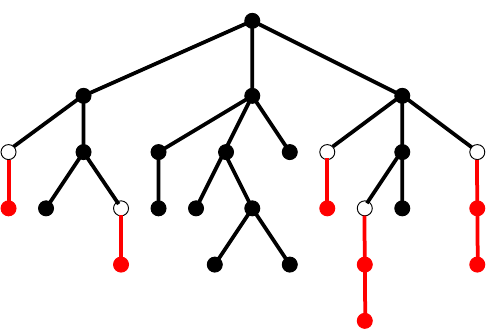_t}}
\caption{The expansion of $T'$; the vertices of $U$ are shown is white and  the added vertices and edges are red.}
\label{fig:expTS}
\end{figure}	
	
Let $U$ and $W$ be the sets given by \Cref{claim:setsSandW}. We start constructing the subgraph isomorphism by mapping a minimal subtree $T'$  of $T$ containing
$U\cup W$.  
	Since $\delta(G)>|V(T')|$, by \Cref{chvatal-theorem}, we can construct an isomorphism $\sigma$ from $T'$ into a subgraph of $G$.
	Similarly to the proof of \Cref{lemma:escape_vertex_and_high_degree_t}, we expand the embedding of $T'$ in $G$ as follows. Until $T'$ and $\sigma$ satisfy the statement of \Cref{lemma:multi_saved_k_neighbours}, we consider an arbitrary $w\in W$ that has not enough non-neighbors in $\Ima\sigma$.
	If there is $u \in U$ such that there is a vertex $v\notin N_G(\sigma(w))$ within a  distance at most $2$ from $\sigma(u)$ in the graph $G-(\Ima\sigma\setminus\{\sigma(u)\})$, then extend $T'$ with either one or two vertices from $T_{u}$ and map the corresponding path into the $(\sigma(u),v)$-path in $G$ (see~\Cref{fig:expTS}).
	After that, remove $u$ from $U$.
	Note that $U$ is chosen in such a way that $T'$ can always be extended with either one or two vertices. 
	
	However, there is a problem that could prevent a successful iteration. It could happen that none of the $u \in U$ is suitable for saving neighbors of any $w \in W$ that requires more non-neighbors in $\Ima\sigma$.
	To dive into the details, suppose that we arrive at such a situation.
		Let $W'$ be the set of $w\in W$ that does not satisfy the condition of \Cref{lemma:multi_saved_k_neighbours}.
	The set $U$ consists of vertices that were not yet used in the extension.
	Since the extension is not possible, none of the neighbors of $u\in U$ suits any of $w \in W'$. That is, 
	$$A:=\left(\bigcup_{u\in U}N_G(\sigma(u))\setminus\Ima\sigma\right) \subseteq B:=\bigcap_{w\in W'}N_G(\sigma(w)).$$

	Since the extension is not possible, the neighbors of $A$ in $G$ are also not suitable for any $w \in W'$. Thus $$N_G(A)\setminus\Ima\sigma\subseteq B$$ holds as well.
	Equivalently, vertex set $S:=\Ima\sigma\cup (B\setminus A)$ separates $A$ from the rest of $G$.
	But we know that $|A|\ge \delta(G)-|\Ima\sigma|$ while $|B|<\delta(G)+k$.
	Hence, the size of the vertex separator $S$ is less then $2|\Ima\sigma|+k$.
	
	To estimate $|\Ima\sigma|=|V(T')|$, note that we start $T'$ from a spanning tree with $k^2-k$ leaves, and each iteration adds at most two vertices to $T'$.
	So $|\Ima\sigma|<(k^2-k)\cdot\diam(T)+1+2(k-1)^2$.
	Thus the size of the separator $S$ is at most 
	\begin{equation*}
	\begin{aligned}
		2|\Ima\sigma|+k\le &2(k^2-k)\cdot \diam(T)+4(k-1)^2+k\\
		=& 2(k-1)\cdot(k\cdot\diam(T)+2(k-1))+k\\
		\le & 2(k-1)\cdot(k\cdot\diam(T)+2(k-1)+1)\\
		< & 2(k-1)\cdot k\cdot(\diam(T)+2),
	\end{aligned}
	\end{equation*}
	which completes the proof of the lemma.
\end{proof}

The last result of this subsection shows a way to employ a small vertex separator of $G$. Recall that a tree $T$ is $q$-separable if there exists an edge whose removal separates $T$ into two subtrees each of size at least $q$ (\Cref{def:separable}).

\begin{lemma}\label{lemma:separator_and_separable_tree}
	Let $G$ be a connected graph and let $T$ be a tree on $\delta(G)+k$ vertices. 
	Let also $S$ be a vertex separator of $G$ such that $\delta(G)\ge 3|S|$.
	If $\delta(G)\ge 15k$, $\ld(T)<k$ and $T$ is \separable{$(|S|+k)$}, then $G$ contains $T$ as a subgraph.
\end{lemma}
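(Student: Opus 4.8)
The plan is to build the embedding directly by splitting $T$ along a separating edge and mapping one side through the separator $S$.

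First I would tidy up the separator. Passing from $S$ to any inclusion‑minimal separator $S'\subseteq S$ only shrinks $|S|$, so $\delta(G)\ge 3|S'|$ still holds and $T$ is still \separable{$(|S'|+k)$}; hence I may assume $S$ itself is an inclusion‑minimal separator. Let $C_1,\dots,C_p$ ($p\ge 2$) be the components of $G-S$. The standard fact I will use is that every $s\in S$ has a neighbour in every $C_i$: otherwise $C_i$ would be a whole union of components of $G-(S\setminus\{s\})$ (its only possible escape route, $s$, is cut off), so $S\setminus\{s\}$ would already separate $G$, contradicting minimality. Since each vertex of $C_i$ has all its neighbours inside $C_i\cup S$, we also get $\delta(G[C_i])\ge \delta(G)-|S|$, hence $|C_i|\ge\delta(G)-|S|+1$.

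Next I would set up the split and the key counting. Let $e=ab\in E(T)$ be an edge whose removal splits $T$ into $T_1\ni a$ and $T_2\ni b$ with $n_i:=|V(T_i)|\ge |S|+k$; since $n_1+n_2=|V(T)|=\delta(G)+k$, each $n_i\le\delta(G)-|S|$, and after swapping the roles of $a,b$ if needed I may assume $n_1\le n_2$, so $n_1\le\tfrac{\delta(G)+k}{2}$. Let $d:=\deg_{T_1}(a)$ be the number of neighbours of $a$ in $T$ other than $b$, with branch subtrees $U_1,\dots,U_d$ of $T_1-a$ of sizes $m_1,\dots,m_d$; a branch with $m_i=1$ is a leaf of $T$ adjacent to $a$, and there are at most $k-1$ of those because $\ld(T)<k$, so $n_1=1+\sum_i m_i\ge 2d-(k-1)$ and therefore $d\le\tfrac{n_1+k-2}{2}\le\tfrac{\delta(G)}{4}+k$. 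On the other hand, fix any $s\in S$ and let $C_b$ be a component minimising $|N_G(s)\cap C_b|$; since $s$ meets all $p\ge2$ components, $1\le|N_G(s)\cap C_b|\le\tfrac1p|N_G(s)\setminus S|\le\tfrac12|N_G(s)\setminus S|$, so
\[
|N_G(s)\setminus(S\cup C_b)|\ \ge\ \tfrac12|N_G(s)\setminus S|\ \ge\ \tfrac12\bigl(\delta(G)-|S|+1\bigr)\ \ge\ \tfrac{\delta(G)}{3}\ \ge\ \tfrac{\delta(G)}{4}+k\ \ge\ d,
\]
using $|S|\le\delta(G)/3$ and $\delta(G)\ge 12k$. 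Then I construct the isomorphism: put $\sigma(a):=s$, pick $v\in N_G(s)\cap C_b$, and embed $T_2$ (rooted at $b$) into $G[C_b]$ with $\sigma(b):=v$ by \Cref{prop:chvatal-generalized} (valid since $n_2\le\delta(G)-|S|\le\delta(G[C_b])+1$); the edge $ab$ is realised by $sv$. To place $T_1$ I distribute the $d$ branches among the components $C_j\ne C_b$ so that $C_j$ gets at most $|N_G(s)\cap C_j|$ of them — feasible by the displayed inequality — map each branch root $a_i$ to a distinct unused neighbour of $s$ in its assigned component, and grow each $U_i$ greedily inside that component à la Chvátal: placing a vertex whose parent sits at $q\in C_j$ requires a free neighbour of $q$ in $C_j$, and $q$ has $\ge\delta(G[C_j])\ge\delta(G)-|S|\ge n_1$ of them, strictly more than the at most $n_1-1$ vertices of $C_j$ already used. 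The image of $T_1$ lies in $\{s\}\cup\bigcup_{C_j\ne C_b}C_j$, the image of $T_2$ in $C_b$, so the two are disjoint and together with $sv\in E(G)$ form a subgraph isomorphism of $T$ into $G$.

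The step I expect to be the main obstacle is the counting inequality $d\le|N_G(s)\setminus(S\cup C_b)|$ and, more precisely, arranging that it is always available. The naive idea — embed $T_1$ and $T_2$ into two different components and then repair the edge $ab$ — fails because the bridging vertex of $S$ might have almost all of its neighbours inside the component occupied by the other side, leaving no room to route $a$'s branches. What rescues it is the combination of (i) working with a minimal separator, which forces $s$ to meet at least two components and lets us pick $C_b$ where $s$ has at most half its neighbours, (ii) orienting the split so that the side routed through $s$ is the \emph{smaller} tree $T_1$, and (iii) the $\ld(T)<k$ bound controlling $\deg_{T_1}(a)$; these are exactly calibrated to the hypotheses $\delta(G)\ge 15k$ and $\delta(G)\ge 3|S|$.
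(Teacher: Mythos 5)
Your proof is correct, and it reaches the same conclusion by a genuinely different route from the paper's. Both proofs pass to an inclusion-minimal separator, split $T$ along the separating edge, place one endpoint of that edge at a vertex $s\in S$, and use \Cref{prop:chvatal-generalized} to grow the two halves on opposite sides of $S$. But the two proofs pick the side routed through $s$ by different criteria and then bound the relevant degree in different ways. The paper collapses $G-S$ into two blobs $A$ (one component) and $B$ (all the rest), chooses $s$ so that it has at least $\frac{1}{2}(\delta(G)-|S|)$ neighbours in $A$, takes the split endpoint $x$ with \emph{smaller $T$-degree}, and proves $\deg_T(x)\le\delta(G)/3$ by a counting contradiction: if both $\deg_T(x),\deg_T(y)>\delta(G)/3$, the $\ld(T)<k$ bound would force $T$ to be larger than $\delta(G)+k$. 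You instead keep the components separate, route the \emph{smaller half} $T_1$ through $s$, bound $d=\deg_{T_1}(a)\le\delta(G)/4+k$ directly from $n_1\le(\delta(G)+k)/2$ and the fact that at most $k-1$ branches of $T_1-a$ are singletons, choose $C_b$ as the component where $s$ has fewest neighbours, and distribute the remaining branches among the other components. Your version uses slightly less of the hypothesis ($\delta(G)\ge 12k$ rather than $\ge 15k$) but pays for it with a more involved distribution step; the paper trades a slightly stronger threshold for a cleaner single-blob embedding. Both are sound.
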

\begin{proof}
	Without loss of generality, we assume that $S$ is an inclusion-wise minimal separator of $G$.

	The graph $G-S$ consists of at least two connected components.
	Let $A$ be the vertex set of one of them and let $B:=V(G-S)\setminus A$ be the vertex set of all other connected components in $G-S$.
	
	Take arbitrary $s \in S$.
	Since $S$ is minimal, both $G[A\cup \{s\}]$ and $G[B\cup \{s\}]$ are connected.
	In total, $s$ has at least $\delta(G)-|S|$ neighbors in $A$ and $B$.
	Without loss of generality, $s$ has at least $\frac{1}{2}(\delta(G)-|S|)$ neighbors in $A$.
	
	Now consider the edge of $T$ that separates it into two connected parts of size at least $|S|+k$.
	Denote the endpoints of this edge by $x$ and $y$.
	Denote the two parts of $T$ by $T_x$ and $T_y$, such that they contain $x$ or $y$ respectively.
	Without loss of generality, we assume that $\deg_T(x)\le \deg_T(y)$.
	
	\begin{claim}
		$\deg_T(x)\le \frac{\delta(G)}{3}$.
	\end{claim}
	\begin{claimproof}
		The proof is by contradiction.
		If $\deg_T(x)>\frac{\delta(G)}{3}$, then $\deg_T(y)>\frac{\delta(G)}{3}$ as well.
		Since $\ld(T)<k$, both $x$ and $y$ have more than $\delta(G)/3-k$ neighbors (other than $x$ and $y$) in $T$ that are not leaves.
		
		Hence, in $T-\{x,y\}$ there at least $2\delta(G)/3-2k$ connected components of size at least two.
		But then $|V(T)|-2\ge 4\delta(G)/3-4k\ge\delta(G)+k$, which is a contradiction. 
	\end{claimproof}

	Now we have that both $V(T_x)$ and $V(T_y)$ are of size at most $$|V(T)|-(|S|+k)=(\delta(G)+k)-(|S|+k)=\delta(G)-|S|.$$
	Since $\delta(G[B])\ge \delta(G)-|S|$, there is a subgraph isomorphism from $T_y$ into $G[B]$.
	Construct an arbitrary such isomorphism using \Cref{prop:chvatal-generalized} that maps $y$ to a neighbor of $s$ in $G[B]$.
	
Now we show how  to map $T_x$ into $G[A\cup \{s\}]$.
	While the minimum degree of $G[A\cup \{s\}]$ is not guaranteed to be at least $\delta(G)-|S|$,
	$s$ is the only vertex that could break this condition.
	We ensured that $$\deg_{G[A\cup\{s\}]}(s)\ge \frac{\delta(G)-|S|}{2}\ge \frac{\delta(G)}{3}\ge \deg_T(x),$$
	so it is possible to map $x$ and its neighbors in $T_x$ into $s$ and its neighbors in $G[A\cup\{s\}]$.
	Since each vertex in $G[A]$ is of degree at least $\delta(G)-|S|$, this partial mapping extends into full isomorphism of $T_x$ into $G[A\cup \{s\}]$ by \Cref{prop:chvatal-generalized}.
	
	To conclude,  we constructed an isomorphism from  $T_x$ to $G[A\cup \{s\}]$ and from $T_y$ to $G[B]$.
	Since $x$ is mapped into $s$ and $y$ is mapped into a neighbor of $s$, the union of these mappings is the required isomorphism from $T$ to $G$.
\end{proof}

\subsection{Proof of \Cref{thm:medium_diameter_escape_vertices}}

We combine the results of this section into the proof of its main result.

\begin{proof}[Proof of \Cref{thm:medium_diameter_escape_vertices}]
	The proof consists of considering several cases.
	
	\medskip\noindent
	\textsl{$T$ has large diameter}. Suppose that $\diam(T)\ge 2 k^{11}$.
	Since $\delta(G)\ge 3^{17}>2^{16}$, $\delta(G)>(\log\delta(G))^4$.
	
	Then, as $\diam(T)\le 8k^6\cdot\log\delta(G)$, 
	\begin{equation*}
	\begin{aligned}
	\delta(G)>& (\log\delta(G))^4>\frac{\diam(T)^4}{8^4\cdot k^{24}}>\diam(T)\cdot \frac{\diam(T)^3}{3^{8}\cdot k^{24}}\\\ge & \diam(T)\cdot\frac{8k^{33}}{k^{32}}\ge \diam(T)\cdot 8k>2k\cdot \diam(T).
	\end{aligned}
	\end{equation*}
	Then, by \Cref{lemma:large_diameter_trivial_paths}, $G$ contains $T$.
	For the remaining part of the proof, we assume that $\diam(T)<2k^{11}$.
	
	\medskip\noindent
	\textsl{$T$ is of small max-degree.}
	Suppose now that $\Delta(T)<k^2$ and apply \Cref{lemma:escape_or_separator} to $G$ and $T$.
	If $G$ contains $T$, then we are done.
	Otherwise, there is a vertex separator $S$ of $G$ of size at most $$(\diam(T)+2)\cdot 2k(k-1)< (2k^{11}+2)\cdot 2k^2<6k^{13}\le 2k^{14}.$$
	Note that $\delta(G)\ge 3|S|$.
	
	We now argue that there is an edge in $T$ separating it into sufficiently large parts.
	\begin{claim}
	    In a tree $T$ with $|V(T)| \ge 2$ there exists an edge separating it into two parts of size at least $\frac{|V(T)|-1}{\Delta(T)}$.
	    \label{claim:separable}
	\end{claim}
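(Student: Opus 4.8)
The plan is to locate a centroid of $T$ and take the edge pointing toward its heaviest branch. Recall that a \emph{centroid} of an $n$-vertex tree is a vertex $c$ for which every connected component of $T-c$ has at most $\lfloor n/2\rfloor$ vertices; such a vertex exists in every tree (folklore: start at any vertex and repeatedly move into a component of size $>n/2$ — this process must terminate, and it terminates at a centroid).

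First I would set $n=|V(T)|$ and $\Delta=\Delta(T)$, fix a centroid $c$, and let $C_1,\dots,C_d$ be the vertex sets of the components of $T-c$, where $d=\deg_T(c)\le\Delta$ and $\sum_{i=1}^d|C_i|=n-1$. Reordering, assume $|C_1|=\max_i|C_i|$, so by averaging $|C_1|\ge (n-1)/d\ge (n-1)/\Delta$. Let $e$ be the unique edge of $T$ joining $c$ to the component $C_1$; removing $e$ splits $T$ into $C_1$ and $\{c\}\cup\bigcup_{i\ge 2}C_i$, of sizes $|C_1|$ and $n-|C_1|$ respectively. The first side already has at least $(n-1)/\Delta$ vertices.

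It remains to check that the second side is also large enough, and this is the only delicate point. By the centroid property $|C_1|\le\lfloor n/2\rfloor\le n/2$, hence $n-|C_1|\ge n/2$. Now $n/2\ge (n-1)/\Delta$ is equivalent to $n(\Delta-2)\ge -2$, which holds trivially when $\Delta\ge 2$; and when $\Delta=1$ the hypothesis $|V(T)|\ge 2$ forces $T=K_2$, $n=2$, so the inequality reads $1\ge 1$. Thus in all cases $n-|C_1|\ge (n-1)/\Delta$, and both sides of $e$ have at least $(n-1)/\Delta$ vertices, proving the claim.

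If one prefers to avoid invoking centroids, the same conclusion follows from a short extremal argument: root $T$ arbitrarily, let $s(v)$ be the size of the subtree rooted at $v$, and pick a non-root vertex $v$ minimizing $s(v)$ subject to $s(v)\ge (n-1)/\Delta$ (such a $v$ exists because the heaviest child-subtree of the root already has size $\ge (n-1)/\Delta$). The edge from $v$ to its parent works: the lower side has size $s(v)\ge (n-1)/\Delta$ by choice, and were the upper side smaller than $(n-1)/\Delta$, i.e. $s(v)>n-\frac{n-1}{\Delta}$, then the heaviest child $w$ of $v$ would satisfy $s(w)\ge \frac{s(v)-1}{\Delta-1}\ge \frac{n-1}{\Delta}$ (a one-line manipulation using exactly $s(v)\ge n-\frac{n-1}{\Delta}$), while $s(w)<s(v)$, contradicting minimality of $v$. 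Either way, I expect the only real obstacle is this bookkeeping — guaranteeing that \emph{both} sides are simultaneously at least $(n-1)/\Delta$; extracting a single large side is immediate from the pigeonhole principle applied to the children of any vertex.
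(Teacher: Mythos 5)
Your main argument is the paper's: remove a centroid $c$ and cut the edge into its heaviest branch, then check that the complementary side is also large enough. The one place you diverge is a point in your favor — the paper asserts that the centroid $v$ has every component of $T-v$ of size at most $\frac{|V(T)|-1}{2}$, which is false whenever $|V(T)|$ is even (a path on four vertices already violates it: every vertex deletion leaves a component of size $2>\tfrac{3}{2}$). The correct centroid guarantee is $\lfloor |V(T)|/2\rfloor$, which you use, and your algebra $n-|C_1|\ge n/2\ge (n-1)/\Delta$ for $\Delta\ge 2$ then closes the argument just as the paper intended; so your write-up is both the same approach and the repaired version of it. Your second, centroid-free argument (minimizing subtree size $s(v)$ subject to $s(v)\ge (n-1)/\Delta$ and descending into the heaviest child to derive a contradiction) is also correct and self-contained — it trades the centroid lemma for a short extremal descent, which is a nice alternative if one wants to avoid citing the centroid fact, though it does not buy anything beyond the claim as stated.
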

	\begin{proof}
	    If $\Delta(T) = 1$ the statement is trivial since then $|V(T)| = 2$, hence we assume $\Delta(T) \ge 2$.
    There exists a vertex $v \in V(T)$ such that each connected component of $T - v$ is of size at most $\frac{|V(T)| - 1}{2}$.
    Consider the largest component, its size is also at least $\frac{|V(T)| - 1}{\Delta(T)}$  since there are at most $\Delta(T)$ components.
    The edge from the largest component to $v$ is the desired edge, since the size of the remaining part is at least $\frac{|V(T)| + 1}{2} \ge \frac{|V(T)| - 1}{\Delta(T)}$, as $\Delta(T) \ge 2$.
	\end{proof}

	Note that $$\frac{|V(T)|-1}{\Delta(T)}>\frac{\delta(G)}{k^2}\ge k^{15}>|S|+k.$$
	That is, by \Cref{claim:separable} $T$ is \separable{$(|S|+k)$}.
	Therefore $G$ contains $T$ by \Cref{lemma:separator_and_separable_tree}.
	In what follows we assume that $\Delta(T)\ge k^2$.
	
	\medskip\noindent
	\textsl{$G$ has an escape vertex.}
	If $G$ has a $4k^{13}$-escape vertex, then \Cref{lemma:escape_vertex_and_high_degree_t} can be applied to $G$ and $T$, since $4k^{13}> 2k^2\cdot \diam(T)$.
	Then $G$ contains $T$.
	We further assume that $G$ has no $4k^{13}$-escape vertices.
	
	\medskip\noindent
	\textsl{$T$ is \separable{$2k^{14}$}.}
	It is left to consider the case when $T$ is \separable{$2k^{14}$}.
	Since $G$ has no $4k^{13}$-escape vertices and $n\ge \delta(G)+2k^{14}\ge \delta(G)+6k^{13}>(\delta(G)+k)+4k^{13}$,
	then $G$ has vertex separator $S$ of size at most $4k^{13}$.
	Indeed, consider a minimum-degree vertex $v \in V(G)$, $\deg_G(v) = \delta(G)$.
	Since $v$ is not a $4k^{13}$-escape vertex, the maximum size of a matching between $N_G[v]$ and $V(G) \setminus N_G[v]$ is less than $4k^{13}$.
	This means that the corresponding bipartite graph has a vertex cover $S$ of size less than $4k^{13}$.
    Clearly $S$ is a separator since both $N_G[v] \setminus S$ and $V(G) \setminus N_G[v] \setminus S$ are non-empty: the former since we may assume $v \notin S$ ($v$ is not adjacent to any vertex in $V(G) \setminus N_G[v]$), and the latter by $n > \delta(G) + k + 4k^{13}$.

	As $2k^{14}\ge 6k^{13}>4k^{13}+k$,
	apply \Cref{lemma:separator_and_separable_tree} to $G, T$ and $S$ and obtain that $G$ contains $T$.
	The proof is now complete.
\end{proof}

\section{Final proof: Putting it all together}\label{section:mainresult}

This section finalizes the proof of our main result by combining the previous sections' main theorems.
We restate the theorem here.

\maintheorem*

\begin{proof}
	Let $G$ be a non-empty graph and $T$ be a tree on exactly $\delta(G)+k$ vertices. We assume that $k\ge 2$  (if $k\leq 1$, by  
	Chv{\'{a}}tal's Lemma, $T$ is a subgraph of $G$). 
%
%
%
	We also assume that $|V(G)|\ge |V(T)|$, otherwise,  trivially,  $G$ does not contain $T$. 
	
	If $G$ and $T$  satisfy the conditions of \Cref{thm:densest_part} with $p:=15$, then by \Cref{thm:densest_part}, we can identify in time $2^{k^{\Oh(1)}}\cdot\polyn$ whether $T$ is a subgraph of $T$.  In the rest of the proof, we go through all the cases when $G$ and $T$ do not satisfy the conditions of  \Cref{thm:densest_part}.
	
	\medskip\noindent
	\textsl{Case 1: $\delta(G)<k^{3p+1}$}.
	In this case, we use the color coding of Alon et al., see \Cref{col:color_coding_subtree},  to decide whether  $G$ contains $T$.  The algorithm works in time $2^{\Oh(|V(T)|)}\cdot\polyn=2^{\Oh(\delta(G)+k)}\cdot\polyn=2^{k^{\Oh(p)}}\cdot\polyn$.
	
	\medskip\noindent
	\textsl{Case 2: $k < 3$}.
	Since $T$ consists of at least two vertices, its leaf-degree is at least one.
	Hence, the case $k<3$ is equivalent to $\ld(T)\ge k-1$, which we consider in the next case.
	
	\medskip\noindent
	\textsl{Case 3: $\ld(T)\ge k-1$}.	
	In this case, 
	we apply  \Cref{thm:large_leaf_degree} to decide whether $G$ contains  $T$.
	The running time of this algorithm is $2^{\Oh(k^2)}\cdot\polyn$.
	
	\medskip\noindent
	\textsl{Case 4: $|V(G)|\le (1+\frac{1}{4k})\cdot \delta(G)$}.
	We also assume that previous cases are not applicable. Because $\delta(G)\ge k^{46}>2k^{12}$ and $\ld(T)<k$, by 
  \Cref{lem:dense},   $G$  contains $T$.

	\medskip\noindent
	\textsl{Case 5: $\diam(T)\ge 8k^6\cdot\log\delta(G)$}.
	As before, we assume that the previous cases are not applicable.
	Then $k\ge 3$, $\delta(G)>k^{16}$, and $|V(G)|/\delta(G)\ge 1+\frac{1}{4k}=1+\frac{4}{16k}>1+\frac{4}{k^4}$.
	Then by \Cref{thm:large_diameter_many_vertices},  $G$ contains $T$.
	
	\medskip\noindent
	\textsl{Case 6: There is a $k^p$-escape vertex in $G$}.
	We want to apply \Cref{thm:medium_diameter_escape_vertices}. We assume 
	that the conditions of the prior cases do not apply.
	Then $|V(G)|>(1+\frac{1}{4k})\cdot \delta(G)\ge\delta(G)+\frac{k^{46}}{4k}>\delta(G)+2k^{14}.$
	As $k^p=k^{15}>4k^{13}$, we know that $G$ has a $4k^{13}$-escape vertex.
	Conditions $\diam(T)\le 8k^6\cdot\log\delta(G)$, $\ld(T)<k$ and $\delta(G)\ge k^{17}$  also hold because the previous cases are not applicable.
	Then by \Cref{thm:medium_diameter_escape_vertices}, in this case, $G$ contains $T$ as a subgraph.
	
	\medskip\noindent
	\textsl{Case 7: $T$ is \separable{$k^p$}}. In this case, we again use \Cref{thm:medium_diameter_escape_vertices} but with the different condition.
	Since $k^p=k^{15}>2k^{14}$, we have that $T$ is \separable{$2k^{14}$}.
	Then by  \Cref{thm:medium_diameter_escape_vertices},  $G$ contains $T$.
	
	We have shown that if $G$ and $T$ do not meet the conditions of  \Cref{thm:densest_part}, then we either able to resolve the subtree isomorphism in time \classFPT in $k$ or use the established combinatorial theorems to prove that $G$ should contain $T$. 
	This completes the proof.
\end{proof}

\section{Why the guarantee cannot be improved}
 In this section, we prove \Cref{thm:lower-bound}. The theorem shows that parameterization 
 of  \probSTI above $(1+\varepsilon)\delta(G)$ for any $\varepsilon>0$ makes the problem 
 
 the problem 
 
 For any $\varepsilon>0$, \probSTI is \classNP-complete when restricted to instances $(G,T)$  
with $|V(T)|\leq (1+\varepsilon)\delta(G)$.
 We restate it here.
 
  \lowerboundtheorem*

\begin{proof}
We reduce from the \textsc{$3$-Partition} problem. In this problem, we are given a set $A$ of size $m=3n$, a ``size'' function $s\colon A\rightarrow \mathbb{Z}_{\geq 0}$, and an integer $B>0$ such that $\frac{1}{4}B< s(a)<\frac{1}{2}B$ for every $a\in A$ and $\sum_{a\in A}s(a)=nB$, and the task is to decide whether there is a partition of $A$ into $n$ disjoint sets $S_1,\ldots,S_n$ such that for every $i\in[n]$, $\sum_{a\in S_i}s(a)=B$. This problem is well-known to be \classNP-complete in the strong sense~\cite{GareyJ79}. We remind that because of the constraint  $\frac{1}{4}B< s(a)<\frac{1}{2}B$, each set $S_i$ should contain three elements of $A$ whenever a partition of $A$ with the required property exists.  

\begin{figure}[ht]
\centering
\scalebox{0.7}{
\input{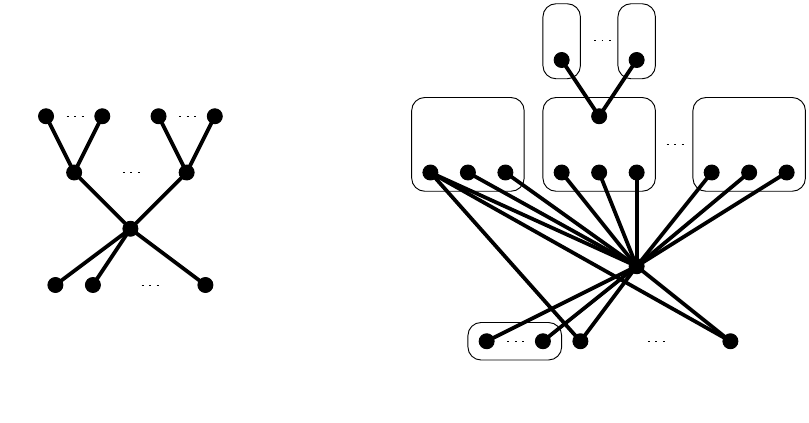_t}}
\caption{Construction of $T$ (a) and $G$ (b). In (b), the edges between the vertices $y_i^{(h)}$ for $i\in [n]$ and $h\in[3]$ are not shown. Similarly, the edges between $z_1,\ldots,z_{\Delta-m}$ are not shown. Also, we do not show the edges between $y_i^{(h)}$ and $z_j$ for  $i\in [n]$, $h\in[3]$, and $j\in[\Delta-m]$, except the edges incident to $y_1^{(1)}$.}
\label{fig:lb}
\end{figure}

Consider an instance of \textsc{$3$-Partition} with $A=\{a_1,\ldots,a_m\}$ and set $\ell=\sum_{i=1}^ms(a_i)$. We define $\delta=\max\{\lceil\frac{\ell+3}{\varepsilon}\rceil,\ell+4n-2\}$ and $\Delta=\delta+2$. We construct the following tree $T$ (see \Cref{fig:lb} (a)).
\begin{itemize}
\item For each $i\in[m]$, construct a vertex $v_i$, a set $R_i$ of $s(a_i)$ vertices, and make $v_i$ adjacent to every vertex of $R_i$.
\item Construct a vertex $r$ and make it adjacent to the vertices $v_1,\ldots,v_m$.
\item Construct $\Delta-m$ vertices $u_1,\ldots,u_{\Delta-m}$ and make them adjacent to $r$.
\end{itemize}
Next, we construct the graph $G$ as follows  (see \Cref{fig:lb} (b)). 
\begin{itemize}
\item For every $i\in[n]$, construct a set $L_i$ of $B+3$ vertices and make it a clique, and then select three vertices $y_i^{(1)},y_i^{(2)},y_i^{(3)}\in L_i$.
\item Construct $\ell(\delta-B-2)$ copies $W_{w,i}$ of the complete graph $K_{\delta+1}$ for all $w\in \bigcup_{i=1}^n(L_i\setminus\{y_i^{(1)},y_i^{(2)},y_i^{(3)}\})$ and $i\in[\delta-B-2]$. 
\item For each $w\in \bigcup_{i=1}^n(L_i\setminus\{y_i^{(1)},y_i^{(2)},y_i^{(3)}\})$ and $i\in[\delta-B-2]$,
make $w$ adjacent to one vertex  of $W_{w,i}$.
\item For all $i,j\in[n]$ such that $i<j$, make $y_i^{(h)}$ adjacent to $y_j^{(1)},y_j^{(2)},y_j^{(3)}$ for each $h\in[3]$.
\item Construct  a vertex $x$ and make it adjacent to $y_i^{(1)},y_i^{(2)},y_i^{(3)}$ for all $i\in[n]$.
\item Construct $\Delta-m$ vertices $z_1,\ldots,z_{\Delta-m}$, make them pairwise adjacent and adjacent to $x$.
\item Find $m$ pairwise disjoint sets of vertices $Z_i^{(h)}\subseteq \{z_1,\ldots,z_{\Delta-m}\}$ of size $B+1$ for $i\in[n]$ and $h\in[3]$. For every $i\in[n]$ and $h\in[3]$, make $y_i^{(h)}$ adjacent to the vertices of $\{z_1,\ldots,z_{\Delta-m}\}\setminus Z_i^{(h)}$.
\end{itemize}
Notice that because $\Delta-m\geq \ell+n=(B+1)n$, disjoint sets $Z_i^{(h)}\subseteq \{z_1,\ldots,z_{\Delta-m}\}$ exist. Observe that for every $w\in \bigcup_{i=1}^n(L_i\setminus\{y_i^{(1)},y_i^{(2)},y_i^{(3)}\})$, $\deg_G(w)=\delta$,
for every vertex $v$ of each $W_{w,i}$, $\delta\leq \deg_G(v)\leq \delta+1$, for each $y_i^{(h)}$, $\deg_G(y_i^{(h)})=B+\Delta-(B+1)=\Delta-1=\delta+1$, for each $z_i$, $\Delta-1\leq \deg_G(z_i)\leq \Delta$, and $\deg_G(x)=\Delta$.
In particular, $\delta(G)=\delta$ and the maximum degree $\Delta(G)=\Delta$.

We show that $G$ contains $T$ as a subgraph if and only if  there is a partition of $A$ into $n$ disjoint sets $S_1,\ldots,S_n$ such that for every $i\in[n]$, $\sum_{a\in S_i}s(a)=B$.

Assume that there is a partition of $A$ into $n$ disjoint sets $S_1,\ldots,S_n$ such that for every $i\in[n]$, $\sum_{a\in S_i}s(a)=B$. We construct the isomorphism $\sigma$ mapping $T$ into a subgraph of $G$ as follows.
\begin{itemize}
\item Set $\sigma(r)=x$.
\item Set $\sigma(u_i)=z_i$ for all $i\in[\Delta-m]$.
\end{itemize}
Now we consider each $h\in[n]$. Assume that $S_h=\{a_i,a_j,a_k\}$. 
\begin{itemize}
\item We set $\sigma(v_i)=y_h^{(1)}$,  $\sigma(v_j)=y_h^{(2)}$, and  $\sigma(v_k)=y_h^{(3)}$.
\item Finally, we map the vertices of $R_i\cup R_j\cup R_k$ into distinct vertices of $L_i\setminus\{y_i^{(1)},y_i^{(2)},y_i^{(3)}\}$ using the fact that 
$|R_i|+|R_j|+|R_k|=\sum_{a\in S_h}s(a)=B=|L_i\setminus\{y_i^{(1)},y_i^{(2)},y_i^{(3)}\}|$.
\end{itemize}
The construction of $T$ and $G$ implies that $\sigma$ is a subgraph isomorphism of $T$ to $G$.

For the opposite direction, assume $T$ is a subgraph of $G$, that is, there is a subgraph isomorphism  $\sigma$ mapping 
$T$ into a subgraph of $G$. We have that $\deg_T(r)=\Delta$. From the other side, $\deg_G(x)=\Delta$ and only other vertices of degree $\Delta$ are the vertices of  $\{z_1,\ldots,z_{\Delta-m}\}\setminus \bigcup_{i=1}^n\bigcup_{h=1}^3Z_i^{(h)}$. However, the latter vertices are true twins with $x$. Therefore, we can assume without loss of generality that $\sigma(r)=x$. Because $\deg_T(r)=\deg_G(x)=\Delta$, $\sigma$ bijectively maps $N_{T}(r)$ to $N_G(x)$. If the vertex $v_i$ for some $i\in[m]$ is mapped to some $z_j$ for $j\in[\Delta-m]$, then the vertices of $R_i$ should be mapped to vertices of $N_G(x)$ because $N_G(z_j)\subseteq N_G[x]$. However, this is impossible because 
$\sigma^{-1}(N_G(x))=N_T(r)$. This implies that $\sigma(\{u_1,\ldots,u_{\Delta-m}\})=\{z_1,\ldots,z_{\Delta-m}\}$. Thus,  $\sigma(\{v_1,\ldots,v_m\})=\bigcup_{h=1}^n\{y_h^{(1)},y_h^{(2)},y_h^{(3)}\}$ and the sets 
$S_h'=\sigma^{-1}(\{y_h^{(1)},y_h^{(2)},y_h^{(3)}\})$ for $h\in[n]$ form a partition of $\{v_1,\ldots,v_m\}$.  For each $h\in[n]$, we define $S_h=\{a_i\in A\mid v_i\in S_h'\}$. The sets $S_1,\ldots,S_n$ form a partition of $A$. We claim that $\sum_{a\in S_h}s(a)\leq B$ for each $h\in[n]$. To see  this, consider some $S_h$ and assume that $\sigma^{-1}(\{y_h^{(1)},y_h^{(2)},y_h^{(3)}\})=\{v_i,v_j,v_k\}$ for distinct $i,j,k\in[m]$. We have that the vertices of $R_i$, $R_j$, and $R_k$ are mapped by $\sigma$ to distinct ertices of $L_h\setminus \{y_h^{(1)},y_h^{(2)},y_h^{(3)}\}$. Because $|L_h\setminus \{y_h^{(1)},y_h^{(2)},y_h^{(3)}\}|=B$, we have that
$s(a_i)+s(a_j)+s(a_k)=|R_i|+|R_j|+|R_k|\leq B$. This implies that $\sum_{a\in S_h}s(a)\leq B$. Since the inequality holds for every $h\in[n]$ and $nB=\sum_{a\in A}s(a)$, we obtain that $\sum_{a\in S_h}s(a)= B$ for every $h\in[n]$. 

To complete the proof, notice that $|V(T)|=1+m+\sum_{i=1}^ms(a_i)+\Delta-m=\delta+3+\sum_{i=1}^ms(a_i)=\delta+(3+\ell)$. Because $\delta\geq \frac{\ell+3}{\varepsilon}$, we have that $|V(T)|\leq (1+\varepsilon)\delta(G)$.
\end{proof}

We remark that \Cref{thm:lower-bound} is proved for constant $\varepsilon$ but the proof works even if $\varepsilon=\frac{1}{n^c}$ for any $c>0$ where $n$ is the number of vertices of the input graph. 



\section{Conclusion}\label{sec:concl}

In our exploration of algorithmic extensions of classical combinatorial theorems, we have demonstrated that it is possible to determine, in time $2^{k^{\Oh(1)}} \cdot n^{\Oh(1)}$, whether a graph $G$ contains a tree $T$ with at most $\delta(G)+k$ vertices as a subgraph. Our algorithm is a one-sided error Monte Carlo algorithm. This naturally raises two questions. First, can we develop a deterministic algorithm for this problem? Second, is there room for improvement in the running time? Can the problem be solved in time $2^{\Oh(k\log k)} \cdot n^{\Oh(1)}$ 
 or even in time $2^{\Oh(k)} \cdot n^{\Oh(1)}$?

Another question related to our work. Brandt, in his work~\cite{Brandt94}, extended Chvátal's Lemma  for forests.
\begin{proposition}[\cite{Brandt94}]\label{prop:forest}
Let $G$ be a graph, and $F$ be a forest such that $|V(F)| \leq |V(G)|$ and $|E(F)| \leq \delta(G)$. Then, $G$ contains $F$ as a subgraph.
\end{proposition}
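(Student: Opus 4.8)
\textbf{Proof proposal for Proposition~\ref{prop:forest}.} The plan is to reduce the forest case to Chv\'{a}tal's Lemma (\Cref{chvatal-theorem}, or rather its extension \Cref{prop:chvatal-generalized}) by processing the connected components of $F$ one at a time, maintaining a partial subgraph isomorphism and always having enough room to embed the next component. Write $F$ as a disjoint union of trees $F_1, F_2, \dots, F_p$, ordered arbitrarily. We build an injective subgraph isomorphism $\sigma$ component by component: having embedded $F_1, \dots, F_{i-1}$ via $\sigma$, we want to extend $\sigma$ to $F_i$ using only vertices of $G$ not yet in $\Ima\sigma$.

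First I would handle the trivial case where $F_i$ is a single isolated vertex: since $|V(F)| \le |V(G)|$, at the moment we process $F_i$ we have used at most $|V(F)| - 1 < |V(G)|$ vertices of $G$, so there is a free vertex to map it to. For a component $F_i$ with at least one edge, I would pick an arbitrary vertex $v_i \in V(F_i)$ and an arbitrary free vertex $g \in V(G) \setminus \Ima\sigma$ as the image of $v_i$, then try to extend greedily in the manner of the proof of \Cref{prop:chvatal-generalized}: repeatedly take an edge $xy \in E(F_i)$ with $x$ already mapped and $y$ not, and map $y$ to a free neighbor of $\sigma(x)$ in $G$. The key point is that such a free neighbor always exists. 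When we are about to map the $j$-th vertex of $F_i$ (counting $v_i$ as the first), the total number of vertices of $G$ occupied by $\sigma$ is exactly $(\text{number of vertices embedded so far})$; I would argue that the number of neighbors of $\sigma(x)$ already occupied is strictly less than $\delta(G)$, using that the total number of \emph{edges} embedded so far is at most $|E(F)| - 1 < \delta(G)$ together with a careful count.

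The crux is the edge-counting bound, which is where the hypothesis $|E(F)| \le \delta(G)$ (rather than $|V(F)| \le \delta(G)+1$) gets used. Here is the counting I expect to carry out. When extending within component $F_i$ from $x$ to a new leaf-direction vertex $y$, the occupied neighbors of $\sigma(x)$ fall into two groups: (a) images of vertices of $F_i$ adjacent to $x$ already embedded, and (b) images of vertices from other components that happen to be adjacent to $\sigma(x)$ in $G$ — but there are \emph{no} such forced adjacencies, so actually every occupied neighbor of $\sigma(x)$ is the image of a vertex adjacent to $x$ in $F_i$, hence corresponds to a distinct edge of $F_i$ incident to $x$. Wait — this is too strong; $\sigma(x)$ may be adjacent in $G$ to images of vertices not adjacent to $x$ in $F$. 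So I would instead bound bluntly: the number of occupied neighbors of $\sigma(x)$ is at most $|\Ima\sigma|$, which is not small enough. The right fix is to be smarter about which occupied vertices can be neighbors. Let me reconsider: the correct approach, and the one I would pursue, is to observe that $\deg_G(\sigma(x)) \ge \delta(G) \ge |E(F)|$, and that $\sigma(x)$ has at most $|E(F)|-1$ occupied neighbors — because the vertices of $\Ima\sigma$ that are \emph{required} to be neighbors of $\sigma(x)$ are exactly the images of $F$-neighbors of $x$, of which there are $\deg_F(x) - (\text{unembedded } F\text{-neighbors of }x) \le \deg_F(x) - 1$; but other occupied vertices are free to be non-neighbors, and we only need \emph{one} free neighbor, so we need $\deg_G(\sigma(x))$ to exceed the number of occupied neighbors, not the number of occupied vertices.

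This is the genuine obstacle: a naive greedy embedding does not control how many occupied vertices are incidentally adjacent to $\sigma(x)$. The clean way around it, which I would adopt, is the standard trick used in Chv\'{a}tal/Brandt-type arguments: embed $F$ by adding its \emph{edges} one at a time in an order that extends a forest at each step (e.g., process each component by BFS/DFS), and when adding edge $xy$ with $x$ embedded and $y$ not, note that we only forbid $y$'s image from lying in $\Ima\sigma$; since $|\Ima\sigma| \le |E(F)| \le \delta(G) \le \deg_G(\sigma(x))$ at the moment we add the $(|E(F)|)$-th edge — and strictly less before that — there is always a neighbor of $\sigma(x)$ outside $\Ima\sigma$. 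Concretely: just before embedding the vertex $y$ that completes the $t$-th edge, exactly $t$ vertices of $G$ are occupied if every component embedded so far is connected-so-far (which it is, by the BFS/DFS order and handling isolated vertices separately), and $t \le |E(F)| \le \delta(G) \le \deg_G(\sigma(x))$, with strict inequality $t < |E(F)|$ except possibly at the very last edge where $t = |E(F)| \le \delta(G) = \deg_G(\sigma(x))$ still leaves $\deg_G(\sigma(x)) - t \ge 0$; to get a \emph{strict} surplus at the last step I would note $|\Ima\sigma| = t \le |E(F)|$ but $\sigma(x) \notin$ the set of candidates for $y$'s image while $\sigma(x)$'s own slot is counted in $t$, so $\sigma(x)$ has at most $t-1 < \delta(G) \le \deg_G(\sigma(x))$ occupied neighbors. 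Finally, after all edges are added, any remaining isolated-vertex components are placed into the at least $|V(G)| - |V(F)| + (\text{isolated vertices}) \ge$ number of isolated vertices still needed, free vertices, using $|V(F)| \le |V(G)|$. This completes the construction of $\sigma$ and the proof.
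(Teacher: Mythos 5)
There is no in-paper proof to compare against: the paper quotes \Cref{prop:forest} from Brandt~\cite{Brandt94} in the Conclusion without an argument. Judged on its own, your proposal has a genuine gap in the vertex-counting step. You claim that ``just before embedding the vertex $y$ that completes the $t$-th edge, exactly $t$ vertices of $G$ are occupied,'' but this is only true when $F$ is a tree. When $F$ has several non-trivial components, each started component contributes one more occupied vertex (its root) than it contributes embedded edges, so the correct count is $t - 1 + c$, where $c$ is the number of components started so far. At the final edge this reaches $|E(F)| + p - 1$, where $p$ is the number of non-singleton components of $F$; this is bounded only by $|V(F)| - 1$, not by $\delta(G)$, so the inequality $|\Ima\sigma| \le \delta(G)$ that powers your extension step simply fails in general.

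This is not a fixable bookkeeping slip: with the freedom you allow (arbitrary component order, arbitrary root placement), the greedy really can get stuck. Take $d$ even and let $G$ have vertex set $\{v\} \cup \{u_1, \dots, u_d\} \cup \{w_1, \dots, w_d\}$, where $v$ is adjacent to every $u_i$, the $u_i$ form a clique, and each $w_j$ is adjacent to every $u_i$ and to nothing else; then $\delta(G) = d$ and $|V(G)| = 2d+1$. Let $F$ be a star $K_{1, d/2}$ together with $d/2$ disjoint copies of $K_2$, so $|E(F)| = d = \delta(G)$ and $|V(F)| = 3d/2 + 1 \le |V(G)|$. If the greedy processes the $K_2$'s first and places all $d$ of their endpoints on $\{u_1, \dots, u_d\}$ (a legal greedy run, since the $u_i$ are pairwise adjacent), then the star's root must go to $v$ or to some $w_j$, every neighbor of which is already occupied, and no leaf can be placed --- even though $G$ does contain $F$ (root the star at $u_1$ with leaves $w_1, \dots, w_{d/2}$, and place the $K_2$'s on disjoint $u_i$--$w_j$ pairs). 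Brandt's statement therefore needs a more global argument that controls the occupied \emph{neighbors} of the extension vertex, or that restricts where roots of later components may land; your middle paragraph in fact identifies this as ``the genuine obstacle,'' but the proposed fix does not close it.
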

Is the \textsc{Forest Containment} problem  (for a given graph $G$ and forest $F$, to decide whether $G$ contains $F$) FPT when parameterized by $k=|E(F)|-\delta(G)$?

%
%
%






\bibliographystyle{alpha}
\bibliography{ref}

\end{document}